\documentclass{article}
\usepackage{amsmath,amssymb,amsfonts,amsthm}
\usepackage[all]{xy}
\usepackage[breaklinks=true]{hyperref}
\usepackage{graphics}
\usepackage{color}

\usepackage{rotating}
\usepackage{hyperref}



\begin{document}

	\theoremstyle{plain}
\newtheorem{theorem}{Theorem}[section]
\newtheorem{proposition}[theorem]{Proposition}
\newtheorem{corollary}[theorem]{Corollary}

\newtheorem{observation}[theorem]{Observation}

\newtheorem{theoremapp}{Theorem A.\!\!}
\newtheorem{propositionapp}{Proposition A.\!\!}

\newtheorem{corollaryapp}{Corollary A.\!\!}
\newtheorem{observationapp}{Observation A.\!\!}
\newtheorem{lemmaapp}{Lemma A.\!\!}

\theoremstyle{definition}
\newtheorem{definition}[theorem]{Definition}
\newtheorem{dfn}[theorem]{definition}
\newtheorem{remark}[theorem]{Remark}
\newtheorem{example}[theorem]{Example}
\newtheorem{examples}[theorem]{Examples}

\newtheorem{definitionapp}{Definition A.\!\!}
\newtheorem{dfnapp}{definitionapp}
\newtheorem{remarkapp}{Remark A.\!\!}
\newtheorem{exampleapp}{Example A.\!\!}

\newdir^{ (}{{}*!/-5pt/@^{(}}

\renewcommand{\theequation}{\thesubsection.\arabic{equation}}

\def\proof {{\it Proof.}\hspace{7pt}}

\def\endofproof{\hfill{$\square$}\\}
\def\dfn{Definition}
\def\rmk{Remark}
\def\Cech{{{\v C}ech}}

\def\Loo{{$L_\infty$}}

\pagestyle{myheadings}

\title{A higher Chern-Weil derivation of AKSZ $\sigma$-models}
\author{Domenico Fiorenza, Christopher L.\ Rogers, and Urs Schreiber}

\maketitle	

\begin{abstract}
Chern-Weil theory provides for each invariant
polynomial on a Lie algebra $\mathfrak{g}$ a map from
$\mathfrak{g}$-connections to differential cocycles whose
volume holonomy is the corresponding 
Chern-Simons theory action functional.
Kotov and Strobl have observed that this naturally generalizes
from Lie algebras to dg-manifolds and dg-bundles and that the Chern-Simons
action functional associated this way to an $n$-symplectic manifold
is the action functional of the AKSZ $\sigma$-model whose
target space is the given $n$-symplectic manifold (examples of this are the Poisson
$\sigma$-model or the Courant $\sigma$-model, including ordinary Chern-Simons
theory, or higher dimensional abelian Chern-Simons theory). Here we 
show how, within the framework of the higher Chern-Weil theory in smooth
$\infty$-groupoids, this result can be naturally recovered and enhanced to a
morphism of higher stacks, the same way as ordinary Chern-Simons theory is
enhanced to a morphism from the stack of principal $G$-bundles with
connections to the 3-stack of line 3-bundles with connections.
\end{abstract}

\newpage

\tableofcontents

\newpage

\section{Introduction.}

The class of topological field theories
known as \emph{AKSZ $\sigma$-models} \cite{AKSZ} 
contains in dimension 3 ordinary Chern-Simons theory 
(see \cite{FreedCS} for a comprehensive review)
as well as its Lie algebroid generalization 
(the \emph{Courant $\sigma$-model} \cite{Ikeda02, HoPa04,RoytenbergAKSZ}), and
in dimension 2 the Poisson $\sigma$-model \cite{Ikeda93,ScSt94}
(see \cite{ScSt94b,CattaneoFelder} for a review). It is 
therefore clear that the AKSZ construction 
is \emph{some} sort of generalized Chern-Simons theory.
That this is indeed true, has been first formalized and rigorously established in \cite{KoSt07},
as a particular case of a general Chern-Weil-type construction of
characteristic classes for Q-bundles (see also \cite{KoSt10} for a review). Here we show how, within the
framework of the $\infty$-Chern-Weil homomorphism of \cite{FSSI, survey},
this result can be naturally recovered and enhanced to a morphism of higher
stacks, the same way as ordinary Chern-Simons theory is enhanced to a
morphism 
from the stack of principal $G$-bundles with connections to the 3-stack of
line 3-bundles with connections.

%

Our discussion proceeds from the observation that 
the standard Chern-Simons action
functional has a systematic origin in 
Chern-Weil theory (see for instance \cite{GHV} for 
a classical textbook treatment and \cite{HopkinsSinger} for 
the refinement to differential cohomology that we need here):
 
The refined Chern-Weil homomorphism assigns to any invariant 
polynomial $\langle -\rangle : \mathfrak{g}^{\otimes_k} \to \mathbb{R}$ 
on a Lie algebra $\mathfrak{g}$ 
of compact type a map that sends $\mathfrak{g}$-connections $\nabla$ 
on a smooth manifold $X$ to  cocycles 
$[\hat {\mathbf{p}}_{\langle-\rangle}(\nabla)] \in
H^{2k}_{\mathrm{diff}}(X)$
in \emph{ordinary differential cohomology}.  
These differential cocycles 
refine the \emph{curvature characteristic class} 
$[\langle F_\nabla \rangle] \in H_{dR}^{2k}(X)$ in de Rham cohomology
to a fully fledged \emph{line $(2k-1)$-bundle with connection}, also known as
a \emph{bundle $(2k-2)$-gerbe with connection}. 
And just as an ordinary line bundle (a ``line 1-bundle'') with connection
assigns holonomy to curves, so a line $n$-bundle with connection
assigns holonomy 
$\mathrm{hol}_{\hat {\mathbf{p}}}(\Sigma)$ 
to $n$-dimensional trajectories $\Sigma \to X$. 
For the special case where $\langle -\rangle$ 
is the Killing form polynomial 
and $X = \Sigma$ with $\dim\Sigma = 3$ one finds that this 
volume holonomy map
$\nabla \mapsto \mathrm{hol}_{\hat {\mathrm{p}}_{\langle -\rangle}(\nabla)}(\Sigma)$
is precisely the standard Chern-Simons action functional.
Similarly, for $\langle -\rangle$ any higher invariant polynomial
this holonomy action functional has as Lagrangian the 
corresponding higher Chern-Simons form. 
In summary, this means that Chern-Simons-type action functionals
on Lie algebra-valued connections are the images of the 
refined Chern-Weil homomorphism.
\par
In previous work \cite{survey,FSSI} a generalization of the 
Chern-Weil homomorphism to \emph{higher} (``derived'')
differential geometry has been considered. In this framework,
smooth manifolds are generalized 
first to orbifolds, then to general Lie groupoids, 
to Lie 2-groupoids and finally to smooth $\infty$-groupoids 
(smooth $\infty$-stacks), while Lie algebras are generalized to
Lie $2$-algebras etc., up to $L_\infty$-algebras 
and more generally to Lie $m$-algebroids and 
finally to $L_\infty$-algebroids.\footnote{See \cite{KoSt07} for a treatment of
the higher Chern-Weil homomorphism from the equivalent point of view of
Q-manifolds.} For $\mathfrak{a}$ 
any $L_\infty$-algebroid, one has a natural notion of $\mathfrak{a}$-valued 
$\infty$-connections on $\exp(\mathfrak{a})$-principal 
smooth $\infty$-bundles
(where $\exp(\mathfrak{a})$ is a smooth $\infty$-groupoid obtained by
Lie integration from $\mathfrak{a}$).
By analyzing the abstractly defined higher Chern-Weil homomorphism
in this context one finds a direct higher analog of the above
situation: there is a notion of invariant polynomials $\langle-\rangle$ on 
an $L_\infty$-algebroid $\mathfrak{a}$ 
and these induce maps from $\mathfrak{a}$-valued $\infty$-connections
to line $n$-bundles with connections as before \cite{SSSI, FSSI}. 
The corresponding class of action functionals we call 
\emph{$\infty$-Chern-Simons theory}. 
\par
This construction drastically simplifies when one restricts attention to
\emph{trivial} $\infty$-bundles with (nontrivial) 
$\mathfrak{a}$-connections. Over a smooth manifold $\Sigma$
these are simply given by dg-algebra homomorphisms 
\[
  A : \mathrm{W}(\mathfrak{a}) \to \Omega^\bullet(\Sigma)
  \,,
\]
where $\mathrm{W}(\mathfrak{a})$ is the \emph{Weil algebra} of the 
$L_\infty$-algebroid $\mathfrak{a}$ \cite{KoSt07,SSSI}, 
and $\Omega^\bullet(\Sigma)$ is the de Rham algebra of $\Sigma$ 
(which is indeed the Weil algebra of $\Sigma$ thought of as an 
$L_\infty$-algebroid concentrated in degree 0). 
Then for $\langle-\rangle \in \mathrm{W}(\mathfrak{a})$ an invariant
polynomial, the corresponding $\infty$-Chern-Weil homomorphism
is presented by a choice of ``Chern-Simons element''
$\mathrm{cs} \in \mathrm{W}(\mathfrak{a})$, which exhibits the
\emph{transgression} of $\langle-\rangle$ to an $L_\infty$-cocycle
(the higher analog of a cocycle in Lie algebra cohomology):
the dg-morphism $A$ naturally maps the Chern-Simons element 
$\mathrm{cs}$ of $A$ to a differential form $\mathrm{cs}(A) \in \Omega^\bullet(\Sigma)$  and its integral is the corresponding 
$\infty$-Chern-Simons action functional $S_{\langle -\rangle}$
$$
  S_{\langle-\rangle}
  :
  A \mapsto \mathrm{hol}_{\hat {\mathbf{p}}_{\langle-\rangle}(A)}(\Sigma)
    = 
    \int_\Sigma \mathrm{cs}_{\langle-\rangle}(A)
  \,.
$$
\par
Even though trivial $\infty$-bundles with $\mathfrak{a}$-connections
are a very particular subcase of the general $\infty$-Chern-Weil
theory, they are rich enough to contain AKSZ theory. Namely, here
we show that a symplectic dg-manifold of grade $n$
-- which is  the geometrical datum of 
the target space defining an AKSZ $\sigma$-model --
is naturally equivalent to an $L_\infty$-algebroid 
$\mathfrak{P}$ endowed with a
quadratic and non-degenerate invariant polynomial $\omega$
of grade $n$. 
Moreover, under this identification the canonical Hamiltonian
$\pi$ on the symplectic target dg-manifold is identified 
as an $L_\infty$-cocycle on $\mathfrak{P}$. Finally, the
invariant polynomial $\omega$ is naturally in transgression with the
cocycle $\pi$ via a Chern-Simons element $\mathrm{cs}_\omega$ that turns 
out to be the Lagrangian of the AKSZ $\sigma$-model \cite{KoSt07}:
$$
  \int_\Sigma L_{\mathrm{AKSZ}}(A) 
   = 
  \int_\Sigma \mathrm{cs}_{\omega}(A)
  \,.
$$
(An explicit description of $L_{\mathrm{AKSZ}}$ is given below
in def. \ref{TheAKSZAction})

In summary this means that we find the following
dictionary of concepts:\\
\begin{center}
\begin{tabular}{c|c|c}
  {\bf Chern-Weil theory} && {\bf AKSZ theory}
  \\
  \hline
  &&
  \\
  cocycle & $\pi$ &  Hamiltonian
  \\
   &&
  \\
  transgression element & $\mathrm{cs}$ & Lagrangian
  \\
    &&
  \\
  invariant polynomial & $\omega$ & symplectic structure
  \\
  &&
  \\
  \hline
\end{tabular}
\end{center}
More precisely, we (explain and then) prove here the following theorem:
\begin{theorem}
  \label{AKSZIsCW}
  For $(\mathfrak{P}, \omega)$ an $L_\infty$-algebroid with
  a quadratic non-degenerate invariant polynomial, the
  corresponding $\infty$-Chern-Weil homomorphism
  $$
    \nabla \mapsto \mathrm{hol}_{\hat {\mathbf{p}}_\omega(\nabla)}(\Sigma)
  $$
  sends $\mathfrak{P}$-valued $\infty$-connections $\nabla$
  to their corresponding exponentiated AKSZ action:
  $$
 \mathrm{hol}_{\hat {\mathbf{p}}_\omega(\nabla)}(\Sigma)= \int_\Sigma L_{\mathrm{AKSZ}}(\nabla)
    \,.
  $$
\end{theorem}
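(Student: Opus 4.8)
The plan is to reduce the statement to the two facts anticipated in the introduction and then chain them. On the one hand, the $\infty$-Chern-Weil homomorphism restricted to trivial $\infty$-bundles with connection is presented by a transgressing Chern-Simons element, so that the volume holonomy is computed as $\mathrm{hol}_{\hat{\mathbf{p}}_\omega(\nabla)}(\Sigma) = \int_\Sigma \mathrm{cs}_\omega(\nabla)$. On the other hand, for the quadratic non-degenerate invariant polynomial $\omega$ this particular Chern-Simons element coincides with the AKSZ Lagrangian, $\mathrm{cs}_\omega(\nabla) = L_{\mathrm{AKSZ}}(\nabla)$. Composing the two equalities yields the theorem, so the substantive work lies in making the second identification explicit and verifying it matches def.\ \ref{TheAKSZAction} on the nose.

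First I would recall the presentation of the $\infty$-Chern-Weil homomorphism on trivial bundles. A $\mathfrak{P}$-valued $\infty$-connection $\nabla$ over $\Sigma$ is a dg-algebra homomorphism $A : \mathrm{W}(\mathfrak{P}) \to \Omega^\bullet(\Sigma)$ out of the Weil algebra. An invariant polynomial $\omega \in \mathrm{W}(\mathfrak{P})$ in transgression with an $L_\infty$-cocycle $\pi$ determines, by the very definition of transgression, a Chern-Simons element $\mathrm{cs}_\omega \in \mathrm{W}(\mathfrak{P})$ interpolating between the two; pulling this element back along $A$ and integrating over $\Sigma$ is exactly the holonomy of the differential cocycle $\hat{\mathbf{p}}_\omega(\nabla)$. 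This is the general formula $S_{\langle-\rangle} : A \mapsto \int_\Sigma \mathrm{cs}_{\langle-\rangle}(A)$ recalled above, which I would invoke directly, so that the first equality of the plan is immediate.

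Next I would spell out the entries of the dictionary. Using the equivalence between symplectic dg-manifolds of grade $n$ and $L_\infty$-algebroids $\mathfrak{P}$ carrying a quadratic non-degenerate invariant polynomial $\omega$, I would write $\mathrm{W}(\mathfrak{P})$ in local generators, exhibit $\omega$ as the quadratic curvature expression built from the symplectic form, and verify by a direct computation in $\mathrm{W}(\mathfrak{P})$ that the canonical Hamiltonian $\pi$ is $\mathrm{d}_{\mathrm{CE}}$-closed, i.e.\ is genuinely an $L_\infty$-cocycle. The defining transgression identity then reads $\mathrm{d}_{\mathrm{W}}\, \mathrm{cs}_\omega = \omega$ together with the restriction $\mathrm{cs}_\omega|_{\mathrm{CE}} = \pi$ along the projection $\mathrm{W}(\mathfrak{P}) \to \mathrm{CE}(\mathfrak{P})$, and I would solve it via the standard contracting-homotopy formula on the Weil algebra, producing $\mathrm{cs}_\omega$ as an explicit polynomial in the generators and their shifts.

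Finally I would compare this $\mathrm{cs}_\omega$ with the explicit AKSZ Lagrangian. The key translation is that a dg-morphism $A : \mathrm{W}(\mathfrak{P}) \to \Omega^\bullet(\Sigma) \cong C^\infty(T[1]\Sigma)$ is precisely an AKSZ field configuration, sending the generators of $\mathrm{W}(\mathfrak{P})$ to the $\sigma$-model fields $\phi$ and the shifted curvature generators to their de Rham differentials; under this dictionary the quadratic kinetic part of $\mathrm{cs}_\omega$ becomes the symplectic-potential term $\alpha(\mathrm{d}\phi)$ and the cocycle part $\pi$ becomes the Hamiltonian term $\phi^*\Theta$. I expect the main obstacle to be exactly this last verification: tracking the grading and sign conventions that relate the Weil-algebra differential, the symplectic potential and the Hamiltonian, and checking that the contracting-homotopy construction of $\mathrm{cs}_\omega$ reproduces $L_{\mathrm{AKSZ}}$ rather than differing from it by a $\mathrm{d}$-exact term that could survive under $\int_\Sigma$ when $\Sigma$ has nonempty boundary. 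Once the coefficients are matched, $\int_\Sigma \mathrm{cs}_\omega(\nabla) = \int_\Sigma L_{\mathrm{AKSZ}}(\nabla)$ follows, and combined with the first equality this completes the proof.
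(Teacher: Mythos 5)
Your proposal follows essentially the same route as the paper: your reduction to (i) the general fact that the $\infty$-Chern-Weil homomorphism on trivial bundles with connection is presented by $A \mapsto \int_\Sigma \mathrm{cs}_\omega(A)$ and (ii) the identification of the transgression element with the AKSZ Lagrangian is exactly the paper's decomposition (Section \ref{section.CS_theory} together with propositions \ref{TheCSElement} and \ref{TheAKSZActionFromCS}), and your contracting-homotopy construction of $\mathrm{cs}_\omega$ is precisely the paper's Euler-vector-field formula $\mathrm{cs} = \frac{1}{n}\left(\iota_\epsilon \omega + \pi\right)$. The sign and coefficient bookkeeping you flag as the main obstacle is resolved in the paper just as you anticipate: in Darboux coordinates the two Lagrangians differ by an exact term, and integration by parts on the closed $\Sigma$ makes the action functionals coincide on the nose.
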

The local differential form data involved in this statement 
is at the focus of attention in this article here and 
contained in proposition \ref{TheAKSZActionFromCS} below. 
We indicate the global  aspects of the construction 
in Section \ref{section.CS_theory}.
The more abstract higher Chern-Weil theoretic interpretation of 
AKSZ $\sigma$-models implies various further constructions and
generalizations. We close in Section \ref{section.generalizations} by giving 
an outlook on these.
\smallskip

After a preliminary version of this article had appeared on arXiv we were informed by Alexei Kotov and Thomas Strobl that the main constructions from Sections 2-4 are presented in \cite{KoSt07} in the equivalent framework of Q-manifolds. For the reader's convenience, here is a basic dictionary between the notations used in this article and those in  \cite{KoSt07}: what Kotov and Strobl call a Q-manifold is by definition a dg-manifold in our terminology; our homological vector field $v$ is their $Q$ and our $d_{\mathrm{W}(\mathfrak{a})}$ is their
total differential (see proposition \ref{WeilAlgebraExplicit} below); our $A$ in definition \ref{definition.a-valued-differential-form} is their chain map $f^*$; our proposition \ref{CanonicalCocycleOfSymplecticLieAlgebroid} is their lemma 4.5; our proposition \ref{TheCSElement} is their lemma 4.6, with the Chern-Simons element we denote $\mathrm{cs}$ called
$\hat{\alpha}$ there; finally, our proposition \ref{TheAKSZActionFromCS} and corollary \ref{KS4.4} are their theorem 4.4.
\medskip
\medskip

\noindent{\bf Acknowledgement.} We thank Dmitry Roytenberg
for helpful conversation at an early stage of this project, Alexei Kotov and Thomas Strobl for having pointed our attention to \cite{KoSt07}, and the Referee for comments and suggestions on the first
version of this article.
CLR acknowledges support by a Junior Research Fellowship from the Erwin
Schr\"{o}dinger International Institute for Mathematical Physics.

\medskip
\medskip

\section{A reminder of AKSZ theory}
\label{section.reminder_AKSZ}

The first half of the seminal article \cite{AKSZ} presented some key
observations on, what from a modern perspective would be
called, symplectic \emph{derived geometry} \cite{LurieSpaces}
in its variant of symplectic \emph{dg-geometry} 
\cite{ToenVezzosi}. In its second half, it 
describes the role of such symplectic 
dg-geometry in quantum field theory
in general, and $\sigma$-model theory in particular.

In this section we briefly review some basics
in order to establish the context for our discussion.

\subsection{Symplectic dg-geometry}
\label{section.symplectic_dg-geometry}

In \emph{higher differential geometry}, smooth manifolds are
generalized first to orbifolds -- which are special Lie groupoids --
then to higher Lie groupoids: smooth $\infty$-groupoids \cite{survey}.
Moreover, in \emph{derived differential geometry}, the function
algebras are generalized to
\emph{smooth $\infty$-algebras} \cite{Spivak, Stel}.
All of these ingredients have \emph{presentations}  
in terms of compound structures in ordinary differential geometry.
There is a bit of theory involved in exactly how  these presentations
model the general abstract theory, but the main statement that 
we want to discuss here can be described already in a rather simple-minded
setup. 

Therefore, here we shall be content with the following simple
definitions of what might be called 
\emph{affine smooth graded manifolds} and 
\emph{affine smooth dg-manifolds}. 
Despite their simplicity
these definitions capture in a precise sense all the relevant structure: namely the
\emph{local} smooth structure. Globalizations of these definitions
can be obtained, if desired, by general abstract constructions.
We give some outlook on this in section 
\ref{section.generalizations}.

\begin{definition}
  \label{GradedManifold}
  The category of \emph{affine smooth $\mathbb{N}$-graded manifolds} 
  -- here called \emph{smooth graded manifolds} for short --
  is the full subcategory
  $$
    \mathrm{SmoothGrMfd} \subset \mathrm{GrAlg}_{\mathbb{R}}^{\mathrm{op}}
  $$
  of the opposite category of 
  $\mathbb{N}$-graded-commutative
  $\mathbb{R}$-algebras on those isomorphic to Grassmann algebras
  of the form
  $$
    \wedge^\bullet_{C^\infty(X_0)} \Gamma(V^*)
    \,,
  $$
  where $X_0$ is an ordinary smooth manifold, $V \to X_0$ is an 
  $\mathbb{N}$-graded smooth vector bundle over $X_0$ degreewise of 
  finite rank, and
  $\Gamma(V^*)$ is the graded $C^\infty(X)$-module of
  smooth sections of the dual bundle.
  
  For a smooth graded manifold $X \in \mathrm{SmoothGrMfd}$, we write 
  $C^\infty(X) \in \mathrm{cdgAlg}_{\mathbb{R}}$ for its corresponding
  dg-algebra \emph{of functions}.
\end{definition}
\noindent {\bf Remarks.}
\begin{itemize}
  \item The full subcategory of these objects is equivalent to that
  of all objects isomorphic to one of this form. We may therefore
  use both points of view interchangeably. 
  \item Much of the theory works just as well 
   when $V$ is allowed to be $\mathbb{Z}$-graded. This is the
   case that genuinely corresponds to \emph{derived} 
   (instead of just higher)
   differential geometry. An important class of examples for this case are
   BV-BRST complexes which motivate much of the literature.
   For the purpose of this short note, we shall be content with the 
   $\mathbb{N}$-graded case. 
  \item
   For an $\mathbb{N}$-graded $C^\infty(X_0)$-module $\Gamma(V^*)$ we have
   $$
     \wedge^\bullet_{C^\infty} \Gamma(V^*)
     =
     C^\infty(X_0)
      \oplus
     \Gamma(V_0^*)
       \oplus
     \left(
       \Gamma(V_0^*) \wedge_{C^\infty(X_0)} \Gamma(V_0^*)
       \oplus
       \Gamma(V_1^*)
     \right)
     \oplus
     \cdots
     \,,
   $$
   with the leftmost summand in degree 0, the next one in degree 1,
   and so on.
   \item There is a canonical functor
   $$
     \mathrm{SmoothMfd} \hookrightarrow \mathrm{SmthGrMfd}
   $$
   which identifies an ordinary smooth manifold $X$ with the 
   smooth graded manifold whose function algebra is the 
   ordinary algebra of smooth functions $C^\infty(X_0) := C^\infty(X)$
   regarded as a graded algebra concentrated in degree 0.
   This functor is full and faithful and hence exhibits a full
   subcategory.
\end{itemize}
All the standard notions of differental geometry apply to 
differential graded geometry. 
For instance for $X \in \mathrm{SmoothGrMfd}$, 
there is the graded vector space $\Gamma(T X)$ of vector fields on $X$, 
where a vector field is identified with a
graded \emph{derivation} $v : C^\infty(X) \to C^\infty(X)$.
This is naturally a graded (super) Lie algebra with 
super Lie bracket the graded commutator of derivations.
Notice that for $v \in \Gamma(T X)$ of odd degree we have
$[v,v] = v\circ v + v \circ v = 2 v^2 : C^\infty(X) \to C^\infty(X)$.
\begin{definition}
  \label{DGManifolds}
  The category  of (affine, $\mathbb{N}$-graded) 
  \emph{smooth differential-graded manifolds}
  is the full subcategory 
  $$
     \mathrm{SmoothDgMfd} \subset \mathrm{cdgAlg}_{\mathbb{R}}^{\mathrm{op}}
  $$ 
  of the opposite of differential graded-commutative $\mathbb{R}$-algebras
  on those objects whose underlying graded algebra comes from
  $\mathrm{SmoothGrMfd}$.
\end{definition}
This is equivalently the category whose objects 
are  pairs $(X,v)$ consisting of a smooth graded manifold
$X \in \mathrm{SmoothGrMfd}$
and a grade 1 vector field $v \in \Gamma(T X)$, 
such that $[v,v] = 0$, and whose morphisms $(X_1,v_1) \to (X_2, v_2)$ are 
morphisms $f : X_1 \to X_2$ such that $v_1 \circ f^*  = f^* \circ v_2$.
\begin{remark}
  \label{SmoothDgAlgebras}
  The dg-algebras appearing here are special in that their
  degree-0 algebra is naturally not just
  an $\mathbb{R}$-algebra, but a \emph{smooth algebra}
  (a ``$C^\infty$-ring'', see \cite{Stel} for review and
  discussion). In a more theoretical account than we want to 
  present here, we would use the corresponding more general notion of 
  \emph{smooth dg-algebras}. For our present purposes, 
  this will only briefly play a role in def. \ref{WeilAlgebraAbstract}
  below.
\end{remark}
\begin{definition}
  \label{deRhamComplex}
  The \emph{de Rham complex functor}
  $$
    \Omega^\bullet(-) 
      : 
    \mathrm{SmoothGrMfd}
     \to 
    \mathrm{cdgAlg}^{\mathrm{op}}_{\mathbb{R}}
  $$
  sends a dg-manifold $X$ with 
  $C^\infty(X) \simeq \wedge^\bullet_{C^\infty(X_0)}
  \Gamma(V^*)$ to the 
  Grassmann algebra over $C^\infty(X_0)$ on the graded
  $C^\infty(X_0)$-module
  $$
    \Gamma(T^* X) \oplus \Gamma(V^*) \oplus \Gamma(V^*[-1])
    \,,
  $$
  where $\Gamma(T^* X)$ denotes the ordinary smooth 1-form fields
  on $X_0$ and where $V^*[-1]$ is $V^*$ with the grades \emph{increased}
  by one. This is equipped with the differential $\mathbf{d}$
  defined on generators as follows:
  \begin{itemize}
   \item $\mathbf{d}|_{C^\infty(X_0)} = d_{\mathrm{dR}}$ is the 
  ordinary de Rham differential with values in $\Gamma(T^* X)$;
    \item 
      $\mathbf{d}|_{\Gamma(V^*)} \to \Gamma(V^*[-1])$ is the degree-shift
  isomorphism
   \item and $\mathbf{d}$ vanishes on all remaining generators.
   \end{itemize}
\end{definition}
\begin{definition}
\label{TangentLieAlgebroidAsDgManifold}
Observe that $\Omega^\bullet(-) $ evidently factors through the
defining inclusion 
$\mathrm{SmoothDgMfd}$ $\hookrightarrow$ $\mathrm{cdgAlg}_{\mathbb{R}}$.
Write
$$
  \mathfrak{T}(-) : \mathrm{SmoothGrMfd} \to \mathrm{SmoothDgMfd}
$$
for this factorization.
\end{definition}
 \label{GradedTangentBundle}
The dg-space $\mathfrak{T}X$ is often called the 
\emph{shifted tangent bundle} of $X$ and denoted $T[1]X$.
\begin{observation}
  \label{FormsFromGrManifoldMaps}
  For $\Sigma$ an ordinary smooth manifold and 
  for $X$ a graded manifold corresponding to a vector bundle
  $V \to X_0$, there is a natural bijection
  $$
    \mathrm{SmoothGrMfd}(\mathfrak{T}\Sigma, X)
    \simeq
    \Omega^\bullet(\Sigma,V)
    \,
  $$
  where on the right we have the set of $V$-valued smooth differential forms
  on $\Sigma$: tuples consisting of a smooth function 
  $\phi_0 : \Sigma \to X_0$, and for each $n > 1$ 
  an ordinary differential $n$-form 
  $\phi_n \in \Omega^n(\Sigma, \phi_0^* V_{n-1})$ with values in the pullback
  bundle of $V_{n-1}$ along $\phi_0$.
\end{observation}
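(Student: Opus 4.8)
The plan is to unwind the definition of a morphism in $\mathrm{SmoothGrMfd}$ and reduce it, via the universal property of a free graded-commutative algebra, to exactly the tuple of data listed on the right-hand side. By Definition \ref{GradedManifold} a morphism $\mathfrak{T}\Sigma \to X$ of smooth graded manifolds is precisely a homomorphism of $\mathbb{N}$-graded-commutative $\mathbb{R}$-algebras $\phi^\ast : C^\infty(X) \to C^\infty(\mathfrak{T}\Sigma)$, where by Definitions \ref{deRhamComplex} and \ref{TangentLieAlgebroidAsDgManifold} (applied to the ordinary manifold $\Sigma$, and remembering that the hom is taken in the \emph{graded}, not dg, category, so the homological vector field of $\mathfrak{T}\Sigma$ is forgotten) the codomain is the de Rham algebra $\Omega^\bullet(\Sigma)$, a Grassmann algebra over $C^\infty(\Sigma)$ with $\Omega^p(\Sigma)$ sitting in degree $p$. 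Writing $C^\infty(X) = \wedge^\bullet_{C^\infty(X_0)} \Gamma(V^\ast)$, I would observe that this is the free graded-commutative $C^\infty(X_0)$-algebra on the positively graded module $\Gamma(V^\ast)$, so that $\phi^\ast$ is freely and completely determined by (i) its degree-$0$ part $C^\infty(X_0) \to \Omega^0(\Sigma) = C^\infty(\Sigma)$ and (ii) its restriction to the generators $\Gamma(V^\ast)$. Since $\phi^\ast$ is degree-preserving and (by the grading convention recorded after Definition \ref{GradedManifold}) the summand $\Gamma(V_{n-1}^\ast)$ lives in degree $n$, the generators coming from $V_{n-1}$ are forced into $\Omega^n(\Sigma)$; no graded-commutativity conditions arise beyond those automatically satisfied in the target, as odd generators already square to zero on both sides.

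The second step handles the degree-$0$ data, and this is the one genuinely non-formal ingredient — hence the main obstacle. Here I would invoke the full faithfulness of the embedding $C^\infty : \mathrm{SmoothMfd} \hookrightarrow \mathrm{SmthGrMfd}$ recorded in the remarks following Definition \ref{GradedManifold}, i.e.\ the classical fact that the relevant (smooth-ring, cf.\ Remark \ref{SmoothDgAlgebras}) homomorphisms $C^\infty(X_0) \to C^\infty(\Sigma)$ are exactly the pullbacks along smooth maps, to identify the degree-$0$ component of $\phi^\ast$ with a unique smooth $\phi_0 : \Sigma \to X_0$ satisfying $\phi^\ast|_{C^\infty(X_0)} = \phi_0^\ast$. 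This is precisely the point where the bijection depends on interpreting the degree-$0$ algebras with their smooth $C^\infty$-ring structure rather than as bare $\mathbb{R}$-algebras, which would admit too many homomorphisms.

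The third step identifies the generator data and assembles the bijection. The restriction of $\phi^\ast$ to $\Gamma(V_{n-1}^\ast)$ is a map into $\Omega^n(\Sigma)$ that is $\phi_0^\ast$-semilinear over $C^\infty(X_0)$; by extension of scalars along $\phi_0^\ast$ (the tensor--hom adjunction) such maps are in natural bijection with $C^\infty(\Sigma)$-linear maps $\Gamma(\phi_0^\ast V_{n-1}^\ast) \to \Omega^n(\Sigma)$, hence, using degreewise finite rank to identify $(\phi_0^\ast V_{n-1})^{\ast\ast} \cong \phi_0^\ast V_{n-1}$, with elements of $\Gamma(\phi_0^\ast V_{n-1} \otimes \wedge^n T^\ast\Sigma) = \Omega^n(\Sigma, \phi_0^\ast V_{n-1})$. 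This yields the forms $\phi_n$ for $n \geq 1$ and conversely reconstructs $\phi^\ast$ from any such tuple. Combining the three steps gives the asserted bijection, with naturality in $\Sigma$ immediate from the functoriality of pullback of forms and of precomposition of algebra maps; everything beyond the second step is a formal consequence of the universal property of $\wedge^\bullet_{C^\infty(X_0)}\Gamma(V^\ast)$ together with the degree count.
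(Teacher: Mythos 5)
Your argument is correct, and in fact the paper offers \emph{no} proof of this Observation at all --- it is stated as self-evident --- so your write-up supplies exactly the routine unwinding the authors leave implicit: the universal property of the free graded-commutative algebra $\wedge^\bullet_{C^\infty(X_0)}\Gamma(V^*)$ (with the homological vector field of $\mathfrak{T}\Sigma$ correctly forgotten, since the hom is taken in $\mathrm{SmoothGrMfd}$), the identification of the degree-$0$ algebra map with a smooth map $\phi_0 : \Sigma \to X_0$, and the tensor--hom/finite-rank identification of the generator data with elements of $\Omega^n(\Sigma, \phi_0^* V_{n-1})$. All three steps are sound and the bookkeeping of degrees (generators from $V_{n-1}$ landing in $\Omega^n(\Sigma)$) matches the paper's grading convention.

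One aside in your second step is wrong, however, and it is worth correcting because, taken literally, it would contradict the very statement you are proving. You claim the bijection ``depends on interpreting the degree-$0$ algebras with their smooth $C^\infty$-ring structure rather than as bare $\mathbb{R}$-algebras, which would admit too many homomorphisms.'' In fact, by the classical Milnor exercise, \emph{every} $\mathbb{R}$-algebra homomorphism $C^\infty(X_0) \to C^\infty(\Sigma)$ between function algebras of finite-dimensional second-countable smooth manifolds is the pullback along a unique smooth map $\Sigma \to X_0$; no smooth-ring structure needs to be imposed to cut down the homomorphisms. This matters here: Definition \ref{GradedManifold} makes $\mathrm{SmoothGrMfd}$ a \emph{full} subcategory of $\mathrm{GrAlg}_{\mathbb{R}}^{\mathrm{op}}$, so the degree-$0$ component of your $\phi^*$ is a priori only a bare $\mathbb{R}$-algebra homomorphism, and if there genuinely were ``too many'' such homomorphisms, the left-hand side of the Observation would be strictly larger than the right-hand side and the statement (as well as the paper's remark that $\mathrm{SmoothMfd} \hookrightarrow \mathrm{SmthGrMfd}$ is full and faithful) would be false. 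Your proof survives because what it actually invokes is that full-faithfulness remark, which is correct --- but its correctness rests on the Milnor exercise for bare $\mathbb{R}$-algebras, not on Remark \ref{SmoothDgAlgebras}, which plays no role in this Observation.
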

The standard Cartan calculus of differential geometry
generalizes directly to graded smooth manifolds.
For instance, given a vector field $v \in \Gamma(T X)$
on $X \in \mathrm{SmoothGrMfd}$, 
there is the \emph{contraction derivation}
$$
  \iota_v : \Omega^\bullet(X) \to \Omega^{\bullet}(X)
$$
on the de Rham complex of $X$, and hence the \emph{Lie derivative}
$$
  \mathcal{L}_v := [\iota_{v},\mathbf{d}] : 
  \Omega^\bullet(X) \to \Omega^\bullet(X)
  \,.
$$
\begin{definition}
  For $X \in \mathrm{SmoothGrMfd}$ the \emph{Euler
  vector field} $\epsilon \in \Gamma(T X)$ is defined over 
  any coordinate patch $U \to X$ to be
  given by the formula
  $$
    \epsilon|_U 
      := 
    \sum_a
     \mathrm{deg}(x^a) x^a \frac{\partial}{\partial x^a}
    \,,
  $$
  where $\{x^a\}$ is a basis of generators and 
  $\mathrm{deg}(x^a)$ the degree of a generator.
    The \emph{grade} of a homogeneous element $\alpha$ in 
  $\Omega^\bullet(X)$ is the unique natural number $n \in \mathbb{N}$
  with 
  $$
    \mathcal{L}_\epsilon \alpha = n \alpha
    \,.
  $$
\end{definition}
\noindent{\bf Remarks.}
\begin{itemize}
\item This implies that for $x^i$ an element of grade $n$ on $U$, the
1-form $\mathbf{d}x^i$ is also of grade $n$. This is why we speak
of \emph{grade} (as in ``graded manifold'') instead of \emph{degree} here. 
\item Since coordinate transformations on a graded manifold are
  grading-preserving, the Euler vector field is indeed
  well-defined. Note that the degree-0 coordinates 
  do not appear in the Euler vector field.
\end{itemize}
The existence of $\epsilon$ implies the following 
useful statement (amplified in \cite{RoytenbergCourant}),
which is a trivial variant of what in grade 0
would be the standard Poincar{\'e} lemma.
\begin{observation}
  \label{GradedPoincareLemma}
  On a graded manifold, every closed differential form
  $\omega$ of positive grade $n$ is exact: the form 
  $$
    \lambda := \frac{1}{n} \iota_\epsilon \omega
  $$
  satisfies
  $$
    \mathbf{d}\lambda = \omega
    \,.
  $$
\end{observation}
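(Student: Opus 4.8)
The plan is to read the statement off Cartan's homotopy formula applied to the Euler vector field $\epsilon$, so that the grade eigenvalue equation together with closedness exhibits $\omega$ as $\mathbf{d}\lambda$ in one stroke. Concretely, I would first note that, since $\epsilon$ has internal degree $0$, it is an even vector field, and for an even vector field the graded commutator defining the Lie derivative reduces to the ordinary magic formula
$$
  \mathcal{L}_\epsilon \;=\; [\iota_\epsilon,\mathbf{d}] \;=\; \iota_\epsilon\mathbf{d} + \mathbf{d}\iota_\epsilon
  \,.
$$
The only subtlety is the sign: $\iota_\epsilon$ and $\mathbf{d}$ are both odd for the total grading on the de Rham complex (the former lowering form degree by one, the latter raising it by one), so their graded commutator is the anticommutator and the displayed sign is indeed a plus.

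Next I would evaluate this operator identity on the given form. By the definition of grade one has $\mathcal{L}_\epsilon\omega = n\,\omega$, and by hypothesis $\mathbf{d}\omega = 0$, so the term $\iota_\epsilon\mathbf{d}\omega$ drops out and what remains is
$$
  n\,\omega \;=\; \mathbf{d}\,\iota_\epsilon\omega
  \,.
$$
Since the grade $n$ is positive, hence nonzero, I may divide by it to obtain $\omega = \mathbf{d}\bigl(\tfrac{1}{n}\iota_\epsilon\omega\bigr) = \mathbf{d}\lambda$, which is precisely the assertion.

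I do not anticipate a real obstacle here: the argument is entirely formal and no local coordinate computation is needed. The one point deserving care is the sign in the magic formula, which is why I would make the parity bookkeeping explicit as above rather than simply quote it. For the global claim I would invoke the remark already recorded before the statement, namely that coordinate transformations on a graded manifold preserve the grading, so that $\epsilon$ --- and therefore $\lambda = \tfrac{1}{n}\iota_\epsilon\omega$ --- is globally well defined; the identity $\mathbf{d}\lambda = \omega$ is then an equality of globally defined forms and so holds on all of $X$.
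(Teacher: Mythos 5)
Your proof is correct and is precisely the argument the paper has in mind: the paper defines the Lie derivative as the graded commutator $\mathcal{L}_v := [\iota_v,\mathbf{d}]$ and the grade by $\mathcal{L}_\epsilon\alpha = n\alpha$, so that $n\omega = \iota_\epsilon\mathbf{d}\omega + \mathbf{d}\iota_\epsilon\omega = \mathbf{d}\iota_\epsilon\omega$ follows at once, which is why the statement is recorded there as an unproved Observation. Your explicit parity bookkeeping for the sign in Cartan's formula and the remark on global well-definedness of $\epsilon$ are exactly the right points to make precise.
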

\begin{definition}
  \label{SymplecticDgManifold}
  A \emph{symplectic dg-manifold} of grade $n \in \mathbb{N}$ is 
  a dg-manifold $(X,v)$ equipped with 
  2-form $\omega \in \Omega^2(X)$ which is
  \begin{itemize}
    \item non-degenerate;
    \item closed;
  \end{itemize}
  as usual for symplectic forms, and
  in addition
  \begin{itemize}
    \item of grade $n$;
    \item $v$-invariant: $\mathcal{L}_v \omega = 0$.
  \end{itemize}
\end{definition}
In a local chart $U$ with coordinates $\{x^a\}$ we may find functions
$\{\omega_{ab} \in C^\infty(U)\}$ such that
$$
  \omega|_U = \frac{1}{2}  \mathbf{d}x^a\, \omega_{ab}\wedge \mathbf{d}x^b
  \,,
$$
where summation of repeated indices is implied. We say that 
$U$ is a \emph{Darboux chart} for $(X,\omega)$ if the $\omega_{ab}$
are constant.
\begin{observation}  
  The function algebra of a symplectic dg-manifold 
  $(X,\omega)$ of grade $n$
  is naturally equipped with a Poisson bracket
  $$
    \{-,-\} : 
    C^\infty(X)\otimes C^\infty(X)
    \to C^\infty(X)
  $$
  which decreases grade by $n$. On a local coordinate patch $\{x^a\}$ this
  is given by
  $$
    \{f,g\} = 
    \frac{f \reflectbox{\small $\partial$}}{x^a \reflectbox{\small $\partial$}}
    \omega^{a b}
    \frac{\partial g}{\partial x^b}
    \,,
  $$
  where $\{\omega^{a b}\}$ is the inverse matrix to $\{\omega_{a b}\}$,
  and where the graded differentiation in the left factor is to 
  be taken from the right, as indicated.
\end{observation}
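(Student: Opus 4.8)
The plan is to build the bracket intrinsically out of $\omega$ and then read off the stated coordinate expression. First I would invoke non-degeneracy: by definition this says that contraction $\Gamma(TX)\to\Omega^1(X)$, $v\mapsto\iota_v\omega$, is an isomorphism of graded $C^\infty(X)$-modules. Hence to every $f\in C^\infty(X)$ there is associated a unique \emph{Hamiltonian vector field} $v_f\in\Gamma(TX)$, characterized by $\iota_{v_f}\omega=\mathbf{d}f$, and I would set
$$
  \{f,g\}:=\iota_{v_f}\iota_{v_g}\omega=\iota_{v_f}\mathbf{d}g=v_f(g)\,.
$$
Since this refers only to $\omega$, to $\mathbf{d}$ and to contraction -- all coordinate-independent operations -- the bracket is automatically well-defined on all of $C^\infty(X)$, with no patching to verify.

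I would next fix the grade. As $\epsilon$ has grade $0$, contraction $\iota_v$ shifts grade by $\mathrm{grade}(v)$, so the isomorphism $\iota_{(-)}\omega$ raises grade by $n=\mathrm{grade}(\omega)$. Since $\mathbf{d}$ preserves grade, the defining equation $\iota_{v_f}\omega=\mathbf{d}f$ forces $\mathrm{grade}(v_f)=\mathrm{grade}(f)-n$, and therefore $\{f,g\}=v_f(g)$ has grade $\mathrm{grade}(f)+\mathrm{grade}(g)-n$; that is, the bracket decreases grade by $n$, as asserted.

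For the algebraic axioms I would proceed as in the classical case, tracking the Koszul signs determined by the grades together with the degree shift $-n$. Graded antisymmetry of $\{-,-\}$ comes from the graded antisymmetry of $\omega$, and the graded Leibniz rule in each slot holds because $v_f$ is a graded derivation of $C^\infty(X)$. The only substantive axiom is the graded Jacobi identity, and this is where closedness $\mathbf{d}\omega=0$ is used: one shows that $f\mapsto v_f$ sends the bracket to the graded commutator of vector fields, $v_{\{f,g\}}=[v_f,v_g]$, exploiting $\mathcal{L}_{v_f}\omega=[\iota_{v_f},\mathbf{d}]\omega=0$ (which follows from $\mathbf{d}\omega=0$ and $\mathbf{d}\mathbf{d}f=0$); the Jacobi identity for $\{-,-\}$ is then inherited from that of $[-,-]$ on $\Gamma(TX)$. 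Notice that only non-degeneracy and closedness enter here, not the $v$-invariance $\mathcal{L}_v\omega=0$.

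Finally I would extract the coordinate formula. On a chart with $\omega|_U=\tfrac12\,\mathbf{d}x^a\,\omega_{ab}\wedge\mathbf{d}x^b$, solving $\iota_{v_f}\omega=\mathbf{d}f$ componentwise -- using $\iota_{v_f}\mathbf{d}x^a=v_f^a$ and the inverse matrix $\{\omega^{ab}\}$ -- expresses $v_f$ through the right derivatives of $f$, and substituting into $\{f,g\}=v_f(g)$ produces exactly
$$
  \{f,g\}=\frac{f \reflectbox{\small $\partial$}}{x^a \reflectbox{\small $\partial$}}\,\omega^{ab}\,\frac{\partial g}{\partial x^b}\,.
$$
I expect the main obstacle to be purely the sign bookkeeping: making the left/right-derivative conventions and the graded Koszul signs mutually consistent so that the intrinsic definition reproduces this local expression, and in particular checking the graded Jacobi identity with the correct signs. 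The conceptual content is entirely that of classical symplectic geometry; it is the graded signs that demand care.
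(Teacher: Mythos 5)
Your construction is correct, and it is essentially the approach the paper itself implicitly relies on: the paper states this Observation without proof (deferring to standard graded symplectic geometry, cf.\ \cite{RoytenbergCourant}), and its surrounding apparatus --- Definition \ref{Hamiltonians} of Hamiltonian vector fields via $\iota_v\omega = \mathbf{d}\pi$, together with the local solution $\omega_{ab}v^b = \partial\pi/\partial x^a$ in Remark \ref{PartialPiByV} --- is exactly your assignment $f\mapsto v_f$ with $\{f,g\}=v_f(g)$. Your grade count, the reduction of the Jacobi identity to $v_{\{f,g\}}=[v_f,v_g]$ using only $\mathbf{d}\omega=0$ (including the correct side remark that the $v$-invariance $\mathcal{L}_v\omega=0$ plays no role here), and the extraction of the coordinate formula are all sound, with only the Koszul-sign bookkeeping you already flag left to carry out.
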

\begin{definition}
  \label{Hamiltonians}
  For $\pi\in C^\infty(X)$ and $v \in \Gamma(T X)$, we say that
  \emph{$\pi$ is a Hamiltonian for $v$}, or equivalently, that
  \emph{$v$ is the Hamiltonian vector field of $\pi$}
  if
  $$
    \mathbf{d}\pi = \iota_v \omega
    \,.
  $$
  \end{definition}
Note that the convention $(-1)^{n+1} \mathbf{d}\pi = \iota_v \omega$
is also  frequently used for defining Hamiltonians in the 
context of graded geometry.
\begin{remark}
\label{PartialPiByV}
In a local coordinate chart $\{x^a\}$ the defining equation
$\mathbf{d}\pi = \iota_v \omega$
becomes 
$$
    \mathbf{d}x^a \frac{\partial \pi}{\partial x^a}
    =
    \omega_{a b} v^a \wedge \mathbf{d} x^b
    = 
    \omega_{a b} \mathbf{d}x^a \wedge v^b
  \,,
$$
implying that 
$$
  \omega_{ab}v^b = \frac{\partial \pi}{\partial x^a}
  \,.
$$  
\end{remark}

\subsection{AKSZ $\sigma$-Models}
\label{section.AKSZSigmaModels}

We now consider, in definition \ref{TheAKSZAction} below,
for any symplectic dg-manifold $(X,\omega)$ a functional
$S_{\mathrm{AKSZ}}$ on spaces of maps $\mathfrak{T}\Sigma \to X$
of smooth graded manifolds,  and specialize this to the explicit formula
(\ref{AKSZformula}) in the special case the target manifold is endowed with global
Darboux coordinates. While only this particular situation is referred to in
the remainder of the article,
we begin by
indicating informally the original motivation of $S_{\mathrm{AKSZ}}$. The
reader uncomfortable with these somewhat vague considerations
can take formula (\ref{AKSZformula}) as a definition and then skip to the
next section.
\par
Generally, a \emph{$\sigma$-model field theory} is, roughly, one 
\begin{enumerate}
\item whose fields over a space $\Sigma$ are 
maps $\phi : \Sigma \to X$ 
to some space $X$;
\item
 whose action functional is, apart from a kinetic term,
 the transgression of some kind of cocycle  on $X$
 to the mapping space $\mathrm{Map}(\Sigma,X)$.
\end{enumerate}
Here the terms ``space'', ``maps'' and ``cocycles''
are to be made precise in a suitable context.
One says that $\Sigma$ is the \emph{worldvolume}, 
$X$ is the \emph{target space} and the cocycle is
the \emph{background gauge field}.

For instance, an ordinary charged particle 
(such as an electron) is described by a $\sigma$-model 
where $\Sigma = (0,t) \subset \mathbb{R}$ is the abstract
\emph{worldline}, where $X$ is a (pseudo-)Riemannian smooth manifold
(for instance our spacetime), and where the background cocycle
is a line bundle with connection on $X$ (a degree-2 cocycle
in ordinary differential cohomology of $X$, representing a
background \emph{electromagnetic field}). Up to a kinetic term, 
the action functional is the holonomy of the connection over 
a given curve $\phi : \Sigma \to X$. 
A textbook discussion of these
standard kinds of $\sigma$-models is, for instance, in \cite{DeligneMorgan}.

The $\sigma$-models which we consider here are 
\emph{higher} generalizations of this example,
where the background gauge field is a cocycle of higher degree
(a higher bundle with connection) and where the worldvolume is
accordingly higher dimensional. In addition, $X$ is allowed to be 
not just a manifold, but an approximation to a 
\emph{higher orbifold} (a smooth $\infty$-groupoid).

More precisely, here we take the category of spaces
to be $\mathrm{SmoothDgMfd}$ from def. \ref{DGManifolds}.
We take target space to be a symplectic dg-manifold $(X,\omega)$
and the worldvolume to be the shifted tangent bundle 
$\mathfrak{T}\Sigma$ of a compact smooth manifold $\Sigma$.
Following \cite{AKSZ}, one may imagine that we can form a 
smooth $\mathbb{Z}$-graded mapping space
$\mathrm{Maps}(\mathfrak{T}\Sigma,X)$ 
of smooth graded manifolds. On this space the canonical
vector fields $v_\Sigma$ and $v_X$ naturally have commuting 
actions from the left
and from the right, respectively,
so that their sum $v_\Sigma + v_X$ 
equips $\mathrm{Maps}(\mathfrak{T}\Sigma,X)$ itself with the structure of 
a differential graded smooth manifold.

Next we take the ``cocycle'' on $X$ (to be made precise
in the next section) to be 
the Hamiltonian $\pi$ (def. \ref{Hamiltonians}) 
of $v_X$ with respect to the symplectic
structure $\omega$, according to def. \ref{SymplecticDgManifold}.
One wants to assume that there is a kind of 
Riemannian structure on $\mathfrak{T}\Sigma$
that allows to form the transgression
$$
  \int_{\mathfrak{T}\Sigma} \mathrm{ev}^* \omega
  :=
  p_! \mathrm{ev}^* \omega
$$
by pull-push through the canonical correspondence
$$
  \xymatrix{
     \mathrm{Maps}(\mathfrak{T}\Sigma,X)
     \ar@{<-}[r]^{p}
     &
     \mathrm{Maps}(\mathfrak{T}\Sigma,X) \times \mathfrak{T}\Sigma
     \ar[r]^<<<<{\mathrm{ev}}
     &
     X
  }
  \,.
$$
When one succeeds in making this precise, one expects to 
find that $\int_{\mathfrak{T}\Sigma} \mathrm{ev}^* \omega$ is in 
turn a symplectic structure on the mapping space. 

This implies that the vector field
$v_\Sigma + v_X$ on mapping space has a Hamiltonian 
$$
  \mathbf{S} \in C^\infty(\mathrm{Maps}(\mathfrak{T}\Sigma,X))
  \,,\;\;\mbox{s.t.}\;\;
  \mathbf{d}\mathbf{S} = \iota_{v^{}_\Sigma + v^{}_X} 
  \int_{\mathfrak{T}\Sigma} \mathrm{ev}^* \omega
  \,.
$$
\begin{definition}
 \label{TheAKSZAction}
The grade-0 component
$$
  S_{\mathrm{AKSZ}}
  :=
  \mathbf{S}|_{\mathrm{Maps}(\mathfrak{T}\Sigma,X)_0}
$$ 
of the Hamiltonian $ \mathbf{S}$
constitutes a functional on the space
of morphisms of graded manifolds $\phi : \mathfrak{T}\Sigma \to X$. This is
the \emph{AKSZ action functional} defining the AKSZ $\sigma$-model
with target space $X$ and background field/cocycle $\omega$.
\end{definition}
In \cite{AKSZ}, this procedure is indicated only somewhat vaguely. 
The focus of attention there is on a discussion, from this perspective,
of the action functionals
of the 2-dimensional $\sigma$-models called the 
\emph{A-model} and the \emph{B-model}.
In \cite{RoytenbergAKSZ} a more detailed discussion of
the general construction is given, including an explicit formula
for $\mathbf{S}$, and
hence for $S_{\mathrm{AKSZ}}$, in the case that $X$ admits global Darboux coordinates. 
That formula is the following:
for $(X,\omega)$ a symplectic dg-manifold of grade $n$
with global Darboux coordinates $\{x^a\}$, 
$\Sigma$ a smooth compact manifold of dimension $(n+1)$
and $k \in \mathbb{R}$, the
\emph{AKSZ action functional}
$$
  S_{\mathrm{AKSZ}} : 
   \mathrm{SmoothGrMfd}(\mathfrak{T}\Sigma, X)
  \to
  \mathbb{R}
$$
is
\begin{equation}\label{AKSZformula}
  S_{\mathrm{AKSZ}} 
   :
  \phi
  \mapsto
  \int_{\Sigma} 
  \left(
    \frac{1}{2}\omega_{ab} \phi^a \wedge d_{\mathrm{dR}}\phi^b
    -
    \phi^* \pi
  \right)
  \,,
\end{equation}
where $\pi$ is the Hamiltonian for $v_X$ with respect to $\omega$
and where on the
right we are interpreting fields as forms on $\Sigma$ according to
prop. \ref{FormsFromGrManifoldMaps}.
\par
This formula hence defines an infinite class of $\sigma$-models
depending on the target space structure $(X, \omega)$. 
(One can also consider arbitrary relative factors between the first and
the second term, but below we shall find that the above choice is
singled out).
In \cite{AKSZ}, it was
already noticed that ordinary Chern-Simons theory is a special
case of this for $\omega$ of grade 2,  as is the Poisson $\sigma$-model
for $\omega$ of grade 1 (and hence, as shown there, also the A-model
and the B-model). 
The main example in \cite{RoytenbergAKSZ} spells out the
general case for $\omega$ of grade 2, which is called the
\emph{Courant $\sigma$-model} there. 
(We review and re-derive
all these examples in detail in \ref{section.examples} below.)

One nice aspect of this construction is that it follows immediately
that the full Hamiltonian $\mathbf{S}$ on the mapping space satisfies
$\{\mathbf{S}, \mathbf{S}\} = 0$. Moreover, using the standard formula
for the internal hom of chain complexes, one finds that the cohomology
of $(\mathrm{Maps}(\mathfrak{T}\Sigma,X), v_\Sigma + v_X)$ in degree 0
is the space of functions on those fields that satisfy the Euler-Lagrange
equations of $S_{\mathrm{AKSZ}}$. Taken together, these facts imply that 
$\mathbf{S}$ is a solution of the ``master equation''
of a BV-BRST complex for the quantum field theory
defined by $S_{\mathrm{AKSZ}}$. This is a crucial ingredient for the 
quantization of the model, and this is what the AKSZ construction
is mostly used for in the literature (for instance \cite{CattaneoFelder01}).

Here we want to focus on another nice aspect of the AKSZ-construction:
it hints at a deeper reason for \emph{why} the $\sigma$-models of 
this type are special. It is indeed one of the very few
proposals for what a general abstract mechanism might be that picks 
out among the vast space of all possible local action functionals 
those that seem to be of relevance ``in nature''. 

We now proceed to show that the class of action functionals
$S_{\mathrm{AKSZ}}$ are precisely those that higher Chern-Weil theory
canonically associates to target data $(X,\omega)$. 
Since higher Chern-Weil theory in turn is canonically given 
on very general abstract grounds \cite{survey}, this in 
a sense amounts to a derivation of $S_{\mathrm{AKSZ}}$ from 
``first principles'', and it shows that a wealth of very general 
theory applies to these systems. More on this will be discussed
elsewhere, some indication are in Section \ref{section.CS_theory}. 
Here we shall focus on a concrete computation 
exhibiting $S_{\mathrm{AKSZ}}$ as the image of the higher 
Chern-Weil homomorphism.

\section{Chern-Weil theory on $L_\infty$-algebroids}
\label{section.cs_elements}

We now discuss the $\infty$-Lie theoretic concepts
in terms of which we shall re-express the AKSZ $\sigma$-model
below in \ref{section.AKSZ_theory}.

\subsection{General $L_\infty$-algebroids}

We survey some basics of $\infty$-Lie theory that we need
later on. The explicit $L_\infty$-algebraic constructions are from
\cite{SSSI}, a more encompassing discussion is in \cite{survey}.

The following definition essentially repeats def. \ref{DGManifolds}
with different terminology. While this may look like a 
redundancy, it is useful to instead regard it as the 
beginning of a useful dictionary between higher Lie theory
and dg-geometry. The examples to follow will illustrate this.
\begin{definition}
  \label{LInfinityAlgebroids}
  The category of \emph{$L_\infty$-algebroids} is equivalent
  to that of smooth dg-manifolds from def. \ref{DGManifolds}:
  $$
    L_\infty \mathrm{Algd}
    \simeq
    \mathrm{SmoothDgMfd}
    \hookrightarrow
    \mathrm{cdgAlg}_{\mathbb{R}}^{\mathrm{op}}
    \,.
  $$
  For $\mathfrak{a} \in L_\infty\mathrm{Algd}$ we write
  $\mathrm{CE}(\mathfrak{a}) \in \mathrm{cdgAlg}_{\mathbb{R}}$
  for the corresponding dg-algebra and call it the
  \emph{Chevalley-Eilenberg algebra} of $\mathfrak{a}$.
  
  If the graded algebra underlying $\mathrm{CE}(\mathfrak{a})$ 
  has generators of grade at most $n$, we say that $\mathfrak{a}$
  is a \emph{Lie $n$-algebroid}.
\end{definition}
\noindent{\bf Examples.}
\begin{itemize}
 \item
 Any (degreewise finite-dimensional) 
 $L_\infty$-algebra (and so, in particular, any Lie algebra) $\mathfrak{g}$ can be seen as a (canonically pointed) 
$L_\infty$-algebroid $b\mathfrak{g}$ over the point:
for $D : \vee^\bullet \mathfrak{g} \to \vee^\bullet \mathfrak{g}$
the nilpotent derivation on the free graded coalgebra over $\mathfrak{g}$ which
defines the $k$-ary brackts on $\mathfrak{g}$ by
$$
  [-,-, \cdots,-]_k^{} := D|_{\vee^k \mathfrak{g}} : \vee^k \mathfrak{g} \to \mathfrak{g}
  \,,
$$
we have
$$
  \mathrm{CE}(b \mathfrak{g})
   :=
  (\wedge^\bullet \mathfrak{g}^*, d := D^*)
  \,.
$$
One directly finds that $L_\infty$-algebroid morphims $b\mathfrak{g} \to b\mathfrak{h}$ are precisely $L_\infty$-algebra morphism $\mathfrak{g}\to \mathfrak{h}$. This means that there is a full and faithful inclusion
$$
  b : L_\infty \mathrm{Alg} \hookrightarrow L_\infty \mathrm{Algd}
$$
of the traditional category of $L_\infty$-algebras into that
of $L_\infty$-algebroids.

We refer to $b\mathfrak{g}$ as the \emph{delooping} of $\mathfrak{g}$. This notation is the infinitesimal analog of the notation $\mathbf{B}G$ for the one-object 
Lie groupoid corresponding to a Lie group $G$: the loop space object 
$\Omega\mathbf{B}G$ is equivalent to $G$, hence the name ``delooping'' given to $\mathbf{B}G$. 

For $\mathfrak{g}$ a Lie algebra, the algebra $\mathrm{CE}(b\mathfrak{g})$ is the ordinary Chevalley-Eilenberg algebra of $\mathfrak{g}$.
\item
   For $n \in \mathbb{N}$ the delooping of the 
   \emph{line Lie $n$-algebra} is the $L_\infty$-algebroid
   $b^{n}\mathbb{R}$ defined by the fact that 
   $\mathrm{CE}(b^{n}\mathbb{R})$ is generated over 
   $\mathbb{R}$ from a single generator in degree $n$
   with vanishing differential.
 \item
  For $X$ a smooth manifold, the \emph{tangent Lie algebroid}
  $\mathfrak{a} = \mathfrak{T}X$ is defined by 
  $\mathrm{CE}(\mathfrak{T}X) = (\Omega^\bullet(X), d_{\mathrm{dR}})$ ;
 \item
   For $(X, \{-,-\})$ a Poisson manifold, the corresponding
  \emph{Poisson Lie algebroid} $\mathfrak{P}(X)$ is defined by
  $$
    \mathrm{CE}(\mathfrak{P}(X)) = 
     (\wedge^\bullet_{C^\infty(X)}
     \Gamma(T X),
       \{\pi,-\}
     )
     \,,
  $$
  where $\pi \in \wedge^2_{C^\infty(X)} \Gamma(T X)$ is the Poisson 
  tensor and the bracket means the canonical extension to the 
  tangent bundle: the Schouten bracket.
\end{itemize}
\begin{remark}
For $\mathfrak{a}$ an $L_\infty$-algebroid
and $\{x^i\}$ local coordinates on the corresponding
graded manifold, the vector field $v$ corresponding to 
the Chevellay-Eilenberg differential $d_{\mathrm{CE}(\mathfrak{a})}$ is
$$
v\bigr\vert_U=v^i\frac{\partial}{\partial x^i},
$$
with $v^i:=d_{\mathrm{CE}(\mathfrak{a})} x^i$.
\end{remark}
\begin{definition}
  \label{WeilAlgebraAbstract}
  For $\mathfrak{a}$ an $L_\infty$-algebroid,
  its \emph{Weil algebra} is that representative
  of the free \emph{smooth dg-algebra}, 
  remark \ref{SmoothDgAlgebras}, 
  on the underlying word-length-1 complex of $\mathfrak{a}$ that makes the
  canonical projection of complexes
  $$
    i^* : W(\mathfrak{a}) \to \mathrm{CE}(\mathfrak{a})
  $$
  into a dg-algebra homomorphism.
\end{definition}
\begin{proposition}
  \label{WeilAlgebraExplicit}
  Explicitly, the Weil algebra $\mathrm{W}(\mathfrak{a})$ has
  \begin{itemize}
    \item as underlying graded algebra the de Rham complex
      $\Omega^\bullet(\mathfrak{a})$
      from def. \ref{deRhamComplex}, applied to the 
      corresponding graded manifold; i.e., the differential graded manifold corresponding to $\mathrm{W}(\mathfrak{a})$ is the tangent Lie $\infty$-algebroid $\mathfrak{T}\mathfrak{a}$. This can be equivalently written as
$$
\mathrm{W}(\mathfrak{a})=\mathrm{CE}(\mathfrak{T}\mathfrak{a}).
$$
    \item
      as differential the sum
      $$
        d_{\mathrm{W}(\mathfrak{a})}
        =
        \mathbf{d} + \mathcal{L}_v
        \,,
      $$
     where $\mathbf{d}$ is the differential from def. \ref{deRhamComplex}
     and where $\mathcal{L}_v$ is the Lie derivative along the 
     vector field $v$ corresponding to the Chevalley-Eilenberg differential.
   \end{itemize}  
\end{proposition}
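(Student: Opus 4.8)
The plan is to verify that the explicit dg-algebra described in the statement is a realization of the abstract object of def.~\ref{WeilAlgebraAbstract}, i.e.\ that it is (a representative of) the free smooth dg-algebra on the word-length-$1$ complex of $\mathfrak{a}$ for which the canonical projection $i^*$ is a dg-homomorphism. The argument splits cleanly along the two bullet points: first identify the underlying graded algebra, then pin down the differential.

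For the first bullet I would start from the observation that the free smooth dg-algebra on a complex $W$ has, as its underlying graded algebra, the free graded-commutative $C^\infty(X_0)$-algebra on $W$ together with a degree-shifted copy of the generators of $W$. Taking $W$ to be the word-length-$1$ complex of $\mathfrak{a}$ --- the generators $\Gamma(V^*)$ in their native grades together with the degree-$0$ functions $C^\infty(X_0)$ --- the shifted copy supplies exactly the summand $\Gamma(V^*[-1])$ of the formal differentials $\mathbf{d}x^a$, while the shift of the degree-$0$ generators supplies $\Gamma(T^*X)$ via $d_{\mathrm{dR}}$. Hence the underlying graded algebra is precisely the Grassmann algebra on $\Gamma(T^*X)\oplus\Gamma(V^*)\oplus\Gamma(V^*[-1])$ of def.~\ref{deRhamComplex}, which by def.~\ref{TangentLieAlgebroidAsDgManifold} is the graded algebra $\mathrm{CE}(\mathfrak{T}\mathfrak{a})$. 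This establishes $\mathrm{W}(\mathfrak{a})=\mathrm{CE}(\mathfrak{T}\mathfrak{a})$ at the level of graded algebras.

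For the second bullet I would use that on a free graded algebra a differential is determined by its values on generators, so it suffices to exhibit $\mathbf{d}+\mathcal{L}_v$ as a valid differential that is selected by the $i^*$-condition. That it squares to zero is a Cartan-calculus computation: $\mathbf{d}^2=0$; the identity $\mathcal{L}_v=[\iota_v,\mathbf{d}]$ gives $[\mathbf{d},\mathcal{L}_v]=0$ by the graded Jacobi identity together with $\mathbf{d}^2=0$; and $\mathcal{L}_v^2=\tfrac12[\mathcal{L}_v,\mathcal{L}_v]=\tfrac12\mathcal{L}_{[v,v]}=0$, since $v$ is homological, i.e.\ $[v,v]=0$. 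Consequently $(\mathbf{d}+\mathcal{L}_v)^2=0$. To see that $i^*$ becomes a dg-homomorphism, note that $i^*$ annihilates $\Gamma(V^*[-1])$ and is the identity on the remaining generators, so both $i^*\circ(\mathbf{d}+\mathcal{L}_v)$ and $d_{\mathrm{CE}(\mathfrak{a})}\circ i^*$ are derivations over $i^*$ and may be compared on generators: on $x^a\in\Gamma(V^*)$ one has $i^*(\mathbf{d}x^a+\mathcal{L}_v x^a)=0+v^a=d_{\mathrm{CE}(\mathfrak{a})}\,i^*x^a$, while both sides annihilate the shifted generators.

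Finally I would argue uniqueness: the requirement that $i^*$ be a dg-map forces $d_{\mathrm{W}(\mathfrak{a})}x^a=\mathbf{d}x^a+v^a$ on generators once $\mathbf{d}$ is fixed to be the degree-shift isomorphism of def.~\ref{deRhamComplex}, and any two differentials on a free graded algebra that agree on generators coincide. I expect the only genuinely delicate point to be the first step: making precise the universal property of the free \emph{smooth} dg-algebra (remark~\ref{SmoothDgAlgebras}) on the word-length-$1$ complex, so that the shifted generators are introduced correctly and the projection $i^*$ is literally the canonical one. Once that universal property is in hand, the squaring-to-zero and the compatibility with $i^*$ are the routine Cartan-calculus verifications sketched above, the single structural input being the homological condition $[v,v]=0$.
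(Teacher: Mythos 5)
The paper states proposition \ref{WeilAlgebraExplicit} without any proof (the verification is implicitly deferred to \cite{SSSI,FSSI}), so your proposal fills a void rather than paralleling an argument of the authors; it has to be judged on its own. Most of it is correct and is indeed the standard way to verify the claim: the identification of the underlying graded algebra, including the key point that \emph{smoothness} (remark \ref{SmoothDgAlgebras}) makes the shifted copy of the degree-$0$ generators come out as $\Gamma(T^*X)$ via $d_{\mathrm{dR}}$ rather than as algebraic K\"ahler differentials; the Cartan-calculus computation showing $(\mathbf{d}+\mathcal{L}_v)^2=0$, whose only structural input is $[v,v]=0$; and the check on generators that $i^*\circ(\mathbf{d}+\mathcal{L}_v)=d_{\mathrm{CE}(\mathfrak{a})}\circ i^*$.

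There is, however, a genuine gap: these checks do not yet show that $(\Omega^\bullet(\mathfrak{a}),\mathbf{d}+\mathcal{L}_v)$ is \emph{a representative of the free smooth dg-algebra}, which is what definition \ref{WeilAlgebraAbstract} demands. Having the free graded algebra as underlying algebra, a square-zero differential on it, and $i^*$-compatibility is strictly weaker than being dg-isomorphic to the free dg-algebra, whose differential is the pure shift $\mathbf{d}$ of def.\ \ref{deRhamComplex}. Concretely, the derivation $d$ defined on generators by $d x^a:=v^a$ and $d(\mathbf{d}x^a):=0$ also squares to zero (again by $[v,v]=0$) and also makes $i^*$ a dg-homomorphism, yet the resulting dg-algebra is not free: for $\mathfrak{a}=b\mathfrak{g}$ its cohomology is $H^\bullet_{\mathrm{CE}}(\mathfrak{g})$ tensored with the exterior algebra on the shifted generators, whereas the free dg-algebra (the Weil algebra) is acyclic. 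Your uniqueness paragraph does not close this hole: the $i^*$-condition pins down $d_{\mathrm{W}(\mathfrak{a})}x^a$ only modulo the ideal generated by the shifted generators, and the clause ``once $\mathbf{d}$ is fixed to be the degree-shift isomorphism'' assumes exactly the structural form (shift plus correction) that the freeness clause of the definition is supposed to guarantee --- at that point the argument is circular. The missing step is short: the graded-algebra automorphism $\phi$ of $\Omega^\bullet(\mathfrak{a})$ fixing the unshifted generators and sending $\mathbf{d}x^a\mapsto \mathbf{d}x^a+v^a$ satisfies $\phi\circ\mathbf{d}=(\mathbf{d}+\mathcal{L}_v)\circ\phi$; on the unshifted generators this is immediate, and on the shifted ones it uses $\mathbf{d}\mathcal{L}_v+\mathcal{L}_v\mathbf{d}=0$ together with $\mathcal{L}_v v^a=\tfrac{1}{2}[v,v](x^a)=0$. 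This exhibits $(\Omega^\bullet(\mathfrak{a}),\mathbf{d}+\mathcal{L}_v)$ as dg-isomorphic to the free smooth dg-algebra, and with it inserted your argument becomes a complete existence proof, which is all the proposition requires given that the paper's phrase ``that representative'' is itself only informally pinned down.
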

\noindent{\bf Remark.}
Therefore the Weil algebra $\mathrm{W}(\mathfrak{a})$ is a
\emph{twisted} de Rham complex on the graded smooth manifold
corresponding to $\mathfrak{a}$, where the twist is dictated
by the characterizing morphism $i^*$ from def. \ref{WeilAlgebraAbstract}.
In the abstract theory indicated below in 
\ref{section.CS_theory} this makes $\mathrm{W}(\mathfrak{a})$
part of the construction of a certain homotopical resolution of the 
Lie integration of $\mathfrak{a}$. This is the deeper
reason for the role played by the Weil algebra in higher Lie theory. But
for the present purpose the above explicit definition is sufficient.
\\
\begin{examples}
\begin{itemize}
  \item For $\mathfrak{g}$ a Lie algebra, the definition of
  $\mathrm{W}(b\mathfrak{g})$ reduces to the ordinary definition of the Weil algebra.
  \item
    For $\mathfrak{a} = \Sigma$ an ordinary smooth manifold, 
    $\mathrm{W}(\Sigma) = \Omega^\bullet(\Sigma)$.
  \item
    For $G$ a Lie group with Lie algebra $\mathfrak{g}$ 
    acting on a manifold $\Sigma$, write
    $\Sigma//\mathfrak{g}$ for the corresponding action Lie algebroid.
    Then $\mathrm{W}(\Sigma//\mathfrak{g})$ is the Cartan-Weil model 
    for $G$-equivariant de Rham cohomology on $\Sigma$.
  \item
    For $\mathfrak{a} = b^n \mathbb{R}$ the delooping of the 
    line Lie $n$-algebra, we have that $\mathrm{W}(b^n \mathbb{R})$
    is the free dg-algebra on a single generator $c$ in degree $n$:
    this is the graded algebra on two generators $c$ and $\gamma$, with $c$ in degree $n$ and $\gamma$ in degree $n+1$,
    equipped with a differential defined by 
    $
      d_{\mathrm{W}(b^n \mathbb{R})}
        : c \mapsto \gamma
        \,.
    $
\end{itemize}
\end{examples}

\subsection{Cocycles, invariant polynomials and Chern-Simons elements 
}
\label{section.CWMorphism}


The key technical notion for our main theorem is that of 
Chern-Simons elements witnessing trangression between
invariant polynomials and $L_\infty$-algebroid cocycles,
which is def. \ref{TransgressionAndCSElements} below. 
We show in Section \ref{section.CS_theory} how 
these notions are related to the $\infty$-Chern-Weil homomorphism for $\infty$-bundles with connections.

\begin{definition}
  Let $\mathfrak{a}$ be an $L_\infty$-algebroid.
  An \emph{$L_\infty$-cocycle} on $\mathfrak{a}$ is an element
  $\mu \in \mathrm{CE}(\mathfrak{a})$ which is closed.
\end{definition}
\begin{examples}
  \begin{itemize}
    \item For $\mathfrak{a} = b \mathfrak{g}$ the delooping of an 
    ordinary Lie algebra, $L_\infty$-cocycles on $\mathfrak{a}$
    are precisely traditional Lie algebra cocycles on $\mathfrak{g}$.
    \item
      For $n \in \mathbb{N}$ and $\mathfrak{a} = b^n \mathbb{R}$
      the delooping of the line Lie $n$-algebra, there is, up to scale,
      precisely one nontrivial $L_\infty$-cocycle on $\mathfrak{a}$, 
      which is in degree $n$.
  \end{itemize}
\end{examples}
\begin{definition}
  \label{InvariantPolynomial}  
  An \emph{invariant polynomial} on $\mathfrak{a}$
  is an element  $\langle - \rangle$ in $\mathrm{W}(\mathfrak{a})$ which is
  \begin{enumerate}
    \item closed: $d_{\mathrm{W}(\mathfrak{a})} \langle - \rangle = 0$;
    \item horizontal: an element of the subalgebra 
     generated by the shifted elements in the Weil algebra.
  \end{enumerate}
\end{definition}
\begin{examples}
  \label{ExamplesOfInvariantPolynomials}
  \begin{itemize}
     \item 
    For $\mathfrak{a} = b \mathfrak{g}$ the delooping of an 
    ordinary Lie algebra, one readily
     checks that the above definition reproduces the traditional
     definition of invariant polynomials.
     \item
       For $\mathfrak{a} = \Sigma$ a 0-Lie algebroid (a smooth manifold),
       an invariant polynomial is a closed differential form of 
       positive degree.
     \item
       For $n \in \mathbb{N}$ and $\mathfrak{a} = b^n \mathbb{R}$
       the delooping of the line Lie $n$-algebra, there is 
       a 1-dimensional vector space of invariant polynomials
       of degree $(n+1)$ and every other homogeneous invariant polynomial
       is a wedge power of these. In particular for even $n$
       all further invariant polynomials vanish.
  \end{itemize}
\end{examples}
%
\begin{definition}
  \label{TransgressionAndCSElements}
  For $\langle -\rangle\in \mathrm{W}(\mathfrak{a})$ an invariant
  polynomial on an $L_\infty$-algebroid $\mathfrak{a}$,
  we say a cocycle $\mu \in \mathrm{CE}(\mathfrak{a})$
  is \emph{in transgression} with $\langle -\rangle$ if
  there exists an element 
  $\mathrm{cs}$ in $\mathrm{W}(\mathfrak{a})$ such that
  \begin{enumerate}
    \item
       $d_{\mathrm{W}(\mathfrak{a})} \mathrm{cs} = \langle -\rangle$;
    \item
       $i^* \mathrm{cs} = \mu$.
  \end{enumerate}
  We say that $\mathrm{cs}$ is a
  \emph{transgression element} or 
  \emph{Chern-Simons element} witnessing this transgression.
\end{definition}

As we noticed above, if we look at an ordinary smooth manifold $\Sigma$ as an $L_\infty$-algebroid, then the Weil algebra of $\Sigma$ is the de Rham algebra $\Omega^\bullet(\Sigma)$. This motivates the following definition.
 \begin{definition}
  \label{definition.a-valued-differential-form}
  For $\mathfrak{a}$ an $L_\infty$-algebroid
  and $\Sigma$ a smooth manifold, we say a morphism
  $$
   A:\mathrm{W}(\mathfrak{a})\to \Omega^\bullet(\Sigma)  $$
  is a degree 1 \emph{$\mathfrak{a}$-valued differential form} on $\Sigma$.
  \end{definition}
  \begin{remark}
  The name ``degree 1 $\mathfrak{a}$-valued differential forms'' given to dgca morphisms $\mathrm{W}(\mathfrak{a})\to \Omega^\bullet(\Sigma)$ has the following origin: if $\mathfrak{g}$ is a Lie algebra, then the Weil algebra $\mathrm{W}(b\mathfrak{g})$ is the free differential graded commutative algebra generated by a shifted copy $\mathfrak{g}^*[-1]$ of the linear dual of $\mathfrak{g}$. Hence a dgca morphism $\mathrm{W}(b\mathfrak{g})\to \Omega^\bullet(\Sigma)$ is precisely the datum of a morphism of graded vector spaces $\mathfrak{g}^*[-1]\to \Omega^\bullet(\Sigma)$, i.e., an element of $\Omega^1(\Sigma;\mathfrak{g})$.
   
  \end{remark}
  We say that an $\mathfrak{a}$-valued differential form $A$ is \emph{flat} if the  morphism $A:\mathrm{W}(\mathfrak{a})\to \Omega^\bullet(\Sigma)$ factors through
  $i^* : \mathrm{W}(\mathfrak{a}) \to \mathrm{CE}(\mathfrak{a})$.
   The \emph{curvature} of $A$ is the induced morphism of 
  graded vector spaces given by the composite
 $$
    \xymatrix{
      \Omega^\bullet(\Sigma)
      \ar@{<-}[r]^{A}
      &
      \mathrm{W}(\mathfrak{a})
      \ar@{<-}[r]
      &
      \wedge^1 V[1]
      : 
      F_A
      \,,
    }
  $$
  where the morphism on the right is the inclusion of the linear subspace of the shifted generators into the Weil algebra. $A$ is flat precisely if 
$F_A = 0$.
\begin{remark}
  \label{ImagesOfdAnddWUnderForms}
   For $\{x^a\}$ a coordinate chart of $\mathfrak{a}$
  and
  $$
    A^a := A(x^a) \in \Omega^{\mathrm{deg}(x^a)}(\Sigma)
  $$
  the differential form assigned to the generator $x^a$ by 
  the $\mathfrak{a}$-valued form $A$, we have the 
  curvature components
  $$
    F_A^a = A(\mathbf{d}x^a) \in \Omega^{\mathrm{deg}(x^a)+1}(\Sigma)
    \,.
  $$
  Since $d_{\mathrm{W}}=d_{\mathrm{CE}}+\mathbf{d}$, this can be equivalently written as
  $$
    F_A^a = A(d_{\mathrm{W}}x^a-d_{\mathrm{CE}}x^a)
    \,,
  $$
  so the \emph{curvature} of $A$ precisely measures the ``lack of flatness'' of $A$.
  Also notice that, since $A$ is required to be 
  a dg-algebra homomorphism, we have
  $$
    A(d_{\mathrm{W}(\mathfrak{a})} x^a) = d_{\mathrm{dR}} A^a
    \,,
  $$
   so that
  $$
    A(d_{\mathrm{CE}(\mathfrak{a})} x^a) = d_{\mathrm{dR}} A^a- F_A^a
    \,.
  $$ 
   
\end{remark}

Assume now $A$ is a degree 1 $\mathfrak{a}$-valued differential form on the smooth manifold $\Sigma$, and that $\mathrm{cs}$ is a Chern-Simons element transgressing an invariant polynomial $\langle-\rangle$ of $\mathfrak{a}$ to some cocycle $\mu$. We can then consider the image $A(\mathrm{cs})$ of the Chern-Simons element $\mathrm{cs}$ in $\Omega^\bullet(\Sigma)$. Equivalently, we can look at $\mathrm{cs}$ as a map from degree 1 $\mathfrak{a}$-valued differential forms on $\Sigma$ to ordinary (real valued) differential forms on $\Sigma$.
\begin{definition}
\label{definition.chern-simons-form} 
In the notations above, we write 
$$
  \xymatrix{
    \Omega^\bullet(\Sigma)
     \ar@{<-}[r]^{A} &
    \mathrm{W}(\mathfrak{a})
     \ar@{<-}[r]^{\mathrm{cs}} &
    \mathrm{W}(b^{n+1}\mathbb{R})
  }
  :
  \mathrm{cs}(A)
$$ 
for the differential form associated by the Chern-Simons element $\mathrm{cs}$ to the degree 1 $\mathfrak{a}$-valued differential form $A$, 
and call this the \emph{Chern-Simons differential form} associated with $A$. 

Similarly, for $\langle -\rangle$ an invariant polynomial on 
$\mathfrak{a}$, we write $\langle F_A \rangle$
for the evaluation
$$
  \xymatrix{
    \Omega^\bullet_{\mathrm{closed}}(\Sigma)
     \ar@{<-}[r]^{A} &
    \mathrm{W}(\mathfrak{a})
     \ar@{<-}[r]^{\langle -\rangle} &
    \mathrm{inv}(b^{n+1}\mathbb{R})
  }
  :
  \langle F_A \rangle
  \,.
$$ 
We call this the \emph{curvature characteristic form}
of $A$ with respect to $\langle-\rangle$.
\end{definition}

\subsection{Symplectic Lie $n$-algebroids}

We now consider $L_\infty$-algebroids that are equipped with 
certain natural extra structure (symplectic structure) and show how this
extra structure canonically induces an invariant polynomial
and hence by observation \ref{LagrangianFromCW} a
$\sigma$-model field theory. In the next section we 
demonstrate that the field theories arising this way
are precisely the AKSZ $\sigma$-models.

\begin{definition}
 \label{SymplecticLooAlgebroid}
A \emph{symplectic Lie $n$-algebroid}
\index{$L_\infty$-algebroid!symplectic Lie $n$-algebroid} 
\index{symplectic Lie $n$-algebroid}
$(\mathfrak{P}, \omega)$ is a 
Lie $n$-algebroid $\mathfrak{P}$ equipped with a quadratic non-degenerate invariant 
polynomial $\omega \in W(\mathfrak{P})$ of degree $n+2$.
\end{definition}
This means that 
\begin{itemize}
\item on each chart $U \to X$ of the base manifold $X$ of 
$\mathfrak{P}$, there is a 
basis $\{x^a\}$ for $\mathrm{CE}(\mathfrak{a}|_U)$ such that
$$
  \omega =  \frac{1}{2}\mathbf{d}x^a \,\omega_{a b}\wedge \mathbf{d}x^b
$$
with $\{\omega_{a b} \in \mathbb{R} \hookrightarrow C^\infty(X)\}$ 
and $\mathrm{deg}(x^a) + \mathrm{deg}(x^b) = n$;
\item the coefficient matrix $\{\omega_{a b}\}$ has an inverse;
\item we have
$$
  d_{\mathrm{W}(\mathfrak{P})} \omega
  = 
  d_{\mathrm{CE}(\mathfrak{P})} \omega 
  +
  \mathbf{d} \omega = 0
  \,.
$$
\end{itemize}
The following observation essentially goes back to \cite{Severa}
and \cite{RoytenbergCourant}.
\begin{proposition}
  \label{SymplDgSpaceAsLAlgd}
  There is a full and faithful embedding of symplectic dg-manifolds
  of grade $n$
  into symplectic Lie $n$-algebroids.
\end{proposition}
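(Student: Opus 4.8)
The plan is to exhibit the embedding as essentially the identity on underlying objects, transporting the symplectic form across the equivalence $L_\infty\mathrm{Algd}\simeq\mathrm{SmoothDgMfd}$ of def.~\ref{LInfinityAlgebroids}, and then to match the four conditions of def.~\ref{SymplecticDgManifold} against those of def.~\ref{SymplecticLooAlgebroid} one at a time. First I would record the bookkeeping of grade versus cohomological degree. Writing $\omega=\tfrac12\,\omega_{ab}\,\mathbf{d}x^a\wedge\mathbf{d}x^b$ and recalling from prop.~\ref{WeilAlgebraExplicit} that $\mathrm{W}(\mathfrak{P})=\Omega^\bullet(\mathfrak{P})$ as a graded algebra, each $\mathbf{d}x^a$ has the same grade as $x^a$ but cohomological degree one higher. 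Hence a $2$-form $\omega$ of grade $n$ corresponds to an element of $\mathrm{W}(\mathfrak{P})$ of cohomological degree $n+2$ which, being a $2$-form, is quadratic in the de Rham generators $\mathbf{d}x^a$; this is exactly the numerology in def.~\ref{SymplecticLooAlgebroid}, and non-degeneracy of $\omega$ as a bilinear form is visibly the same condition as invertibility of $(\omega_{ab})$.

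Next I would show that grade-$n$ non-degeneracy already forces the underlying dg-manifold to be a \emph{Lie $n$-algebroid}, i.e.\ to have generators only in grades $0,\dots,n$. Indeed, a generator $x^c$ of grade $m$ yields a derivation $\partial/\partial x^c$ of grade $-m$, so $\iota_{\partial/\partial x^c}\omega$ has grade $n-m$; since $\Omega^\bullet(\mathfrak{P})$ carries only non-negative grades, $m>n$ would force $\iota_{\partial/\partial x^c}\omega=0$, contradicting non-degeneracy. This structural point is the heart of the statement and is the observation of \cite{Severa,RoytenbergCourant}.

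Then I would identify the two remaining conditions using that $d_{\mathrm{W}(\mathfrak{P})}=\mathbf{d}+\mathcal{L}_v$ splits into a piece $\mathbf{d}$ that raises the de Rham form-degree by one and a piece $\mathcal{L}_v=[\iota_v,\mathbf{d}]$ that preserves it. Applied to the $2$-form $\omega$, the equation $d_{\mathrm{W}(\mathfrak{P})}\omega=0$ decomposes into its form-degree-$3$ and form-degree-$2$ components, which must vanish separately; these are precisely $\mathbf{d}\omega=0$ and $\mathcal{L}_v\omega=0$. Thus $d_{\mathrm{W}(\mathfrak{P})}$-closedness of the invariant polynomial is equivalent to closedness together with $v$-invariance of the symplectic form, matching the last two bullets of def.~\ref{SymplecticDgManifold}.

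The hard part will be horizontality, i.e.\ that $\omega$ lies in the subalgebra of $\mathrm{W}(\mathfrak{P})$ generated by the shifted generators; this is the one axiom of def.~\ref{InvariantPolynomial} that is not formal, since a priori the coefficients $\omega_{ab}$ may depend on the unshifted positive-grade coordinates. Here I would invoke the homogeneity supplied by the Euler vector field: grade-$n$ homogeneity forces $\mathrm{deg}(\omega_{ab})=n-\mathrm{deg}(x^a)-\mathrm{deg}(x^b)$, so the coefficients pairing complementary grades ($\mathrm{deg}(x^a)+\mathrm{deg}(x^b)=n$) are functions on the base, and the graded Darboux theorem of \cite{RoytenbergCourant} lets one choose, on each chart, a homogeneous basis in which all $\omega_{ab}$ are constant; in such a basis $\omega=\tfrac12\,\omega_{ab}\,\mathbf{d}x^a\wedge\mathbf{d}x^b$ with $\omega_{ab}\in\mathbb{R}$ is manifestly horizontal, reproducing the local description following def.~\ref{SymplecticLooAlgebroid}. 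Finally, full faithfulness is inherited from the underlying equivalence $\mathrm{SmoothDgMfd}\simeq L_\infty\mathrm{Algd}$: a structure-preserving morphism is a dg-map whose pullback on forms carries one symplectic form to the other, which under the identification is exactly a morphism of $L_\infty$-algebroids pulling back the associated invariant polynomial, so the two preservation conditions coincide.
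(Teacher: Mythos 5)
Your proof is correct and takes essentially the same route as the paper's: identify the dg-manifold with an $L_\infty$-algebroid via the equivalence of def.~\ref{LInfinityAlgebroids}, transport $\omega$ through $\Omega^\bullet(X)\simeq \mathrm{W}(\mathfrak{a})$ using that $\mathbf{d}\omega=0$ and $\mathcal{L}_v\omega=0$ give $d_{\mathrm{W}(\mathfrak{a})}\omega=0$ (your form-degree decomposition sharpens this implication to an equivalence), and derive the Lie $n$-algebroid bound from grade-$n$ homogeneity plus non-degeneracy by exactly the paper's contraction argument. The only difference is one of thoroughness: the paper's proof leaves horizontality and full faithfulness implicit, whereas you discharge them explicitly (via the graded Darboux theorem of \cite{RoytenbergCourant} and the underlying equivalence of categories, respectively), which is a completion of, not a departure from, the paper's argument.
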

\proof
  The dg-manifold itself is identified with an $L_\infty$-algebroid
  by def. \ref{LInfinityAlgebroids}.
  For $\omega \in \Omega^2(X)$ a symplectic form, the conditions
  $\mathbf{d} \omega = 0$ and $\mathcal{L}_v \omega = 0$
  imply $(\mathbf{d}+ \mathcal{L}_v)\omega = 0$ 
  and hence that under the identification 
  $\Omega^\bullet(X) \simeq \mathrm{W}(\mathfrak{a})$ 
  this is an invariant polynomial on $\mathfrak{a}$.
  
  It remains to observe that the $L_\infty$-algebroid $\mathfrak{a}$
  is in fact a Lie $n$-algebroid. This is implied by the 
  fact that $\omega$ is of grade $n$ and non-degenerate: 
  the former condition implies that it has no components in elements
  of grade
  $> n$ and the latter then implies that all such elements vanish.
\endofproof
The following characterization may be taken as a definition 
of Poisson Lie algebroids and Courant Lie 2-algebroids.
\begin{proposition}
  \label{SymplecticPoissonCourant}
  Symplectic Lie $n$-algebroids are equivalently:
  \begin{itemize}
    \item for $n = 0$: ordinary symplectic manifolds;
    \item for $n = 1$: Poisson Lie algebroids;
    \item for $n = 2$: Courant Lie 2-algebroids.
  \end{itemize}
\end{proposition}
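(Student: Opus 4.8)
The plan is to reduce all three cases to one uniform local analysis of a homological vector field on a symplectic graded manifold, and then to read off degree by degree the classical geometry it encodes. By Proposition~\ref{SymplDgSpaceAsLAlgd} it suffices to treat a symplectic dg-manifold $(X,v,\omega)$ of grade $n$. First I would observe that $v$ is automatically Hamiltonian: the $v$-invariance $\mathcal{L}_v\omega=0$ together with $\mathbf{d}\omega=0$ gives $\mathbf{d}\iota_v\omega=\mathcal{L}_v\omega-\iota_v\mathbf{d}\omega=0$, so $\iota_v\omega$ is a closed $1$-form of positive grade $n+1$ and hence exact by the graded Poincar\'e lemma (Observation~\ref{GradedPoincareLemma}). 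Thus there is $\pi\in C^\infty(X)$ of grade $n+1$ with $\mathbf{d}\pi=\iota_v\omega$, i.e.\ a Hamiltonian in the sense of def.~\ref{Hamiltonians}. Under the Poisson bracket induced by $\omega$ (which lowers grade by $n$) the passage to Hamiltonian vector fields intertwines $\{-,-\}$ with the graded commutator of vector fields (up to sign), with kernel the constants and $v=X_\pi$; hence the nilpotency $[v,v]=2v^2=0$ is equivalent to $\{\pi,\pi\}$ being constant, and since $\{\pi,\pi\}$ has positive grade $n+2$ this forces the \emph{classical master equation} $\{\pi,\pi\}=0$. In summary, a symplectic Lie $n$-algebroid is the same datum as a graded symplectic vector bundle of grade $n$ together with a grade-$(n+1)$ function $\pi$ solving $\{\pi,\pi\}=0$, and the three stated equivalences amount to unwinding this datum for $n=0,1,2$.

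For $n=0$ the Chevalley--Eilenberg algebra has generators only in degree $0$, so $v=0$, $\pi$ is (locally) constant, and $\omega$ is just an ordinary closed non-degenerate $2$-form on the base: an ordinary symplectic manifold. For $n=1$ the generators sit in degrees $0$ and $1$, and non-degeneracy of the grade-$1$ form pairs the degree-$0$ coordinates $x^i$ with the degree-$1$ ones, identifying the graded manifold with $T^*[1]X_0$. Here the grade-$2$ Hamiltonian $\pi$ is a bivector field $\pi\in\Gamma(\wedge^2 TX_0)$, and the master equation $\{\pi,\pi\}=0$ becomes the vanishing of the Schouten bracket $[\pi,\pi]=0$, i.e.\ the Jacobi identity making $\pi$ a Poisson tensor; the differential $\{\pi,-\}$ is then exactly the one appearing in the Poisson Lie algebroid $\mathfrak{P}(X_0)$ described after def.~\ref{LInfinityAlgebroids}. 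This settles $n=0$ and $n=1$.

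The substantive case, and the one I expect to be the main obstacle, is $n=2$. Here I would pass to a Darboux chart (the graded Darboux theorem, cf.~\cite{Severa,RoytenbergCourant}) with generators in degrees $0,1,2$: non-degeneracy pairs the degree-$1$ generators among themselves, equipping the degree-$1$ bundle $E\to X_0$ with a fiberwise non-degenerate symmetric pairing $\langle-,-\rangle$, and pairs the degree-$0$ with the degree-$2$ generators. The grade-$3$ Hamiltonian $\pi$ then splits into a piece linear in the degree-$2$ and the degree-$1$ coordinates, encoding an \emph{anchor} $E\to TX_0$, plus a piece cubic in the degree-$1$ coordinates, encoding a bracket on $\Gamma(E)$. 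The remaining and most laborious step is to expand $\{\pi,\pi\}=0$ into its homogeneous components and verify that these reproduce, term by term, the Courant-algebroid axioms: the Leibniz rule, the compatibility of the anchor with the pairing and the bracket, and the Jacobiator identity for the Courant bracket. This expansion, together with the reconstruction of $(E,\langle-,-\rangle,\text{anchor},\text{bracket})$ from $(\omega,\pi)$ and the matching of morphisms, is precisely the \v{S}evera--Roytenberg correspondence \cite{Severa,RoytenbergCourant}; as the sentence preceding the proposition notes, one may indeed adopt this identification as the \emph{definition} of a Courant Lie $2$-algebroid, whereupon the $n=1$ and $n=2$ statements reduce to checking that our dg-data coincide with the standard definitions.
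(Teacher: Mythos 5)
Your proposal is correct, but note that it does substantially more than the paper itself does: the paper offers no proof of this proposition at all. It frames the statement as a characterization that ``may be taken as a definition'' of Poisson Lie algebroids and Courant Lie 2-algebroids, and simply points to \cite{RoytenbergCourant, Severa}. Your route --- reformulating a symplectic Lie $n$-algebroid as a graded symplectic manifold of grade $n$ together with a grade-$(n+1)$ function $\pi$ satisfying the master equation $\{\pi,\pi\}=0$, and then unwinding this datum for $n=0,1,2$ --- is essentially Roytenberg's argument from the cited literature, and it meshes well with the paper's surrounding machinery: your Poincar\'e-lemma construction of $\pi$ reproduces the explicit Hamiltonian $\frac{1}{n+1}\iota_\epsilon\iota_v\omega$ of proposition \ref{CanonicalCocycleOfSymplecticLieAlgebroid}, and your master-equation reformulation makes transparent why the $n=1$ case is the Jacobi identity for a bivector. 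Two small caveats. First, your opening reduction ``by proposition \ref{SymplDgSpaceAsLAlgd} it suffices to treat a symplectic dg-manifold'' uses more than that proposition literally states (it asserts only a full and faithful embedding); you should add the one-line converse that the embedding is essentially surjective, which follows by splitting $d_{\mathrm{W}(\mathfrak{P})}\omega = \mathbf{d}\omega + \mathcal{L}_v\omega = 0$ into its homogeneous form-degree components, so that a quadratic non-degenerate invariant polynomial is the same thing as a $v$-invariant symplectic form of grade $n$. Second, with the paper's convention $\mathcal{L}_v=[\iota_v,\mathbf{d}]$ your Cartan identity should read $\mathbf{d}\iota_v\omega=\iota_v\mathbf{d}\omega-\mathcal{L}_v\omega$; the overall sign is harmless since both terms vanish, so the conclusion $\mathbf{d}\iota_v\omega=0$ stands. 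With these caveats, your $n=0$ and $n=1$ cases are complete, and your $n=2$ case defers the expansion of $\{\pi,\pi\}=0$ into the Courant-algebroid axioms to \cite{Severa,RoytenbergCourant} --- exactly the same deferral the paper makes, except that you actually carry out the reduction to that point, which is a genuine improvement over the paper's bare citation.
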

See \cite{RoytenbergCourant, Severa} for more discussion.
\begin{proposition}
 \label{CanonicalCocycleOfSymplecticLieAlgebroid}
Let $(\mathfrak{P},\omega)$ be a symplectic Lie $n$-algebroid 
for positive $n$ in the image of the embedding of 
proposition \ref{SymplDgSpaceAsLAlgd}.
Then it carries the canonical $L_\infty$-algebroid cocycle 
 $$
   \pi := \frac{1}{n+1} \iota_\epsilon \iota_v \omega
   \in 
   \mathrm{CE}(\mathfrak{P})
 $$ 
 which moreover is the Hamiltonian, 
 according to definition \ref{Hamiltonians}, of $d_{\mathrm{CE}(\mathfrak{P})}$.
\end{proposition}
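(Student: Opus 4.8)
The plan is to check, in order, the three claims implicit in the statement: that $\pi$ really lies in $\mathrm{CE}(\mathfrak P)$, that $\pi$ is a Hamiltonian for $v := d_{\mathrm{CE}(\mathfrak P)}$ in the sense of Definition~\ref{Hamiltonians}, and that $\pi$ is $d_{\mathrm{CE}(\mathfrak P)}$-closed. I work throughout inside the Weil algebra $\mathrm W(\mathfrak P) = \Omega^\bullet(\mathfrak P)$, whose differential is $d_{\mathrm W} = \mathbf d + \mathcal L_v$ by Proposition~\ref{WeilAlgebraExplicit}, and I use the two defining properties of the symplectic form coming from Proposition~\ref{SymplDgSpaceAsLAlgd} and Definition~\ref{SymplecticDgManifold}: closedness $\mathbf d\omega = 0$ and $v$-invariance $\mathcal L_v\omega = 0$. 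Everything then reduces to graded Cartan calculus with the Euler field $\epsilon$, the homological field $v$, and the identity $[v,v]=0$.

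First, the grade bookkeeping, which also gives the first claim. Since $v$ has grade $1$, contraction $\iota_v$ raises grade by $1$ while lowering de Rham degree by $1$, whereas $\iota_\epsilon$ preserves grade and lowers de Rham degree by $1$. As $\omega$ has grade $n$ and de Rham degree $2$, the form $\iota_v\omega$ has grade $n+1$ and de Rham degree $1$, and $\iota_\epsilon\iota_v\omega$ is a function of grade $n+1$. Hence $\pi = \frac1{n+1}\iota_\epsilon\iota_v\omega$ is a genuine element of $\mathrm{CE}(\mathfrak P) \subset \mathrm W(\mathfrak P)$, and positivity of $n$ guarantees $n+1 \neq 0$ and that all grades appearing below are positive.

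For the Hamiltonian property I mimic the proof of the graded Poincar\'e lemma, Observation~\ref{GradedPoincareLemma}. Applying $\mathcal L_\epsilon = [\iota_\epsilon,\mathbf d]$ to the grade-$(n+1)$ form $\iota_v\omega$ gives
$$
  (n+1)\,\iota_v\omega
  = \iota_\epsilon\,\mathbf d(\iota_v\omega) + \mathbf d\,\iota_\epsilon(\iota_v\omega).
$$
Here $\mathbf d(\iota_v\omega) = 0$: expanding $\mathcal L_v\omega = [\iota_v,\mathbf d]\omega$ and using $\mathbf d\omega = 0$ together with $\mathcal L_v\omega = 0$ leaves $\mathbf d\iota_v\omega = 0$. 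The remaining term is $\mathbf d(\iota_\epsilon\iota_v\omega) = (n+1)\mathbf d\pi$, so dividing by $n+1$ yields $\mathbf d\pi = \iota_v\omega$, which is exactly Definition~\ref{Hamiltonians} with Hamiltonian vector field $v = d_{\mathrm{CE}(\mathfrak P)}$ (in Darboux coordinates this is the relation $\omega_{ab}v^b = \partial\pi/\partial x^a$ of Remark~\ref{PartialPiByV}).

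It remains to show $d_{\mathrm{CE}(\mathfrak P)}\pi = 0$. On functions the Chevalley--Eilenberg differential agrees with $\mathcal L_v$ (this is how $i^*$ and Proposition~\ref{WeilAlgebraExplicit} fit together), so by the Hamiltonian property $d_{\mathrm{CE}(\mathfrak P)}\pi = \mathcal L_v\pi = \iota_v\mathbf d\pi = \iota_v\iota_v\omega$. I would \emph{not} expand this in coordinates: there it becomes the graded contraction $\omega_{ab}v^a v^b$, whose vanishing demands a fussy matching of the graded antisymmetry of $\omega_{ab}$ against the graded symmetry of $v^a v^b$, and this sign juggling is the one genuinely delicate point. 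Instead I deduce vanishing indirectly: $d_{\mathrm{CE}(\mathfrak P)}\pi$ is a function of positive grade, and it is $\mathbf d$-closed, since $\mathbf d\,\mathcal L_v\pi = -\,\mathcal L_v\mathbf d\pi = -\,\mathcal L_v\iota_v\omega$, which vanishes because $\mathcal L_v$ and $\iota_v$ graded-commute by $[\mathcal L_v,\iota_v] = \iota_{[v,v]} = 0$ and $\mathcal L_v\omega = 0$. A $\mathbf d$-closed function of positive grade must be zero (apply $\mathcal L_\epsilon = \iota_\epsilon\mathbf d$ to it, as in Observation~\ref{GradedPoincareLemma}), so $d_{\mathrm{CE}(\mathfrak P)}\pi = 0$ and $\pi$ is a cocycle. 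The main obstacle is thus not any single hard step but keeping the grade bookkeeping (hence the factor $n+1$) and the graded signs straight; routing the cocycle condition through $[v,v]=0$ rather than a direct coordinate computation is what keeps the argument clean.
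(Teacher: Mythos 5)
Your proof is correct, and it is in fact more complete than the paper's own. For the Hamiltonian property the two arguments are the same graded Cartan calculus in slightly different packaging: the paper computes $\mathbf{d}\,\iota_\epsilon\iota_v\omega$ directly, swapping the two contractions and using $[\mathcal{L}_v,\iota_\epsilon]=\iota_{[v,\epsilon]}=-\iota_v$ together with $\mathcal{L}_\epsilon\omega=n\omega$ to land on $(n+1)\iota_v\omega$, whereas you first observe $\mathbf{d}\,\iota_v\omega=0$ and then invoke the graded Poincar\'e lemma (Observation \ref{GradedPoincareLemma}) for the closed grade-$(n+1)$ form $\iota_v\omega$; these are interchangeable, and your packaging makes the normalization $\tfrac{1}{n+1}$ self-explanatory.

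Where you genuinely add something is the cocycle claim. The paper's proof stops once $\mathbf{d}\pi=\iota_v\omega$ is established and never verifies $d_{\mathrm{CE}(\mathfrak{P})}\pi=0$, even though that is precisely what makes $\pi$ an $L_\infty$-cocycle; that step is left to the reader. Your argument fills this gap cleanly and without coordinates: $d_{\mathrm{CE}(\mathfrak{P})}\pi=\mathcal{L}_v\pi=\iota_v\iota_v\omega$ is a function of positive grade $n+2$ which is $\mathbf{d}$-closed, because $\mathbf{d}$ and $\mathcal{L}_v$ anticommute (both being odd) and $\mathcal{L}_v\iota_v\omega=\iota_v\mathcal{L}_v\omega=0$ by $[\mathcal{L}_v,\iota_v]=\iota_{[v,v]}=0$; contracting with the Euler field then forces any $\mathbf{d}$-closed function of positive grade to vanish. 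This step is not a formality: since $v$ is odd, $\iota_v$ is an \emph{even} operator, so $\iota_v\iota_v\omega$ has no a priori reason to be zero, and routing its vanishing through $[v,v]=0$ and the positive-grade argument (rather than a delicate coordinate sign count, or silence) is exactly the right move.
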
 
 \proof Since $\mathbf{d}\omega=\mathcal{L}_v\omega=0$, we have
 $$
 \begin{aligned} 
\mathbf{d} \iota_\epsilon \iota_v \omega
&= \mathbf{d} \iota_v \iota_\epsilon  \omega\\
&=(\iota_v\mathbf{d}-\mathcal{L}_v)\iota_\epsilon  \omega\\
&=\iota_v\mathcal{L}_\epsilon\omega-[\mathcal{L}_v,\iota_\epsilon]\omega\\
&=n\iota_v\omega-\iota_{[v,\epsilon]}\omega\\
&=(n+1)\iota_v\omega,
\end{aligned}
$$
where Cartan's formula $[\mathcal{L}_v,\iota_\epsilon]=\iota_{[v,\epsilon]}$ and the identity $[v,\epsilon]=-[\epsilon,v]=-v$ have been used. Therefore $\pi:=\frac{1}{n+1}\iota_\epsilon \iota_v \omega$ satisfies the defining equation $\mathbf{d}\pi=\iota_v\omega$
from definition \ref{Hamiltonians}.
\endofproof

\begin{remark}
  \label{remark.local_hamiltonian}
On a local chart 
with coordinates $\{x^a\}$ 
we have
$$
    \pi\bigr\vert_U 
    =
    \frac{1}{n+1}\omega_{ab}\;\deg(x^a) x^a\, \wedge v^b
  \,.
$$
\end{remark}
Our central observation now is the following.
\begin{proposition}
 \label{TheCSElement}
The cocycle $\frac{1}{n} \pi$ 
from prop. \ref{CanonicalCocycleOfSymplecticLieAlgebroid}
is in transgression with 
the invariant polynomial $\omega$.
A Chern-Simons element witnessing the transgression 
according to def. \ref{TransgressionAndCSElements} is
$$
  \mathrm{cs} = \frac{1}{n}\left(\iota_\epsilon \omega + \pi\right)
  \,.
$$
\end{proposition}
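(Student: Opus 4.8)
The plan is to verify directly the two defining conditions of a Chern-Simons element from definition \ref{TransgressionAndCSElements} for the proposed $\mathrm{cs} = \frac{1}{n}(\iota_\epsilon\omega + \pi)$: namely that $i^*\mathrm{cs} = \frac{1}{n}\pi$ and that $d_{\mathrm{W}(\mathfrak{P})}\mathrm{cs} = \omega$. Throughout I would use the explicit form of the Weil differential $d_{\mathrm{W}(\mathfrak{P})} = \mathbf{d} + \mathcal{L}_v$ from proposition \ref{WeilAlgebraExplicit}, the fact that $\omega$ is an invariant polynomial (so that $\mathbf{d}\omega = 0$ and $\mathcal{L}_v\omega = 0$ hold separately), that $\omega$ has grade $n$ (so that $\mathcal{L}_\epsilon\omega = n\omega$), and the Hamiltonian identity $\mathbf{d}\pi = \iota_v\omega$ together with the relation $[v,\epsilon] = -v$ already established in the proof of proposition \ref{CanonicalCocycleOfSymplecticLieAlgebroid}.

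The restriction condition is the easy one and I would dispatch it first. Since $\pi \in \mathrm{CE}(\mathfrak{P})$ already, one has $i^*\pi = \pi$. For the other summand, $\omega$ is horizontal and quadratic, hence a sum of monomials each carrying exactly two shifted generators $\mathbf{d}x^a$; contracting with the Euler field, $\iota_\epsilon\mathbf{d}x^a = \deg(x^a)x^a$, turns precisely one such generator back into a base coordinate, so every term of $\iota_\epsilon\omega$ still retains one shifted generator. As $i^*$ annihilates the shifted generators, $i^*\iota_\epsilon\omega = 0$, and therefore $i^*\mathrm{cs} = \frac{1}{n}\pi$, as required.

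The transgression equation is the computational heart of the argument. Expanding $d_{\mathrm{W}(\mathfrak{P})}\mathrm{cs} = \frac{1}{n}(\mathbf{d}+\mathcal{L}_v)(\iota_\epsilon\omega + \pi)$, I would treat the four pieces by Cartan calculus. On the $\iota_\epsilon\omega$ summand, Cartan's formula for the even field $\epsilon$ together with $\mathbf{d}\omega = 0$ gives $\mathbf{d}\iota_\epsilon\omega = \mathcal{L}_\epsilon\omega = n\omega$, while $[\mathcal{L}_v,\iota_\epsilon] = \iota_{[v,\epsilon]}$ and $\mathcal{L}_v\omega = 0$ give $\mathcal{L}_v\iota_\epsilon\omega = \iota_{[v,\epsilon]}\omega = -\iota_v\omega$. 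On the $\pi$ summand, the Hamiltonian property yields $\mathbf{d}\pi = \iota_v\omega$, and since $\pi$ is a function one finds $\mathcal{L}_v\pi = \iota_v\mathbf{d}\pi = \iota_v\iota_v\omega = \tfrac12\iota_{[v,v]}\omega = 0$, using $[v,v] = 2v^2 = 0$. Summing, the two cross terms $\mp\iota_v\omega$ cancel and only $n\omega$ survives, so $d_{\mathrm{W}(\mathfrak{P})}\mathrm{cs} = \frac{1}{n}\cdot n\omega = \omega$.

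The main obstacle I anticipate is purely bookkeeping in this last step: keeping the Koszul signs straight for the odd homological field $v$ (so that $\mathcal{L}_v = \iota_v\mathbf{d} - \mathbf{d}\iota_v$ in the convention used in the proof of proposition \ref{CanonicalCocycleOfSymplecticLieAlgebroid}), and ensuring the coefficient $\frac{1}{n}$ matches the grade-$n$ normalization $\mathcal{L}_\epsilon\omega = n\omega$ rather than the $\frac{1}{n+1}$ appearing in the definition of $\pi$. Everything else is formal once the identities $\mathbf{d}\omega = \mathcal{L}_v\omega = 0$, $[v,\epsilon] = -v$, and $\mathbf{d}\pi = \iota_v\omega$ are in hand.
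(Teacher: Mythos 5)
Your strategy and almost every step coincide with the paper's own proof: the verification of $i^*\mathrm{cs} = \tfrac{1}{n}\pi$ (your expansion of why $i^*\iota_\epsilon\omega = 0$ correctly fills in what the paper dismisses as ``clear''), and the computation $d_{\mathrm{W}(\mathfrak{P})}\,\iota_\epsilon\omega = (\mathbf{d}+\mathcal{L}_v)\iota_\epsilon\omega = n\omega - \iota_v\omega = n\omega - \mathbf{d}\pi$ is exactly the paper's. The genuine flaw is in your treatment of the second summand. You need $\mathcal{L}_v\pi = 0$ (so that $d_{\mathrm{W}(\mathfrak{P})}\pi = \mathbf{d}\pi$), and you justify it by the chain $\mathcal{L}_v\pi = \iota_v\mathbf{d}\pi = \iota_v\iota_v\omega = \tfrac{1}{2}\iota_{[v,v]}\omega = 0$. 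The third equality is not an identity at all: $\iota_v\iota_v\omega$ is a function (form degree $0$), whereas $[v,v] = 2v^2$ is again a \emph{vector field}, so $\iota_{[v,v]}\omega$ is a $1$-form --- the two sides live in different form degrees. The structural point is that since $v$ has grade $1$, the operator $\iota_v$ has total degree $0$ and is \emph{even}; its square is therefore not forced to vanish and is not expressible as a contraction along $[v,v]$. The only Cartan identity involving $[v,v]$ is $[\mathcal{L}_v,\iota_v] = \iota_{[v,v]}$, which here only says $\mathcal{L}_v\iota_v = \iota_v\mathcal{L}_v$. Concretely, in Darboux coordinates $\iota_v\iota_v\omega = \omega_{ab}v^a v^b$, and for $\mathfrak{P} = b\mathfrak{g}$ the vanishing of this expression is the Jacobi identity in the guise of closedness of the Cartan $3$-cocycle --- a true statement, but not a formal consequence of $\iota$-calculus.

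The fact you need, $\mathcal{L}_v\pi = d_{\mathrm{CE}(\mathfrak{P})}\pi = 0$, is precisely the assertion that $\pi$ is a cocycle, which is part of proposition \ref{CanonicalCocycleOfSymplecticLieAlgebroid} --- the proposition your statement already invokes --- and the paper's proof simply cites it (``since $\pi$ is a cocycle''). If you prefer to derive it, a correct argument within your toolkit is: $\mathbf{d}(\mathcal{L}_v\pi) = -\mathcal{L}_v\mathbf{d}\pi = -\mathcal{L}_v\iota_v\omega = -\iota_v\mathcal{L}_v\omega - \iota_{[v,v]}\omega = 0$, where $[v,v]=0$ now enters legitimately through $[\mathcal{L}_v,\iota_v]=\iota_{[v,v]}$; hence $\mathcal{L}_v\pi$ is a $\mathbf{d}$-closed function of pure grade $n+2>0$, and such a function vanishes (it is locally constant, and constants sit in grade $0$; equivalently, observation \ref{GradedPoincareLemma} exhibits it as $\mathbf{d}$ of $\tfrac{1}{n+2}\iota_\epsilon$ applied to it, which is zero on a function). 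With that single step repaired, your proof coincides with the paper's.
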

\proof
It is clear that $i^* \mathrm{cs} = \frac{1}{n}\pi$. So it remains to 
check that $d_{\mathrm{W}(\mathfrak{P})} \mathrm{cs} = \omega$.
As in the proof of proposition \ref{CanonicalCocycleOfSymplecticLieAlgebroid}, we use $\mathbf{d}\omega=\mathcal{L}_v\omega=0$ and Cartan's identity $ [\mathcal{L}_v, \iota_\epsilon]
  =
  \iota_{[v,\epsilon]}
  =
  - \iota_{v}
$.
By these, the first summand in
$d_{\mathrm{W}(\mathfrak{P})} ( \iota_{\epsilon} \omega + \pi )$ is
$$
  \begin{aligned}
    d_{\mathrm{W}(\mathfrak{P})} \iota_{\epsilon} \omega
     & = 
     (
       \mathbf{d}
       + \mathcal{L}_v
     )
     \iota_\epsilon \omega
     \\
     &=
     [\mathbf{d}
       +\mathcal{L}_v,\iota_\epsilon]\omega\\
       &= n\omega - \iota_v \omega
     \\
     & = n \omega - \mathbf{d}\pi
  \end{aligned}
  \,.
$$
The second summand is simply 
$$
  d_{\mathrm{W}(\mathfrak{P})} \pi =  \mathbf{d}\pi
$$
since $\pi$ is a cocycle. 
\endofproof
\begin{remark}
 \label{remark.local_cs}
In a coordinate patch $\{x^a\}$ the Chern-Simons element is
$$
 \mathrm{cs}\bigr\vert_U
 =
\frac{1}{n}
   \left(
     \omega_{a b} \,\mathrm{deg}(x^a) x^a\, \wedge \mathbf{d}x^b + \pi
   \right)
   \,.
 $$
 In this formula one can substitute $\mathbf{d} = d_{\mathrm{W}}- d_{\mathrm{CE}}$, and this kind of substitution will be 
crucial for the proof our main statement in proposition \ref{TheAKSZActionFromCS} below.
Since $d_{\mathrm{CE}} x^i=v^i$ and using
remark \ref{remark.local_hamiltonian} we find
$$
   \sum_a \omega_{a b} \mathrm{deg}(x^a) x^a \wedge d_{\mathrm{CE}} x^b
   = 
  (n+1) \pi
  \,,
$$
and hence
$$
  \mathrm{cs}\bigr\vert_U
   =
  \frac{1}{n}
  \left(
    \mathrm{deg}(x^a) \,
    \omega_{a b} x^a \wedge d_{\mathrm{W}(\mathfrak{P})} x^b
    -
    n \pi
  \right)
  \,.
$$
\end{remark}
In the following section we show that this
transgression element $\mathrm{cs}$ \emph{is} the 
AKSZ-Lagrangian.

\section{The AKSZ action as a Chern-Simons functional}
\label{section.AKSZ_theory}
We now show how an $L_\infty$-algebroid $\mathfrak{a}$ endowed with a triple $(\pi,\mathrm{cs},\omega)$ consisting of a Chern-Simons element transgressing an invariant polynomial $\omega$ to a cocycle $\pi$ defines an AKSZ-type $\sigma$-model action. The starting point is to take as target space the tangent Lie $\infty$-algebroid $\mathfrak{T}\mathfrak{a}$, i.e., to consider as \emph{space of fields} of the theory the space of maps $\mathrm{Maps}(\mathfrak{T}\Sigma,\mathfrak{T}\mathfrak{a})$ from the worldsheet $\Sigma$ to $\mathfrak{T}\mathfrak{a}$. Dually, this is the space of morphisms of dgcas from
$\mathrm{W}(\mathfrak{a})$ to $\Omega^\bullet(\Sigma)$, i.e., the space of degree 1 $\mathfrak{a}$-valued differential forms on $\Sigma$ from definition \ref{definition.a-valued-differential-form}.

\begin{remark}
As we noticed in the introduction, in the context of the AKSZ $\sigma$-model a degree 1 $\mathfrak{a}$-valued differential form on $\Sigma$ should be thought of as the datum of a (notrivial) $\mathfrak{a}$-valued connection on a trivial principal $\infty$-bundle on $\Sigma$. We will come back to this point of view in Section \ref{section.CS_theory}.
\end{remark}
Now that we have defined the space of fields, we have to define the action. We have seen in definition \ref{definition.chern-simons-form} that a degree 1 $\mathfrak{a}$-valued differential form $A$ on $\Sigma$ maps the Chern-Simons element $\mathrm{cs}\in \mathrm{W}(\mathfrak{a})$ to a differential form $\mathrm{cs}(A)$ on $\Sigma$. Integrating this differential form on $\Sigma$ will therefore give an AKSZ-type action which, as we will see in Section \ref{section.CS_theory}, is naturally interpreted as an higher Chern-Simons action functional:
$$
\begin{aligned}
\mathrm{Maps}(\mathfrak{T}\Sigma,\mathfrak{T}\mathfrak{a}) &\to \mathbb{R}\\
A&\mapsto \int_\Sigma \mathrm{cs}(A).
\end{aligned}
$$

Theorem \ref{AKSZIsCW} then reduces to showing that, when $\{\mathfrak{a}, (\pi,\mathrm{cs},\omega)\}$ is the set of $L_\infty$-algebroid data arising from a symplectic Lie $n$-algebroid $(\mathfrak{P}, \omega)$, the AKSZ-type action dscribed above is precisely the AKSZ action for $(\mathfrak{P}, \omega)$. More precisely, this is stated as follows.
\begin{proposition}
  \label{TheAKSZActionFromCS}
  For $(\mathfrak{P}, \omega)$ a symplectic Lie $n$-algebroid 
  coming by proposition \ref{SymplDgSpaceAsLAlgd}
  from a symplectic dg-manifold of positive grade $n$ with 
  global Darboux chart, the action functional induced by 
  the canonical Chern-Simons element 
  $$
     \mathrm{cs} \in \mathrm{W}(\mathfrak{P})
  $$ 
  from proposition \ref{TheCSElement}
  is the AKSZ action from formula \eqref{AKSZformula}:
  $$
    \int_\Sigma \mathrm{cs}
     =
    \int_\Sigma L_{\mathrm{AKSZ}}
    \,.
  $$
  In fact the two Lagrangians differ at most by an exact term
  $$
    \mathrm{cs} \sim L_{\mathrm{AKSZ}}
    \,.
  $$
\end{proposition}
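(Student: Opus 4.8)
The plan is to reduce the statement to a purely local computation in the given global Darboux chart and then to exhibit the difference of the two Lagrangian forms as a $d_{\mathrm{dR}}$-exact form on $\Sigma$. First I would apply the $\mathfrak{a}$-valued form $A$ to the local expression for the Chern-Simons element recorded in remark \ref{remark.local_cs}, namely $\mathrm{cs}\bigr\vert_U = \frac{1}{n}\bigl(\deg(x^a)\,\omega_{ab}\,x^a \wedge d_{\mathrm{W}(\mathfrak{P})}x^b - n\pi\bigr)$. Since $A$ is a dg-algebra homomorphism, and using the identifications $A(x^a)=\phi^a$, $A(d_{\mathrm{W}(\mathfrak{P})}x^b)=d_{\mathrm{dR}}\phi^b$ from remark \ref{ImagesOfdAnddWUnderForms}, together with $A(\pi)=\phi^*\pi$, this yields $\mathrm{cs}(A)=\frac{1}{n}\sum_{a,b}\deg(x^a)\,\omega_{ab}\,\phi^a\wedge d_{\mathrm{dR}}\phi^b - \phi^*\pi$. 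Comparing with the AKSZ Lagrangian in \eqref{AKSZformula}, the potential terms $-\phi^*\pi$ agree on the nose, so the entire claim collapses to a comparison of the two \emph{kinetic} terms.

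Writing $K := \sum_{a,b}\omega_{ab}\,\phi^a\wedge d_{\mathrm{dR}}\phi^b$ and $\tilde{K}:=\sum_{a,b}\deg(x^a)\,\omega_{ab}\,\phi^a\wedge d_{\mathrm{dR}}\phi^b$, it remains to show that $\frac{1}{n}\tilde{K}-\frac{1}{2}K$ is $d_{\mathrm{dR}}$-exact. The key structural input is that $\omega_{ab}$ is supported on index pairs with $\deg(x^a)+\deg(x^b)=n$ (def. \ref{SymplecticLooAlgebroid}). Introducing the companion term $\tilde{K}':=\sum_{a,b}\deg(x^b)\,\omega_{ab}\,\phi^a\wedge d_{\mathrm{dR}}\phi^b$, this support condition immediately gives $\tilde{K}+\tilde{K}'=nK$, whence $\frac{1}{n}\tilde{K}-\frac{1}{2}K=\frac{1}{2n}(\tilde{K}-\tilde{K}')$. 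Thus the problem reduces to proving that $\tilde{K}-\tilde{K}'=\sum_{a,b}(\deg(x^a)-\deg(x^b))\,\omega_{ab}\,\phi^a\wedge d_{\mathrm{dR}}\phi^b$ is exact.

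To produce the primitive I would integrate by parts against a weighted quadratic expression in the fields. The candidate is $P := \sum_{a,b}(-1)^{\deg(x^a)}\deg(x^a)\,\omega_{ab}\,\phi^a\wedge\phi^b$; applying the graded Leibniz rule, moving $d_{\mathrm{dR}}\phi^a$ past $\phi^b$, and relabelling summation indices, one invokes the graded symmetry $\omega_{ba}=(-1)^{(\deg(x^a)+1)(\deg(x^b)+1)}\omega_{ab}$ forced by $\omega$ being a genuine $2$-form. Collecting the Koszul signs, which on the support $\deg(x^a)+\deg(x^b)=n$ reduce to the elementary identity $(-1)^{(p+1)(q+1)+(q+1)p}=(-1)^{q+1}$, produces exactly the integer weights $\deg(x^a)-\deg(x^b)$, so that $d_{\mathrm{dR}}P=\tilde{K}-\tilde{K}'$. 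Hence $\mathrm{cs}(A)-L_{\mathrm{AKSZ}}(A)=\frac{1}{2n}\,d_{\mathrm{dR}}P$, which integrates to zero over the closed worldvolume $\Sigma$; this establishes both $\int_\Sigma\mathrm{cs}=\int_\Sigma L_{\mathrm{AKSZ}}$ and the sharper statement $\mathrm{cs}\sim L_{\mathrm{AKSZ}}$.

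The step I expect to be the main obstacle is precisely this sign bookkeeping in the graded de Rham calculus: one must keep straight the two distinct degrees attached to each $\phi^a$ (its de Rham form degree on $\Sigma$, which by prop. \ref{FormsFromGrManifoldMaps} coincides with $\deg(x^a)$) as well as the graded-symmetry convention for $\omega_{ab}$, and then verify that $P$ is genuinely the right primitive rather than merely a closed expression. A useful sanity check I would run first is the Poisson case $n=1$, where the degree-$0$ fields drop out of $\tilde{K}$ entirely, only the degree-$1$ contributions survive, and the primitive $P$ reduces, via $(-1)^1\cdot 1=-1$, to $-\tfrac12$ times the manifestly quadratic expression $\omega_{ab}\phi^a\wedge\phi^b$ — confirming the general formula in the simplest nontrivial example.
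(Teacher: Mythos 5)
Your proof is correct and takes essentially the same route as the paper's: apply $A$ to the local formula for $\mathrm{cs}$ from remark \ref{remark.local_cs}, observe that the potential terms $-\pi(\phi)$ agree on the nose, and then symmetrize the kinetic term using the support condition $\mathrm{deg}(x^a)+\mathrm{deg}(x^b)=n$ together with the graded symmetry $\omega_{ab}=(-1)^{(1+\mathrm{deg}(x^a))(1+\mathrm{deg}(x^b))}\omega_{ba}$ and the Koszul signs. The only difference is packaging: where the paper integrates by parts under $\int_\Sigma$ and discards the boundary term, you exhibit that term explicitly as $d_{\mathrm{dR}}P$ with $P=\sum_{a,b}(-1)^{\mathrm{deg}(x^a)}\mathrm{deg}(x^a)\,\omega_{ab}\,\phi^a\wedge\phi^b$ (a computation which checks out), which has the minor merit of directly establishing the sharper claim $\mathrm{cs}\sim L_{\mathrm{AKSZ}}$ rather than leaving it implicit in the integration by parts.
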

\proof
We have seen in remark \ref{remark.local_cs} that in Darboux coordinates $\{x^a\}$ where
$$
  \omega = \frac{1}{2}\omega_{a b} \mathbf{d}x^a \wedge \mathbf{d}x^b
$$ 
the Chern-Simons element from proposition \ref{TheCSElement} is
given by
$$
  \mathrm{cs} 
    = \frac{1}{n}
  \left(
    \mathrm{deg}(x^a) \,
    \omega_{a b} x^a \wedge d_{\mathrm{W}(\mathfrak{P})} x^b
    -
    n \pi
  \right)\,.
  $$
This means that for $\Sigma$ an $(n+1)$-dimensional manifold and
$$
  \Omega^\bullet(\Sigma) 
    \leftarrow 
  \mathrm{W}(\mathfrak{P})
   :
  \phi
$$
a (degree 1) $\mathfrak{P}$-valued differential form on $\Sigma$
we have 
$$
  \begin{aligned}
      \int_\Sigma \mathrm{cs}(\phi)
      &= 
      \frac{1}{n}
      \int_\Sigma
      \left(
       \sum_{a,b} 
       \mathrm{deg}(x^a)\,\omega_{a b} \phi^a \wedge  d_{\mathrm{dR}} \phi^b 
       -
       n \pi(\phi) 
     \right)
  \end{aligned}
  \,,
$$
where we used $\phi(d_{\mathrm{W}(\mathfrak{P})} x^b)=d_{\mathrm{dR}} \phi^b$, as in  remark \ref{ImagesOfdAnddWUnderForms}.
Here the asymmetry in the coefficients of the first term is only
apparent. Using integration by parts on a closed $\Sigma$ 
we have
$$
  \begin{aligned}
    \int_\Sigma
    \sum_{a,b} \mathrm{deg}(x^a)\,\omega_{a b} \phi^a \wedge  d_{\mathrm{dR}} \phi^b 
    & =
    \int_\Sigma
     \sum_{a,b} (-1)^{1+\mathrm{deg}(x^a)}\mathrm{deg}(x^a)\,\omega_{a b} (d_{\mathrm{dR}} \phi^a) \wedge   \phi^b 
    \\
    & =
    \int_\Sigma
    \sum_{a,b} (-1)^{(1+\mathrm{deg}(x^a))(1+\mathrm{deg}(x^b))}\mathrm{deg}(x^a)\,\omega_{a b} 
  \phi^b \wedge (d_{\mathrm{dR}} \phi^a)     
    \\
    & =
    \int_\Sigma \sum_{a,b} \mathrm{deg}(x^b)\,\omega_{a b} 
      \phi^a \wedge (d_{\mathrm{dR}} \phi^b)     
  \end{aligned}
  \,,
$$
where in the last step we switched the indices on $\omega$ and used that
$\omega_{ab} = (-1)^{(1+\mathrm{deg}(x^a))(1+\mathrm{deg}(x^b))} \omega_{b a}$.
Therefore
$$
  \begin{aligned}
    \int_\Sigma
    \sum_{a,b} \mathrm{deg}(x^a)\,\omega_{a b} \phi^a \wedge  d_{\mathrm{dR}} \phi^b 
    & =
    \frac{1}{2}
    \int_\Sigma
    \sum_{a,b} \mathrm{deg}(x^a)\,\omega_{a b} \phi^a \wedge  d_{\mathrm{dR}} \phi^b 
    +
    \frac{1}{2}
    \int_\Sigma
    \sum_{a,b} \mathrm{deg}(x^b)\,\omega_{a b} \phi^a \wedge  d_{\mathrm{dR}} \phi^b 
    \\
    &=
    \frac{n}{2}
    \int_\Sigma
      \omega_{a b} \phi^a \wedge  d_{\mathrm{dR}} \phi^b 
    \,.
  \end{aligned}
  \,.
$$
Using this in the above expression for the action yields 
$$
  \int_\Sigma \mathrm{cs}(\phi)
  = 
  \int_\Sigma
   \left(
   \frac{1}{2}\omega_{ab} \phi^a \wedge d_{\mathrm{dR}} \phi^b
   -
   \pi(\phi)
   \right)
   \,,
$$
which is formula  \eqref{AKSZformula} for the action functional.
\endofproof

\begin{corollary}\label{KS4.4}
In the hypothesis of Proposition 4.2, if $N$ is an $(n+2)$-dimensional compact oriented
manifold with $\partial N=\Sigma$, then $$ \int_\Sigma L_{\mathrm{AKSZ}} =
\int_N \omega(F_\phi), $$ where $\omega(F_\phi)$ is the symplectic form
$\omega$, seen as an invariant polynomial, evaluated on the curvature of
$\phi:\mathrm{W}(\mathfrak{P})\to \Omega^\bullet(\Sigma)$.
\end{corollary}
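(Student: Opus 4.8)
The plan is to exhibit the $(n+1)$-form $L_{\mathrm{AKSZ}}$ on $\Sigma$ as the restriction to $\partial N = \Sigma$ of a primitive for the curvature characteristic form $\omega(F_\phi)$ on $N$, so that Stokes' theorem converts the boundary integral into the bulk integral. First I would recall from Proposition \ref{TheAKSZActionFromCS} that, up to an exact term, $L_{\mathrm{AKSZ}} = \mathrm{cs}(\phi)$, where $\mathrm{cs} \in \mathrm{W}(\mathfrak{P})$ is the Chern-Simons element of Proposition \ref{TheCSElement} satisfying $d_{\mathrm{W}(\mathfrak{P})}\,\mathrm{cs} = \omega$. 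The key structural fact is the defining property of a transgression element (def. \ref{TransgressionAndCSElements}): applying an $\mathfrak{a}$-valued form and using that such a form is a dg-algebra homomorphism, the de Rham differential of the Chern-Simons form is the curvature characteristic form,
$$
  d_{\mathrm{dR}}\,\mathrm{cs}(\phi) = \omega(F_\phi)
  \,.
$$

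Next I would extend the field $\phi$ from $\Sigma$ to the filling $N$. Since $N$ is a compact oriented manifold with $\partial N = \Sigma$, I would choose an extension $\tilde\phi : \mathrm{W}(\mathfrak{P}) \to \Omega^\bullet(N)$ of the dg-algebra homomorphism $\phi$, i.e.\ a degree $1$ $\mathfrak{P}$-valued form on $N$ whose pullback along the inclusion $\Sigma \hookrightarrow N$ recovers $\phi$. With such an extension in hand, both $\mathrm{cs}(\tilde\phi)$ and $\omega(F_{\tilde\phi})$ are honest differential forms on $N$, and the identity $d_{\mathrm{dR}}\,\mathrm{cs}(\tilde\phi) = \omega(F_{\tilde\phi})$ holds on $N$ because it is a formal consequence of $\tilde\phi$ being a dg-morphism and of $d_{\mathrm{W}(\mathfrak{P})}\,\mathrm{cs} = \omega$, exactly as above. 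Applying Stokes' theorem then gives
$$
  \int_N \omega(F_{\tilde\phi})
  =
  \int_N d_{\mathrm{dR}}\,\mathrm{cs}(\tilde\phi)
  =
  \int_{\partial N} \mathrm{cs}(\tilde\phi\vert_{\Sigma})
  =
  \int_\Sigma \mathrm{cs}(\phi)
  =
  \int_\Sigma L_{\mathrm{AKSZ}}
  \,,
$$
where the last equality is Proposition \ref{TheAKSZActionFromCS}.

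The main obstacle I anticipate is the choice of extension $\tilde\phi$ and the well-definedness this seems to require: a priori the right-hand side $\int_N \omega(F_{\tilde\phi})$ could depend on the chosen filling and on the chosen extension of $\phi$ to $N$. I would address this by noting that the difference of two such integrals is, again by Stokes applied to the transgression identity on the closed manifold obtained by gluing two fillings, the integral of an exact form, hence the statement should be read (as the corollary's phrasing ``$\omega$ seen as an invariant polynomial evaluated on the curvature of $\phi$'' suggests) at the level for which this ambiguity is immaterial. A minor secondary point is to keep careful track of the exact term by which $\mathrm{cs}$ and $L_{\mathrm{AKSZ}}$ differ in Proposition \ref{TheAKSZActionFromCS}; since that difference is $d_{\mathrm{dR}}$ of a globally defined form on $\Sigma$ and $\Sigma = \partial N$ is itself closed, it integrates to zero over $\Sigma$ and does not affect the final equality.
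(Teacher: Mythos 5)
Your proof is correct and is precisely the argument the paper leaves implicit: the corollary is stated there without proof, as an immediate consequence of Proposition \ref{TheAKSZActionFromCS} together with the transgression identity $d_{\mathrm{dR}}\,\mathrm{cs}(\phi) = \phi\left(d_{\mathrm{W}(\mathfrak{P})}\mathrm{cs}\right) = \omega(F_\phi)$ and Stokes' theorem, exactly as you argue. Your additional care about choosing the extension $\tilde\phi$ (which always exists, since a dg-morphism out of $\mathrm{W}(\mathfrak{P})$ is freely determined by arbitrary forms assigned to generators, and forms on $\Sigma = \partial N$ extend to $N$) and about the exact-term discrepancy in Proposition \ref{TheAKSZActionFromCS} is sound and does not change the approach.
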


\subsection{Examples}
\label{section.examples}

We unwind the general statement of proposition \ref{TheAKSZActionFromCS}
and its ingredients
in the central examples of interest, from proposition \ref{SymplecticPoissonCourant}:
the ordinary Chern-Simons action functional, the Poisson $\sigma$-model Lagrangian, the Courant $\sigma$-model Lagrangian, and the higher abelian Chern-Simons 
functional.
(The ordinary Chern-Simons model is the special 
case of the Courant $\sigma$-model for $\mathfrak{P}$ having as
base manifold the point. But since it is the archetype of
all models considered here, it deserves its own discussion.)

By the very content of proposition \ref{TheAKSZActionFromCS}
there are no surprises here and the following essentially
amounts to a review of the standard formulas for these 
examples. But it may be helpful to see our general 
$\infty$-Lie theoretic derivation of these formulas spelled out 
in concrete cases, if only to carefully track the various signs
and prefactors.

\subsubsection{Ordinary Chern-Simons theory}
\label{OrdinaryChernSimonsTheoryAsAKSZ}

Let $\mathfrak{P} = b\mathfrak{g}$
be a semisimple Lie algebra regarded as an $L_\infty$-algebroid 
with base space the point
and let $\omega := \langle -,-\rangle\in \mathrm{W}(b\mathfrak{g})$ be 
its Killing form invariant polynomial. Then 
$(b \mathfrak{g}, \langle -,-\rangle)$ is a symplectic Lie 2-algebroid.

For $\{t^a\}$ a dual basis
for $\mathfrak{g}$, being generators of grade 1 in 
$\mathrm{W}(\mathfrak{g})$ we have
$$
  d_{\mathrm{W}} t^a = - \frac{1}{2}C^a{}_{b c} t^a \wedge t^b + 
   \mathbf{d}t^a
$$
where $C^a{}_{b c} := t^a([t_b,t_c])$ and
$$
  \omega = \frac{1}{2} P_{a b} \mathbf{d}t^a \wedge \mathbf{d}t^b
  \,,
$$
where $P_{ab} := \langle t_a, t_b \rangle$.
The Hamiltonian cocycle $\pi$ from prop. \ref{CanonicalCocycleOfSymplecticLieAlgebroid}
is
$$
  \begin{aligned}
    \pi
     & =
     \frac{1}{2+1}\iota_v \iota_\epsilon \omega
     \\
     &= \frac{1}{3} \iota_v P_{ab} t^a \wedge \mathbf{d}t^b
     \\
     & =-\frac{1}{6}P_{ab}C^b_{cd}t^a\wedge t^c\wedge t^d
     \\
     & =: -\frac{1}{6}C_{abc}t^a\wedge t^b\wedge t^c.
  \end{aligned}
$$
Therefore the Chern-Simons element from 
prop. \ref{TheCSElement} is found to be
$$
\begin{aligned}
 \mathrm{cs}
  &=
  \frac{1}{2}\left(P_{ab}t^a\wedge\mathbf{d}t^b
  -
  \frac{1}{6}C_{abc}t^a\wedge t^b\wedge t^c\right)
   \\
   &=
   \frac{1}{2}\left(P_{ab}t^a\wedge d_\mathrm{W}t^b
   +
   \frac{1}{3}C_{abc}t^a\wedge t^b\wedge t^c\right).
\end{aligned}
$$
This is indeed, up to an overall factor $1/2$, the familiar standard choice of Chern-Simons element on a Lie algebra. To see this more explicitly,
notice that evaluated on a
$\mathfrak{g}$-valued connection form
$$
  \Omega^\bullet(\Sigma) \leftarrow \mathrm{W}(b\mathfrak{g}) : A
$$
this is
$$
  2 \mathrm{cs}(A) 
   = 
  \langle A \wedge F_A\rangle 
   - 
  \frac{1}{6}\langle A \wedge [A, A]\rangle
   = 
  \langle A \wedge d_{dR}A\rangle 
   + 
  \frac{1}{3}\langle A \wedge [A, A]\rangle
  \,.
$$
If $\mathfrak{g}$ is a matrix Lie algebra then the Killing form is 
proportional to the trace of the matrix product: $\langle t_a,t_b\rangle = \mathrm{tr}(t_a t_b)$. In this case we have
$$
  \begin{aligned}
    \langle A \wedge [A, A]\rangle
    &=
    A^a  \wedge A^b \wedge A^c \,\mathrm{tr}(t_a (t_b t_c - t_c t_b))
    \\
    &=
    2 A^a \wedge A^b \wedge A^c \,\mathrm{tr}(t_a t_b t_c )
    \\
    &= 
    2 \,\mathrm{tr}(A \wedge A \wedge A)
  \end{aligned}
$$
and hence
$$
  2 \mathrm{cs}(A) 
   = \mathrm{tr}\left(
A \wedge F_A
   - 
  \frac{1}{3} A \wedge A \wedge A\right)
   = 
\mathrm{tr}\left( A \wedge d_{dR}A 
   + 
  \frac{2}{3} A \wedge  A \wedge A\right)
  \,.
$$

\subsubsection{Poisson $\sigma$-model}
\label{section.PoissonSigmaModel}

Let  $(M, \{-,-\})$ be a Poisson manifold and 
let $\mathfrak{P}$ be the corresponding 
Poisson Lie algebroid. This is a symplectic Lie 1-algebroid.
Over a chart for the shifted cotangent bundle $T^*[-1]X$
with coordinates $\{x^i\}$ of degree 0
and  $\{\partial_i\}$ of degree 1, respectively, we have
$$
  d_{\mathrm{W}} x^i = -\pi^{i j}\mathbf{\partial}_j + \mathbf{d}x^i;
  \qquad
 d_{\mathrm{W}} \partial_i = 
 \frac{1}{2} \frac{\partial \pi^{j k}}{\partial x^i}
 \partial_j \wedge \partial_k + \mathbf{d}\partial_i
 \,,
$$
where $\pi^{i j} := \{x^i , x^j\}$ and
$$
  \omega = \mathbf{d}x^i \wedge \mathbf{d}\partial_i
  \,.
$$
The Hamiltonian cocycle from prop. \ref{CanonicalCocycleOfSymplecticLieAlgebroid}
is
$$
\pi= \frac{1}{2}\iota_v \iota_\epsilon  \omega=- \frac{1}{2} \pi^{ij} \partial_i \wedge \partial_j
$$
and the Chern-Simons element from prop. \ref{TheCSElement} is
$$
  \begin{aligned}
  \mathrm{cs}
  &= \iota_\epsilon \omega + \pi
  \\
  &= \partial_i \wedge \mathbf{d}x^i 
     - \frac{1}{2}\pi^{ij}\partial_i\wedge\partial_j
  \end{aligned}
  \,.
$$
In terms of $d_{\mathrm{W}}$ instead of $\mathbf{d}$ 
this is 
$$
  \begin{aligned}
    \mathrm{cs} & = \partial_i \wedge d_{\mathrm{W}}x^i -  \pi
    \\
    &= 
    \partial_i \wedge d_{\mathrm{W}}x^i + \frac{1}{2}\pi^{ij}\partial_i \partial_j\,.
      \end{aligned}
$$
So for $\Sigma$ a 2-manifold and
$$
  \Omega^\bullet(\Sigma) \leftarrow \mathrm{W}(\mathfrak{P}) : (X,\eta)
$$
a Poisson-Lie algebroid valued differential form on $\Sigma$
-- which in components is a function 
$X: \Sigma \to M$ and a 1-form 
$\eta \in \Omega^1(\Sigma, X^* T^* M)$ -- 
the corresponding AKSZ action is
$$
 \int_\Sigma \mathrm{cs}(X,\eta) 
   = \int_\Sigma
  \eta \wedge d_{\mathrm{dR}}X    
    + 
   \frac{1}{2}\pi^{ij}(X)\eta_i \wedge \eta_j
  \,.
$$
This is the Lagrangian of the Poisson $\sigma$-model 
\cite{Ikeda93,ScSt94,CattaneoFelder}.

\subsubsection{Courant $\sigma$-model}
\label{section.CourantSigmaModel}

A Courant algebroid is a symplectic Lie 2-algebroid. 
By the previous example this is a higher analog of a 
Poisson manifold. Expressed in components in the language of ordinary 
differential geometry, a Courant algebroid is a vector bundle $E$ 
over a manifold $M_{0}$, equipped with: a non-degenerate bilinear form
$\langle \cdot,\cdot \rangle$ on the fibers, a bilinear bracket
$[\cdot,\cdot]$ on sections $\Gamma(E)$, and a bundle map (called the
anchor) $\rho \colon E \to TM$, satisfying several compatibility
conditions. The bracket $[\cdot,\cdot]$ may be required to be
skew-symmetric (Def.\ 2.3.2 in \cite{RoytenbergCourant}), in which case it gives
rise to a Lie 2-algebra structure, or, alternatively, it may be required to
satisfy a Jacobi-like identity (Def.\ 2.6.1 in \cite{RoytenbergCourant}), in
which case it gives a Leibniz algebra structure.

It was shown in \cite{RoytenbergCourant} that Courant algebroids $E \to M_{0}$ 
in this component form
are in 1-1 correspondance with (non-negatively graded) grade 2
symplectic dg-manifolds $(M,v)$. Via this correspondance, $M$ is obtained
as a particular symplectic submanifold of $T^{\ast}[2]E[1]$ equipped
with its canonical symplectic structure.

Let $(M,v)$ be a Courant algebroid as above. In Darboux coordinates, the
symplectic structure is
\[
\omega = \mathbf{d}p_{i} \wedge \mathbf{d}q^{i} + \frac{1}{2}g_{ab} 
\mathbf{d}\xi^{a} \wedge  \mathbf{d}\xi^{b},
\]
with
\[
\deg{q^{i}}=0, ~ \deg{\xi^{a}}=1, ~ \deg{p_{i}}=2,
\]
and $g_{ab}$ are constants. The Chevalley-Eilenberg differential
corresponds to the vector field:
\[
v = P^{i}_{a} \xi^{a} \frac{\partial}{\partial q^{i}}
+ g^{a b} \bigl( P^{i}_{b}p_{i} - \frac{1}{2} T_{bcd}
\xi^{c} \xi^{d} \bigr) \frac{\partial}{\partial \xi^{a}}
+\left (-\frac{\partial P^{j}_{a}}{\partial q^{i}} \xi^{a}p_{j}
+ \frac{1}{6} \frac{\partial T_{abc}}{\partial q^{i}} \xi^{a} \xi^{b}
\xi^{c} \right)\frac{\partial}{\partial p_{i}}.
\]
Here $P^{i}_{a}=P^{i}_{a}(q)$ and $T_{abc}=T_{abc}(q)$ are particular degree zero
functions encoding the Courant algebroid structure.
Hence, the differential on the Weil algebra is:
\begin{align*}
d_{W} q^{i} &= P^{i}_{a} \xi^{a} + \mathbf{d} q^{i} \\
d_{W} \xi^{a} &= g^{a b} \bigl( P^{i}_{b}p_{i} - \frac{1}{2} T_{bcd}
\xi^{c} \xi^{d} \bigr) + \mathbf{d} \xi^{a} \\
d_{W} p_{i} &= -\frac{\partial P^{j}_{a}}{\partial q^{i}} \xi^{a}p_{j}
+ \frac{1}{6} \frac{\partial T_{abc}}{\partial q^{i}} \xi^{a} \xi^{b} \xi^{c}
+ \mathbf{d} p_{i}.
\end{align*}

Following remark. \ref{remark.local_hamiltonian}, we construct the
corresponding Hamiltonian cocycle from prop. 
\ref{CanonicalCocycleOfSymplecticLieAlgebroid}:
\begin{align*}
\pi &= \frac{1}{n+1}  \omega_{ab}\deg(x^a) x^a \wedge v^b\\
&= \frac{1}{3}\bigl( 2 p_{i} \wedge v(q^{i}) + g_{a b} \xi^{a}
\wedge v(\xi^{b}) \bigr)\\
&= \frac{1}{3}\bigl( 2 p_{i} P^{i}_{a} \xi^{a} + 
\xi^{a} P^{i}_{a}p_{i} - \frac{1}{2} T_{abc} \xi^{a}\xi^{b} \xi^{c} \bigr)\\
&= P^{i}_{a} \xi^{a} p_{i} - \frac{1}{6} T_{abc} \xi^{a}\xi^{b} \xi^{c}.
\end{align*}

The Chern-Simons element 
from prop. \ref{TheCSElement} is:
\begin{align*}
  \mathrm{cs} &= 
  \frac{1}{2}
  \left( 
    \sum_{a b} \deg(x^a) \,\omega_{a b} x^a \wedge d_{W}x^b  -  2 \pi
  \right)
   \\
   &= 
   p_{i} d_{W} q^{i} + \frac{1}{2}g_{ab} \xi^{a} d_{W} \xi^{b} -  \pi\\
   &=
   p_{i} d_{W} q^{i} + \frac{1}{2}g_{ab} \xi^{a} d_{W} \xi^{b}-P^{i}_{a} \xi^{a} p_{i} + \frac{1}{6} T_{abc} \xi^{a}\xi^{b} \xi^{c}.
\end{align*}
So for a Courant Lie 2-algebroid valued differential form datum 
\[
\Omega^\bullet(\Sigma) \leftarrow \mathrm{W}(\mathfrak{P}) : (X,A,F)
\]
on a closed 3-manifold $\Sigma$, we have
\[
  \int_\Sigma\mathrm{cs}(X,A,F)
  = 
 \int_\Sigma F_{i} \wedge d_{\mathrm{dR}} X^{i} +\frac{1}{2}g_{ab} A^{a}
  d_{\mathrm{dR}} A^{b}-P^{i}_{a} A^{a} F_{i}  
  + \frac{1}{6} T_{abc} A^{a} A^{b} A^{c}. 
\]
This is the AKSZ action for the Courant algebroid $\sigma$-model
from  \cite{Ikeda02, HoPa04, RoytenbergAKSZ}.

\subsubsection{Higher abelian Chern-Simons theory in $d = 4k+3$}
\label{HigherAbelianCSTheoryAsAKSZ}

For $k \in \mathbb{N}$, let $\mathfrak{a}$ be the 
delooping of the line Lie $2k$-algebra: $\mathfrak{a} = b^{2k+1}\mathbb{R}$.
By examples \ref{ExamplesOfInvariantPolynomials} there is, up to scale,
a unique binary invariant polynomial on $b^{2k+1}\mathbb{R}$,
and this is the wedge product of the unique generating 
unary invariant polynomial $\gamma$ in degree $2k+2$ with itself:
$$
  \omega
  := 
  \gamma \wedge \gamma
  \in 
  \mathrm{W}(b^{4k+4}\mathbb{R})
  \,.
$$
This invariant polynomial is clearly non-degenerate: 
for $c$ the canonical generator of $\mathrm{CE}(b^{2k+1}\mathbb{R})$
we have
$$
  \omega = \mathbf{d}c \wedge \mathbf{d}c
  \,.
$$ 
Therefore $(b^{2k+1}\mathbb{R}, \omega)$ induces an AKSZ $\sigma$-model
in dimension $n+1 = 4k+3$.
(On the other hand, on $b^{2k}\mathbb{R}$ there is only the 0 
binary invariant polynomial, so that no AKSZ-$\sigma$-models
are induced from $b^{2k}\mathbb{R}$.)

The Hamiltonian cocycle from 
proposition \ref{CanonicalCocycleOfSymplecticLieAlgebroid} vanishes
$$
  \pi = 0
$$ 
because the differential in the Chevalley-Eilenberg algebra $\mathrm{CE}(b^{2k+1}\mathbb{R})$ is trivial.
The Chern-Simons element from proposition \ref{TheCSElement} is
$$
  \mathrm{cs} = c \wedge \mathbf{d}c
  \,.
$$
A field configuration (definition \ref{definition.a-valued-differential-form})
$$
  \Omega^\bullet(\Sigma) \leftarrow
  \mathrm{W}(b^{2k+1})
  : C
$$ 
of this $\sigma$-model over a $(4k+3)$-dimensional manifold $\Sigma$ 
is simply a $(2k+1)$-form. The AKSZ action functional in this case
is
$$
  S_{AKSZ} : C \mapsto \int_\Sigma C \wedge d_{dR}C
  \,.
$$
The simplicity of this discussion is deceptive. In terms
of the outlook  in Section \ref{section.generalizations} below, it 
results from the fact that in AKSZ theory we are only looking at
$\infty$-Chern-Simons theory for universal Lie integrations and
for topologically trivial $\infty$-bundles. 
The $\infty$-Chern-Simons theory for $\mathfrak{a} = b^{2k+1}\mathbb{R}$
becomes nontrivial and rich when one drops these restrictions.
Then its configuration space is that of 
\emph{circle $(2k+1)$-bundles with connection} on $\Sigma$
(abelian $2k$-gerbes), classified
by ordinary differential cohomology in degree $2k+2$, and the 
action functional is given by the fiber integration along the projection to a point of the Beilinson-Deligne cup product in differential cohomology, which is locally
given by the above formula, but contains global twists. 
This is discussed in 
depth in \cite{HopkinsSinger}.

Once globalized this way, the above action functional is the 
action functional of higher $U(1)$-Chern-Simons theory
in dimention $4k+3$. In dimension 3 ($k = 0$) this is
discussed for instance in \cite{GuadagniniThuillier}
(notice that $U(1)$ is not simply connected, whence even in this
dimension there is a refinement of the standard story).
In dimension 7 ($k = 1$)  this higher Chern-Simons theory is the system whose
holographic boundary theory encodes the self-dual
2-form gauge theory on the fivebrane \cite{Witten}.
Generally, for every $k$ the $(4k+3)$-dimensional 
abelian Chern-Simons theory induces a self-dual higher
gauge theory holographically on its boundary, see \cite{BelovMoore}.

\section{Higher Chern-Simons field theory}
\label{section.CS_theory}

We indicate now the roots of the construction of 
the higher Chern-Simons action functionals discussed
above in a more encompassing general theory. 
We refer the reader to \cite{survey,FSSI} for details on this section.

The first step in this identification involves the
\emph{Lie integration} of an $L_\infty$-algebroid $\mathfrak{a}$ 
to a \emph{smooth $\infty$-groupoid} $\mathrm{exp}(\mathfrak{a})$
in analogy to how a Lie algebra integrates to a Lie group.
This in turn involves two aspects: the notion of a 
\emph{bare $\infty$-groupoid} on the one hand, and its 
\emph{smooth structure} 
on the other. 

Bare $\infty$-groupoids are presented by \emph{Kan complexes}: 
simplicial sets such that for all adjacent $k$-cells there
exists a composite $k$-cell, and such that every $k$-cell has
an inverse, up to $(k+1)$-cells, under this composition.
For instance for $G$ any ordinary groupoid there is such 
a Kan complex whose 0-cells are the objects of the groupoid,
and whose $k$-cells are the sequences of composable 
$k$-tuples of morphisms of the groupoid.

These bare $\infty$-groupoids are equipped with \emph{geometric structure}
by providing a rule for what the \emph{geometric families}
of $k$-cells in the $\infty$-groupoid are supposed to be. 
In this sense a smooth structure on an $\infty$-groupoid $A$ 
is given by declaring for each Cartesian space
$U = \mathbb{R}^n$ a set $A_k(U)$ of 
smooth $U$-parameterized families of $k$-morphisms in $A$, for
$k,n \in \mathbb{N}$. Collecting this data for all $k$
and all $U$ produces a functor
$$
  A : \mathrm{CartSp}^{\mathrm{op}} \to \mathrm{sSet}
$$
$$
  U \mapsto ([k] \mapsto A_k(U))
$$
from the opposite of the category of Cartesian spaces to 
the category of simplicial sets 
-- a \emph{simplicial presheaf} -- and this functor
encodes the structure of a smooth $\infty$-groupoid.

For instance if 
$A = (\xymatrix{A_1 \ar@<-2pt>[r]\ar@<+2pt>[r] & A_0})$ is an ordinary Lie groupoid, with a smooth manifold of objects $A_0$ and a smooth manifold of morphisms $A_1$, this assignment is given in the two lowest degrees  by sending $U$ to the set of smooth functions from $U$ to the
spaces of objects and morphisms: 
$A : U \mapsto C^\infty(U, A_k)$.
\begin{definition}
 \label{SmoothInfinityGroupoid}
A \emph{smooth $\infty$-groupoid} is a simplicial presheaf on the
category of Cartesian spaces and smooth functions between them,
\[
 A \colon \mathrm{CartSp}^{\mathrm{op}} \to \mathrm{sSet}
 \,,
\]
such that for each $U \in \mathrm{CartSp}$, the simplicial set
$A(U)$ is a Kan complex.

A morphism $f : A_1 \to A_2$ of smooth $\infty$-groupoids is
a morphism of the underlying simplicial presheaves 
(a natural transformation of functors).
A morphism is an \emph{equivalence of smooth $\infty$-groupoids} 
if it is stalkwise 
a 
weak homotopy equivalence of 
Kan complexes.
\end{definition}
\begin{remark}
  \label{CartSpAndGeneralizations}
Here the category of Cartesian spaces is just the
simplest of many possible choices. It can be varied at will, corresponding to which kind of geometric structure the 
$\infty$-groupoids are to be equipped with. For instance we 
can equivalently take instead the full category of 
smooth manifolds, without changing the notion of
smooth $\infty$-groupoid, up to equivalence. We could
also take richer categories, such as that of
smooth dg-manifolds. For \emph{non-positively}-graded
dg-manifolds we would speak of 
\emph{derived smooth $\infty$-groupoids} in this case.
These are necessary for discussion of the Lie integration of 
the full AKSZ BV-action, as opposed to just the grade-0 
functional that we concentrate on here.
\end{remark}
\begin{remark}
  It turns out that under the above notion of equivalence,
  \emph{every} simplicial presheaf is equivalent to 
  one that is objectwise a Kan complex. In a more
  abstract discussion than we want to get at here,
  we would more naturally say  that : 
  \emph{the 
  $\infty$-category of smooth $\infty$-groupoids is the simplicial localization
  $L_W [\mathrm{CartSp}^{\mathrm{op}}, \mathrm{sSet}]$
  at the stalkwise weak equivalences} (\cite{survey}).
\end{remark}
When regarded as simplicial presheaves on smooth test spaces, smooth $\infty$-groupoids have a canonical construction from 
$L_\infty$-algebroids by what is a parameterized 
version of the classical \emph{Sullivan construction}
in rational homotopy theory: the original construction 
\cite{Sullivan} sends 
-- in our $\infty$-Lie theoretic language -- an
$L_\infty$-algebroid $\mathfrak{a}$ 
to the simplicial set
\[
  \exp(\mathfrak{a})(*)
   :
[k] \mapsto
\mathrm{Hom}_{\mathrm{cdgAlg}_{\mathbb{R}}}(\mathrm{CE}(\mathfrak{a}),
\Omega^{\bullet}(\Delta^{k}))
  \,,
\]
whose $k$-cells are the flat $\mathfrak{a}$-valued differential
forms on the $k$-simplex (recall definition \ref{definition.a-valued-differential-form}).
It was noticed in \cite{Hinich, Getzler} 
(for the special case of $L_\infty$-algebras) that
this construction deserves to be understood as forming
the discrete $\infty$-groupoid that underlies the Lie integration of $\mathfrak{a}$.
In \cite{Henriques} the object $\exp(\mathfrak{a})$, 
still for the case that $\mathfrak{a} = b \mathfrak{g}$
comes from an $L_\infty$-algebra, is 
observed to be naturally equipped with a
Banach manifold structure. Moreover, a detailed discussion is
given showing that the \emph{truncations} $\tau_n \exp(\mathfrak{a})$
(the decategorification of the $\infty$-groupoid to an $n$-groupoid)
corresponds to the Lie integration to an $n$-group. For
instance for $\mathfrak{a} = b \mathfrak{g}$ coming from an
ordinary Lie algebra, $\tau_1 \exp(b \mathfrak{g})$ is
$\mathbf{B}G$: the one-object groupoid corresponding to the 
classical simply connected Lie group integrating $\mathfrak{g}$.
A detailed discussion of the smooth structure of
$\tau_1\exp(\mathfrak{a})$ for the case that $\mathfrak{a}$
is a Lie 1-algebroid was given in \cite{CrainicFernandes}.
There it is found that a certain cohomological obstruction
has to vanish in order that this is a genuine Lie groupoid
coming from a simplicial smooth manifold.
In \cite{TsengZhu} it was pointed out that however for
Lie 1-algebroids $\mathfrak{a}$ the 2-truncation
$\tau_2 \exp(\mathfrak{a})$ is always a simplicial manifold.

In \cite{FSSI} it was observed that 
without any assumption on $\mathfrak{a}$ and the 
truncation degree,
the construction always naturally -- and usefully -- extends
to smooth structure as encoded by presheaves on 
Cartesian test-spaces, simply by declaring 
the $U$-parameterized families
of $k$-cells in $\exp(\mathfrak{a})$ to be given by
$U$-parameterized families of flat $\mathfrak{a}$-valued 
connections:\begin{definition}
  \label{LieIntegration}
  For $\mathfrak{a}$ an $L_\infty$-algebroid, the functor
  $$
    \exp(\mathfrak{a}) 
      : 
    \mathrm{CartSp}^{\mathrm{op}}
      \to 
    \mathrm{sSet}
  $$
  to the category of simplicial sets is defined by setting, for
  $U \in \mathrm{CartSp}$ and $k \in \mathbb{N}$,
  $$
    \exp(\mathfrak{a})
    :
    (U, [k])
    \mapsto
    \left\{
      \xymatrix{
        \Omega^\bullet_{\mathrm{vert}, \mathrm{si}}(U \times \Delta^k)
        \ar@{<-}[r]^<<<<<<{A_{\mathrm{vert}}}
        &
        \mathrm{CE}(\mathfrak{a})
      }    
    \right\}
    \,,
  $$
  where $\Delta^k$ is the standard realization of the $k$-simplex 
  as a smooth manifold with boundary and corners, 
  and where
  $\Omega^\bullet_{\mathrm{vert}, \mathrm{si}}(U \times \Delta^k)$ is  
  the dg-algebra of \emph{vertical} differential forms on
  $U \times \Delta^k\to U$, that have \emph{sitting instants}
  towards the boundary faces of the simplex 
  (see \cite{FSSI} for details).
  \\
  We say that this \emph{simplicial presheaf}
  presents the \emph{universal Lie integration} of $\mathfrak{a}$.
 This can be understood as saying that the
Lie integration of $\mathfrak{a}$ always exists as a
\emph{diffeological} $\infty$-groupoid \cite{BaezHoffnung}.
\end{definition}

\begin{remark}
The simplicial presheaf $\exp(\mathfrak{a})$ can naturally be thought of as the presheaf of $U$-points of the simplicial set $[k] \mapsto
\mathrm{Hom}_{\mathrm{cdgAlg}_{\mathbb{R}}}(\mathrm{CE}(\mathfrak{a}),
\Omega^{\bullet}(\Delta^{k}))$ described above. Indeed, the 
dg-algebra of vertical differential forms on $U \times \Delta^k$ is naturally isomorphic to $\mathrm{CE}(U\times \mathfrak{T}\Delta^k)$. Also note that this is in turn isomorphic to the (completed) tensor product $\mathrm{CE}(U)\hat{\otimes}\mathrm{CE}(\mathfrak{T}\Delta^k)=C^\infty(U)\hat{\otimes}
\Omega^\bullet(\Delta^k)$.
\end{remark}
Indeed, the simplicial presheaf given in 
definition \ref{LieIntegration}
is a Lie $\infty$-groupoid in the sense of 
definition \ref{SmoothInfinityGroupoid}.
\begin{proposition}
For $\mathfrak{a}$ an $L_\infty$-algebroid, the simplicial
presheaf $\exp(\mathfrak{a})$ is a Lie $\infty$-groupoid
(is objectwise a Kan complex).
\end{proposition}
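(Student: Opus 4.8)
The plan is to verify the Kan condition directly: for each fixed Cartesian space $U$ and each horn inclusion $\Lambda^k_i \hookrightarrow \Delta^k$, I must show that every map $\Lambda^k_i \to \exp(\mathfrak{a})(U)$ extends along the inclusion to a $k$-simplex. First I would unwind what such a horn is in concrete terms. A $k$-simplex of $\exp(\mathfrak{a})(U)$ is a dg-algebra morphism $A \colon \mathrm{CE}(\mathfrak{a}) \to \Omega^\bullet_{\mathrm{vert},\mathrm{si}}(U \times \Delta^k)$, that is, a flat $\mathfrak{a}$-valued vertical form with sitting instants, and the simplicial face maps are induced by pullback along the geometric face inclusions $\Delta^{k-1} \hookrightarrow \Delta^k$. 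Hence a horn $\Lambda^k_i \to \exp(\mathfrak{a})(U)$ is precisely a compatible family of such flat forms on all faces of $\Delta^k$ except the $i$-th, which glue (thanks to the compatibility on common sub-faces) to a single flat form $A_\partial$ on $U \times |\Lambda^k_i|$, where $|\Lambda^k_i| \subset \Delta^k$ denotes the geometric horn, i.e. the union of all but the $i$-th boundary face. Filling the horn then means extending $A_\partial$ to a flat form on all of $U \times \Delta^k$.

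Next I would produce the extension by pullback along a smooth retraction. The geometric horn $|\Lambda^k_i|$ is a retract of $\Delta^k$; I would choose a smooth retraction $r \colon \Delta^k \to |\Lambda^k_i|$, so that $r$ restricts to the identity on each retained face. Pulling back vertical forms along $\mathrm{id}_U \times r$ yields a dg-algebra morphism $r^\ast$ of de Rham complexes, and the composite $r^\ast \circ A_\partial$ is again a morphism out of $\mathrm{CE}(\mathfrak{a})$, hence automatically a flat $\mathfrak{a}$-valued form on $U \times \Delta^k$. The crucial point is that flatness requires no checking: the domain of the map is $\mathrm{CE}(\mathfrak{a})$ rather than $\mathrm{W}(\mathfrak{a})$, and pullback of forms along a smooth map is always a chain map, so any such composite is a legitimate cell. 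Moreover, since $r$ is the identity on each retained face, the restriction of $r^\ast \circ A_\partial$ to that face recovers the original datum, so the resulting $k$-simplex genuinely fills the given horn. Existence of some filler, not uniqueness, is all the Kan condition demands, so it is harmless that $r^\ast \circ A_\partial$ also assigns a definite value to the $i$-th face.

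The hard part will be the geometric and analytic bookkeeping of the \emph{sitting instants}: the extended form must itself lie in $\Omega^\bullet_{\mathrm{vert},\mathrm{si}}(U \times \Delta^k)$, i.e. be constant in the transverse collar direction near every boundary face, including the new $i$-th face and the corners of $\Delta^k$. This forces the retraction $r$ to be chosen compatibly with the collar structure of the sitting instants, mapping a collar neighborhood of each retained face to that face through its collar coordinate, so that the sitting-instants property is preserved under $r^\ast$. Constructing such an $r$ smoothly across the corners of the simplex is exactly the technical content isolated in the sitting-instants formalism of \cite{FSSI}, and is the algebroid, parameterized analog of the Kan-ness of the simplicial set of flat $L_\infty$-algebra valued forms established in \cite{Hinich, Getzler, Henriques}. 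Once such a collar-compatible retraction is in hand, the previous paragraph supplies the filler; and since none of this construction touches the parameter $U$, the argument is uniform in $U$ and yields the Kan property objectwise.
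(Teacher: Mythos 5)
Your overall strategy is the same as the paper's: fill a horn by pulling the glued datum $A_\partial$ back along a retraction onto the geometric horn, observing that flatness is automatic (everything is a dg-algebra map out of $\mathrm{CE}(\mathfrak{a})$, not $\mathrm{W}(\mathfrak{a})$) and that being the identity on the retained faces guarantees the filler restricts to the given data. The gap is in the step you defer as ``the hard part'': a \emph{smooth} retraction $r \colon \Delta^k \to |\Lambda^k_i|$ that is the identity on the retained faces does not exist for $k \geq 2$, collar-compatible or otherwise, so the map your second paragraph pulls back along is not there to be constructed. Concretely, take $p$ in the relative interior of a ridge $F_1 \cap F_2$ of two retained faces and corner coordinates $(z,x,y) \in \mathbb{R}^{k-2} \times [0,\infty)^2$ in which $F_1 = \{x=0\}$, $F_2 = \{y=0\}$, so that the horn is locally $\{xy = 0\}$. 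If $r = (r_z, r_x, r_y)$ is $C^1$, fixes $F_1 \cup F_2$ pointwise, and has image in the horn, then $r_x(z,x,0) = x$ and $r_x(z,0,y) = 0$ force $r_x(0,t,t) = t + o(t)$, and symmetrically $r_y(0,t,t) = t + o(t)$; hence $r_x r_y > 0$ for small $t > 0$, contradicting $r_x r_y \equiv 0$. So what you describe as ``exactly the technical content isolated in \cite{FSSI}'' is in fact an impossibility, and no choice of collars circumvents it.

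The paper's proof (and Proposition 4.2.10 of \cite{FSSI}) runs the dependence in the opposite direction. One pulls back along the \emph{standard merely continuous} retraction $\Delta^k \to \Lambda^k_i$, which is smooth only away from a bad locus, and that locus maps into lower-dimensional faces (ridges and vertices) of the horn. It is the sitting instants of the \emph{forms}, not any property of the retraction, that rescue the construction: near those faces the horn datum is constant in the transverse directions, so its pullback along the non-smooth retraction is nevertheless a smooth form with sitting instants, and this pullback is the filler. In other words, sitting instants are not a constraint you must engineer a smooth retraction to respect; they are precisely what compensates for the unavoidable non-smoothness of any retraction onto a horn. With this inversion the rest of your proposal --- the identification of horns with glued flat forms, automatic flatness, the restriction argument on retained faces, and uniformity in $U$ --- goes through verbatim and reproduces the paper's proof.
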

\begin{proof}
 Since the differential forms in the above
definition are required to have sitting instants,  
they can be smoothly pulled back along the
standard continuous retract
projections $\Delta^n \to \Lambda^n_i$
of the $n$-horns, because these are smooth
away from the boundary. This provides horn fillers in the
standard way.
See also the proof of Proposition  4.2.10 in \cite{FSSI}.
\end{proof}
\begin{remark}
While it can be useful in specific computations to know
that $\exp(\mathfrak{a})$ is degreewise a smooth manifold, 
if indeed it is, no
general concept in smooth higher geometry requires this
assumption. On the other hand, one can show 
\cite{survey}, that every smooth $\infty$-groupoid
is equivalent to a simplicial presheaf that is degreewise
a disjoint union of smooth manifolds, even to one
that is degreewise a disjoint union of Cartesian spaces. 
\end{remark}
\begin{remark}
  A category with weak equivalences, such as
  that 
  of smooth $\infty$-groupoids, 
  is canonically equipped with a \emph{derived hom-functor},
  which to smooth $\infty$-groupoids $X$ and $\exp(\mathfrak{a})$
  assigns an $\infty$-groupoid
  $\mathbf{RHom}(X, \exp(\mathfrak{a}))$.
  One finds that the objects of this $\infty$-groupoid are
  {\v C}ech cocycles for
  \emph{principal $\infty$-bundles} $P \to X$ that are modeled on 
  $\mathfrak{a}$ in higher analogy of how an ordinary 
  smooth principal bundle is ``modeled on'' the Lie algebra
  of its structure group. The 1-morphisms in $\mathbf{RHom}(X,    
\exp(\mathfrak{a}))$ are the gauge transformations of these principal $\infty$-bundles, and so on 
  \cite{NSS, survey}.
\end{remark}
Note that in the definition of $\exp(\mathfrak{a})$ only the Chevalley-Eilenberg algebra of $\mathfrak{a}$ is relevant. The Weil algebra $\mathrm{W}(\mathfrak{a})$ is then introduced in order to describe a differential refinement of $ \exp(\mathfrak{a})$.
\begin{definition}
  For $\mathfrak{a}$ an $L_\infty$-algebroid 
  write $\exp(\mathfrak{a})_{\mathrm{diff}}$ for the 
  simplicial presheaf given by
  $$
    \exp(\mathfrak{a})_{\mathrm{diff}}
    : 
    (U, [k]) \mapsto
    \left\{
       \raisebox{20pt}{
      \xymatrix{
         \Omega^\bullet_{\mathrm{vert,si}}(U \times \Delta^k)_{\mathrm{vert}}
         \ar@{<-}[r]^<<<<{A_\mathrm{vert}} & 
         \mathrm{CE}(\mathfrak{a})
         \\
         \Omega^\bullet(U \times \Delta^k)\ar[u]
         \ar@{<-}[r]^<<<<<<{(A,F_A)} & 
         \mathrm{W}(\mathfrak{a})\ar[u]         
      }
      }
    \right\} 
    \,,
  $$
  where on the right we have the set of horizontal
  dg-algebra homomorphims that make the square commute, 
  as indicated.
\end{definition}
Notice that by definition \ref{definition.a-valued-differential-form} 
the bottom horizontal morphisms on the right are
$\mathfrak{a}$-valued differential forms on 
$U \times \Delta^k$.
\begin{proposition}
  The canonical projection morphism 
  $$
    \exp(\mathfrak{a})_{\mathrm{diff}} \to \exp(\mathfrak{a})
  $$
  to the Lie integration from definition \ref{LieIntegration}
  is an equivalence of smooth $\infty$-groupoids.
\end{proposition}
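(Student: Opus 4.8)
The plan is to show that the projection is objectwise a trivial (acyclic) Kan fibration of simplicial sets. Since a trivial Kan fibration is in particular an objectwise --- a fortiori stalkwise --- weak homotopy equivalence of Kan complexes, this yields exactly the equivalence of smooth $\infty$-groupoids demanded by definition \ref{SmoothInfinityGroupoid}. Concretely, I would fix a test space $U \in \mathrm{CartSp}$ and verify that the map $\exp(\mathfrak{a})_{\mathrm{diff}}(U) \to \exp(\mathfrak{a})(U)$ has the right lifting property against every boundary inclusion $\partial\Delta^n \hookrightarrow \Delta^n$ with $n \geq 0$. Unwound, such a lifting problem consists of the following data: an $n$-cell $A_{\mathrm{vert}} : \mathrm{CE}(\mathfrak{a}) \to \Omega^\bullet_{\mathrm{vert,si}}(U\times\Delta^n)$ of the base (a flat vertical $\mathfrak{a}$-form on $U\times\Delta^n$), together with a differential refinement $(A,F_A) : \mathrm{W}(\mathfrak{a}) \to \Omega^\bullet(U\times\Delta^n)$ already specified over $U\times\partial\Delta^n$ and compatible there with $A_{\mathrm{vert}}$; the task is to extend this refinement over all of $U\times\Delta^n$.

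The key structural input is the freeness of the Weil algebra. By proposition \ref{WeilAlgebraExplicit} the underlying graded algebra of $\mathrm{W}(\mathfrak{a})$ is free on the generators $\{x^a\}$ together with their shifts $\{\mathbf{d}x^a\}$, and by remark \ref{ImagesOfdAnddWUnderForms} a dg-algebra homomorphism $(A,F_A)$ out of $\mathrm{W}(\mathfrak{a})$ is the same datum as an \emph{unconstrained} choice of forms $A^a := A(x^a)$, the curvature components $F_A^a = d_{\mathrm{dR}}A^a - A(d_{\mathrm{CE}(\mathfrak{a})}x^a)$ being then forced. In particular, in contrast to a morphism out of $\mathrm{CE}(\mathfrak{a})$, there is no flatness (integrability) constraint to satisfy. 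Commutativity of the defining square forces the vertical part of each $A^a$ to coincide with the form assigned by $A_{\mathrm{vert}}$, since the right vertical map $i^* : \mathrm{W}(\mathfrak{a}) \to \mathrm{CE}(\mathfrak{a})$ annihilates the shifted generators (so the vertical part of each $F_A^a$ is automatically forced to vanish, consistently with $A_{\mathrm{vert}}$ being flat).

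Hence the lifting problem reduces to a purely form-theoretic extension problem: extend each differential form $A^a$ on $U\times\Delta^n$ given that (i) its vertical part is prescribed everywhere by $A_{\mathrm{vert}}$ and (ii) its total value is prescribed over $U\times\partial\Delta^n$ by the boundary refinement, the two prescriptions agreeing on the overlap. I would carry this out by splitting $A^a$ into its vertical and non-vertical components; the vertical component is already globally determined, so only the non-vertical components must be extended from $U\times\partial\Delta^n$ into the interior. Because $\Delta^n$ is contractible and carries smooth collars on its faces, and because all forms in sight are required to have sitting instants, such an extension exists by a collar-and-partition-of-unity argument; moreover the set of admissible extensions is an affine space over the vector space of non-vertical forms vanishing on the boundary, hence nonempty and contractible. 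Nonemptiness supplies the required diagonal lift.

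I expect the main obstacle to be the careful bookkeeping of the sitting-instants conditions during this extension: one must produce a smooth extension that still has sitting instants towards all boundary faces and restricts correctly both vertically and over $U\times\partial\Delta^n$. This is precisely the analytic content handled in the proof of Proposition 4.2.10 of \cite{FSSI}, to which I would defer for the remaining technical details. Conceptually, the whole statement is a manifestation of the fact that mapping out of the free object $\mathrm{W}(\mathfrak{a})$ imposes no integrability condition, so that the differential refinement $\exp(\mathfrak{a})_{\mathrm{diff}}$ is a homotopically invisible decoration of the underlying Lie integration $\exp(\mathfrak{a})$.
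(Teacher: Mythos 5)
Your proof is correct and is essentially the argument the paper relies on: the paper itself states this proposition without proof, deferring to \cite{FSSI}, where the projection is shown to be an objectwise trivial Kan fibration exactly as you do --- right lifting property against the boundary inclusions $\partial\Delta^n \hookrightarrow \Delta^n$, reduced by freeness of the Weil algebra to an unconstrained extension problem for differential forms with sitting instants. Nothing in your write-up departs from that route, and your deferral of the analytic extension step to Proposition 4.2.10 of \cite{FSSI} matches the paper's own citation practice.
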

\begin{remark}
It is via this property that the Weil algebra serves in 
$\infty$-Chern-Weil theory as part of a 
\emph{resolution} of $\exp(\mathfrak{a})$ from which
curvature characteristics are built. 
\end{remark}
Let now $\langle-\rangle$ be an invariant polynomial on $\mathfrak{a}$ (
definition \ref{InvariantPolynomial}). 
Evaluating $\langle-\rangle$ on the curvature 
$F_A$ of an $\mathfrak{a}$-connection $A$ gives a closed differential form $\langle F_A\rangle$ on $U\times \Delta^k$,
according to definition \ref{definition.chern-simons-form}. 
This differential form, however, will in general not descend to the base space $U$. This naturally leads to considering the following definition, which picks the universal subobject of 
$\exp(\mathfrak{a})_{\mathrm{diff}}$ that makes
all \emph{curvature characteristic forms} 
$\langle F_A\rangle$  descend to base space.
\begin{definition}
  Define the simplicial presheaf
  $$
    \exp(\mathfrak{a})_{\mathrm{conn}}
    :
    (U,[k])
    \mapsto
    \left\{
      \raisebox{37pt}{
      \xymatrix{
        \Omega^\bullet_{\mathrm{vert,si}}(U \times \Delta^k)
        \ar@{<-}[r]^<<<<<{A_{\mathrm{vert}}}
        &
        \mathrm{CE}(\mathfrak{a})
        \\
        \Omega^\bullet(U \times \Delta^k)
        \ar[u]
        \ar@{<-}[r]^<<<<<<{A}
        &
        \ar[u]
        \mathrm{W}(\mathfrak{a})
        \\
        \Omega^\bullet(U)_{\mathrm{closed}}
        \ar[u]
        \ar@{<-}[r]^<<<<<<{\langle F_A \rangle}
        &
        \mathrm{inv}(\mathfrak{a})
        \ar[u]
      }
      }
    \right\}
    \,,
  $$
  where on the right we have the set of $\mathfrak{a}$-valued forms
  $A$ on $U \times \Delta^k$ that make the diagram commute
  as indicated.
\end{definition}
This has the following interpretation
(first considered in \cite{SSSI}):
\begin{enumerate}
  \item 
    The commutativity of the top diagram say that 
    the $\mathfrak{a}$-valued differential form 
    $A$ on $U \times \Delta^k$ 
    is \emph{vertically flat} with respect to the
    trivial simplex bundle $U \times \Delta \to U$.
    This is an analogue of the verticality condition
    for an ordinary \emph{Ehresmann connection}.
  \item
   The commutativity of the lower diagram says that 
   all curvature forms $F_A$ \emph{transform covariantly} along the
   simplices in such a way as to make all the curvature characteristic
   form $\langle F_A\rangle$ descent to base space. This 
   is the analogue of the horizonatlity condition on an 
   ordinary Ehresmann connection.
\end{enumerate}
One finds therefore that 
for $X$ a smooth manifold, an element in $\mathbf{RHom}(X, \exp(\mathfrak{a})_{\mathrm{conn}})$
is
\begin{enumerate}
  \item a choice of good open cover 
  $\{U_i \to X\}$;
  \item
    on each patch $U_i$ differential form 
    data $A_i$ with 
    values in $\mathfrak{a}$;
  \item
    on each double intersection a choice of 1-parameter
    gauge transformation between the corresponding differential form data;
  \item
    on each triple intersection a choice of 2-parameter gauge-of-gauge
    transformation;
   \item and so on.
\end{enumerate}
Such a \emph{differential {\v C}ech cocycle} is essentially what 
defines an $\infty$-connection on a principal $\infty$-bundle. 
This is discussed
in detail in \cite{FSSI, survey}. 
\begin{remark}
Since for the discussion of the simple case of 
AKSZ $\sigma$-models we can 
assume that the underlying $\infty$-bundle is 
\emph{trivial},  only a single 0-simplex
$$
 \raisebox{37pt}{
      \xymatrix{
        C^\infty(\Sigma)
        \ar@{<-}[r]^{A_{\mathrm{vert}}}
        &
        \mathrm{CE}(\mathfrak{a})
        \\
        \Omega^\bullet(\Sigma)
        \ar[u]
        \ar@{<-}[r]^{A}
        &
        \ar[u]
        \mathrm{W}(\mathfrak{a})
        \\
        \Omega^\bullet(\Sigma)_{\mathrm{closed}}
        \ar[u]
        \ar@{<-}[r]^{\langle F_A \rangle}
        &
        \mathrm{inv}(\mathfrak{a})
        \ar[u]
      }
      }
$$
is involved in the description of the AKSZ $\sigma$-model.
\end{remark}
With these concepts in hand, we can 
now explain how the datum  of a triple 
$(\mu,\mathrm{cs},\langle-\rangle)$ consisting 
of a Chern-Simons element witnessing the transgression between an invariant polynomial and a cocyle 
(definition \ref{TransgressionAndCSElements})
serves to present a 
\emph{differential characteristic class}
in terms of a morphism of smooth $\infty$-groupoids. 
To see this, recall that the line delooping 
$L_\infty$-algebroid $b^{n+1}\mathbb{R}$ 
of the Lie line $(n+1)$-algebra is defined by the fact that 
   $\mathrm{CE}(b^{n+1}\mathbb{R})$ is generated over 
   $\mathbb{R}$ from a single generator in degree $n+1$
   with vanishing differential. As an immediate consequence, an $(n+1)$-cocylce $\mu$ on an $L_\infty$-algebroid $\mathfrak{a}$ is the same thing as a dg-algebra morphism
  $$
   \mu : \mathrm{CE}(b^{n+1}\mathbb{R}) \to \mathrm{CE}(\mathfrak{a}).
  $$   
 Similarly, a triple  $(\mu,\mathrm{cs},\langle-\rangle)$ is naturally identified with a commutative diagram of dg-algebras:
   $$
    \xymatrix{
      \mathrm{CE}(\mathfrak{a})
      \ar@{<-}[r]^{\mu}
      &
      \mathrm{CE}(b^{n+1} \mathbb{R})
      \\
      \mathrm{W}(\mathfrak{a})
      \ar@{<-}[r]^{\mathrm{cs}}
      \ar[u]
      &
      \mathrm{W}(b^{n+1} \mathbb{R})
      \ar[u]
      \\
      \mathrm{inv}(\mathfrak{a})
      \ar@{<-}[r]^{\langle - \rangle}
      \ar[u]
      &
      \mathrm{inv}(b^{n+1} \mathbb{R})
      \ar[u]
    }
    \,.
  $$
 Pasting this diagram to the
  one above defining $\exp(\mathfrak{a})_{\mathrm{conn}}$ 
  leads to the following observation, discussed in \cite{FSSI}.
\begin{proposition}
  \label{CWPresentation}
  Every triple $(\mu, \mathrm{cs}, \langle -\rangle)$ 
  induces a morphism
  $$
    \exp(\mathfrak{a})_{\mathrm{conn}}
    \to
    \exp(b^{n+1}\mathbb{R})_{\mathrm{conn}}\,.
  $$
  This morphism is in fact the presentation of the 
$\infty$-Chern-Weil homomorphism induced by the 
invariant polynomial $\langle -\rangle$.
\end{proposition}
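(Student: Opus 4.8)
The plan is to build the morphism objectwise, over each test space $U \in \mathrm{CartSp}$ and each simplicial degree $[k]$, by precomposing the defining cocycle data with the fixed triple, then to check that this assignment lands in $\exp(b^{n+1}\mathbb{R})_{\mathrm{conn}}$, is natural in $U$, and is simplicial in $[k]$; the concluding identification with the abstract $\infty$-Chern-Weil homomorphism is then a matter of matching conventions with \cite{FSSI, survey}. First I would recall that an element of $\exp(\mathfrak{a})_{\mathrm{conn}}(U)_k$ is exactly a commuting diagram of dg-algebra homomorphisms whose left column $\mathrm{inv}(\mathfrak{a}) \to \mathrm{W}(\mathfrak{a}) \to \mathrm{CE}(\mathfrak{a})$ maps, via $\langle F_A\rangle$, $A$ and $A_{\mathrm{vert}}$, to the column $\Omega^\bullet(U)_{\mathrm{closed}} \to \Omega^\bullet(U\times\Delta^k) \to \Omega^\bullet_{\mathrm{vert,si}}(U\times\Delta^k)$. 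The triple $(\mu,\mathrm{cs},\langle-\rangle)$ supplies, as displayed just above the statement, a second commuting diagram whose horizontal arrows $\mu$, $\mathrm{cs}$, $\langle-\rangle$ are dg-algebra homomorphisms from the $b^{n+1}\mathbb{R}$-column to the $\mathfrak{a}$-column. I would then define the image of the given cocycle to be the triple of composites $A_{\mathrm{vert}}\circ\mu$, $A\circ\mathrm{cs}$, and $\langle F_A\rangle\circ\langle-\rangle$.

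Second, I would verify that this composite datum lies in $\exp(b^{n+1}\mathbb{R})_{\mathrm{conn}}(U)_k$. Pasting the two commuting squares along the shared $\mathfrak{a}$-column yields a commuting rectangle; discarding the middle ($\mathfrak{a}$) column leaves precisely the commuting diagram required of an element of $\exp(b^{n+1}\mathbb{R})_{\mathrm{conn}}(U)_k$. Concretely, the \emph{vertical-flatness} condition is inherited because the top square still commutes, so $A_{\mathrm{vert}}\circ\mu$ is the vertical restriction of $A\circ\mathrm{cs}$; and the bottom composite $\langle F_A\rangle\circ\langle-\rangle$ is by construction the curvature characteristic form of the pushed-forward connection with respect to the generating invariant polynomial of $b^{n+1}\mathbb{R}$. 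Since the original datum already lies in $\exp(\mathfrak{a})_{\mathrm{conn}}$, this form descends to $\Omega^\bullet(U)_{\mathrm{closed}}$, so the \emph{descent} condition holds as well.

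Third, naturality in $U$ and compatibility with faces and degeneracies in $[k]$ are immediate: the structure maps of both $\exp(-)_{\mathrm{conn}}$ presheaves act by postcomposition with the pullbacks along $U'\to U$ and along $\Delta^{k'}\to\Delta^k$, through the functoriality of the de Rham and vertical-form functors on the target column, whereas our construction is precomposition with the fixed maps $\mu,\mathrm{cs},\langle-\rangle$ on the source column. Pre- and postcomposition commute by associativity, so every naturality and simpliciality square commutes automatically, and we obtain the asserted morphism of simplicial presheaves.

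Finally, the identification with the $\infty$-Chern-Weil homomorphism is not a further computation but an appeal to the abstract construction of \cite{FSSI, survey}, where $\exp(-)_{\mathrm{conn}}$ is set up exactly as the resolution presenting differential characteristic maps and the homomorphism attached to $\langle-\rangle$ is by definition presented by such a transgression triple. The only point demanding genuine care — the one thing beyond diagram-chasing — is confirming that the descent and verticality conditions are truly preserved under the pasting, in particular that the image's curvature characteristic forms still descend to base space; but this is precisely what the pasting of the two commuting squares guarantees, so the substantive work reduces to checking that the conventions of \cite{FSSI} match the explicit pasting given here.
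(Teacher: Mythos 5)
Your proposal is correct and follows exactly the route the paper itself takes: the paper's entire ``proof'' is the remark that pasting the commutative dg-algebra diagram encoding the triple $(\mu,\mathrm{cs},\langle-\rangle)$ onto the defining diagram of $\exp(\mathfrak{a})_{\mathrm{conn}}$ yields elements of $\exp(b^{n+1}\mathbb{R})_{\mathrm{conn}}$, with the identification as the $\infty$-Chern-Weil homomorphism deferred to \cite{FSSI}. Your writeup is the same pasting-by-precomposition argument, with the naturality in $U$ and simpliciality in $[k]$ (which the paper leaves implicit) spelled out correctly.
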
 
This means that for
$$
  (\nabla : X \to \exp(\mathfrak{a})_{\mathrm{conn}})
  \in 
  \mathbf{RHom}(X, \exp(\mathfrak{a})_{\mathrm{conn}})
$$
an $\mathfrak{a}$-valued $\infty$-connection, the composite
$$
  X \xrightarrow{\nabla}
  \exp(\mathfrak{a})_{\mathrm{conn}}
    \xrightarrow{\exp(\mathrm{cs})}
    \exp(b^{n+1}\mathbb{R})_{\mathrm{conn}}
$$
is a representative of the curvature $(n+1)$-form on $X$
that the $\infty$-Chern-Weil homomorphism induced by
$\langle-\rangle$ assigns to $\nabla$.

We can now formalize the observation mentioned in the
introduction, that the Chern-Weil homomorphism
plays the role of action functional for $\sigma$-model
quantum field theories. Indeed, in view of the above constructions, the AKSZ $\sigma$-model
Lagrangian corresponds to forming the following
commutative diagram:
\[
\xymatrix{
      C^\infty(\Sigma)
            & 
      \mathrm{CE}(\mathfrak{P})
        \ar[l]_{\phi_{\mathrm{vert}}} 
        & 
        \mathrm{CE}(b^{n+1}\mathbb{R})\ar[l]_{\frac{1}{n}\pi}
        & 
        : 
       \frac{1}{n}\pi(\phi_{\mathrm{vert}})
       \\
       \Omega^\bullet(\Sigma)
        \ar[u]
        &
        \mathrm{W}(\mathfrak{P})
        \ar[u]\ar[l]_{\phi} 
        & 
        \mathrm{W}(b^{n+1}\mathbb{R})
        \ar[l]_{\mathrm{cs}}\ar[u]
        &
        : \mathrm{cs}(\phi)
       \\
      \Omega^\bullet(\Sigma)_{\mathrm{closed}}
       \ar[u]
       &
       \mathrm{inv}(\mathfrak{P})
        \ar[l]_{F_\phi}
        \ar[u] 
        & 
       \mathrm{inv}(b^{n+1}\mathbb{R})
        \ar[u]\ar[l]_{\omega}
       &
       : \omega(F_\phi)
}
\,.
\]
In other words, under the identification of the AKSZ action functional
as an instance of the $\infty$-Chern-Weil homomorphism
we indeed translate concepts as shown in the table in the introduction:
the symplectic form is the invariant polynomial
that induces the Chern-Weil homomorphism, the Hamiltonian
is the cocycle that it transgresses to, and the Chern-Simons
element that witnesses the transgression is the Lagrangian.
\par
This suggests the following general definition of a higher Chern-Simons field theory.
\begin{definition}
  \label{LagrangianFromCW}
  Let $\Sigma$ be an $(n+1)$-dimensional compact smooth manifold
  and
  $\mathfrak{a}$ an $L_\infty$-algebroid equipped with an 
  invariant polynomial $\langle - \rangle$.
  Let $\mathrm{cs}$ be a Chern-Simons element witnessing 
  its transgression to a cocycle $\mu$. Then way may say
  \begin{itemize}
    \item
      A morphism $\phi : \Sigma \to  \exp(\mathfrak{a})_{\mathrm{conn}}$
      is a \emph{field configuration} on $\Sigma$ with values in 
      $\mathfrak{a}$.
    \item
      The assignment
      $$
        \phi \mapsto \mathrm{cs}(\phi) \in \Omega^{n+1}(\Sigma)
      $$
      is the \emph{Lagrangian} defined by $\mathrm{cs}$;
    \item
      The assignment
      $$
        \phi \mapsto \int_\Sigma\mathrm{cs}(\phi) \in \mathbb{R}      
      $$
      is the \emph{action functional} defined by $\mathrm{cs}$.
  \end{itemize}
  The collection of these notions we call the
  \emph{higher Chern-Simons field theory} defined by 
  $\mathrm{cs}$.
\end{definition}

\section{Generalizations}
\label{section.generalizations}

The identification of the AKSZ action functionals
as a special case of the general abstract 
Chern-Weil homomorphism allows to tranfer various
insights about the general theory and about its 
other special cases to AKSZ theory. 
We close this article by briefly indicating a few.
More detailed discussion shall be given elsewhere.

\subsection{Symplectic $n$-groupoids and nontrivial topology}
   
   The smooth $\infty$-groupoid $\exp(\mathfrak{P})$
   that integrates a symplectic Lie $n$-algebroid 
   (as discussed in \ref{section.CS_theory})
   is the ``higher universal'' Lie integration of $\mathfrak{P}$.
   One finds (see \cite{survey} for the discussion
   in the case of smooth $\infty$-groupoids, following
   the discussion of Banach-$\infty$-groupoids in 
   \cite{Henriques}) that its geometric realization
   in topological spaces is \emph{$\infty$-connected}
   (hence: contractible) in analogy to how the classical 
   universal Lie integration of a Lie algebra is 
   1-connected (hence: simply connected).

   As we have shown here, this is sufficient for the 
   traditional description of AKSZ $\sigma$-models. 
   But more generally one will be interested in the 
   universal integration to the smooth $n$-groupoid
   $P := \tau_n \exp(\mathfrak{P})$ obtained as the
   \emph{$n$-truncation} (where one retains only
   equivalence classes of $n$-cells in $\exp(\mathfrak{P})$).

   For instance for $\mathfrak{P} = b\mathfrak{g}$ the delooping
   of a semisimple Lie algebra $\mathfrak{g}$
   (the case of the Courant Lie 2-algebroid over the point)
   we have that $\tau_1 \exp(b \mathfrak{g}) \simeq \mathbf{B}G$
   is the one-object Lie groupoid obtained from the simply-connected
   Lie group that integrates $\mathfrak{g}$, while the untruncated
   $\exp(b\mathfrak{g})$ is some higher extension of $\mathbf{B}G$
   by higher abelian $\infty$-groups.
   
   This truncation, however, also affects the coefficient object
   of the $\infty$-Chern-Weil homomorphism (discussed in detail in
   \cite{FSSI}): notably the untruncated AKSZ action functional
   $$
     \exp(\mathrm{cs}_\omega) 
      : 
     \exp(\mathfrak{P})_{\mathrm{conn}}
     \to 
     \exp(b^{n+1}\mathbb{R})_{\mathrm{conn}}
     \simeq
     \mathbf{B}^{n+1}\mathbb{R}_{\mathrm{conn}}
   $$
   descends to the truncation only up to a quotient by the
   group $K \subset \mathbb{R}$ of \emph{periods} of 
   the hamiltonian cocycle $\pi$:
   $$
     \exp(\mathrm{cs}_\omega) 
      : 
     \tau_n\exp(\mathfrak{P})_{\mathrm{conn}}
     \to 
     \mathbf{B}^{n+1}\mathbb{R}/K_{\mathrm{conn}}
     \,.
   $$
   Typically we have $K \simeq \mathbb{Z}$ and hence
   $\mathbb{R}/K \simeq U(1)$. This way
   the properly truncated AKSZ action functional 
   indeed takes values in circle $n+1$-bundles. This 
   becomes 
   a further \emph{quantization condition} for the 
   field configurations, discussed in the next item.

\subsection{$\infty$-Connections on nontrivial $\mathfrak{a}$-principal
    $\infty$-bundles: AKSZ instantons}
    
   As we have shown in this article, the fields of AKSZ $\sigma$-model
   theories may be understood as 
   $\infty$-connections on \emph{trivial}
   $\exp(\mathfrak{P})$-principal $\infty$-bundles
   or, by the previous paragraph, on trivial 
   $\tau_n \exp(\mathfrak{P})$-principal $\infty$-bundles.

   The general theory of \cite{FSSI, survey}
   provides also a description of $\infty$-connections
   on non-trivial such $\infty$-bundles and there is
   no reason to restrict attention to the
   trivial ones. Alternatively, as propagated by Kotov and Strobl,
one can use the notion of (possibly non-trivial) Q-bundles, in which connections turn out to be
sections in the category of graded manifolds \cite{KoSt07}. The relation of our $\infty$-bundles to their Q-bundles
generalizes the one of an ordinary principal bundle to its associated Atiyah algebroid.
   Such fields with non-trivial underlying
   principal $\infty$-bundles
   correspond to what in the analog situation of 
   Yang-Mills theory are called
   \emph{instanton} field configurations. 
   These are of importance in a comprehensive discussion of the
   quantum theory.
   
   This issue plays only a minor role in low
   dimensions. For instance the reason that the fields
   of Chern-Simons theory are and can be taken to be
   connections on trivial $G$-principal bundles 
   on $\Sigma$ is that
   for simply connected Lie groups $G$ the
   classifying space $B G$ has its first non-trivial
   homotopy group in degree 4, so that all $G$-principal
   bundles on a 3-dimensional $\Sigma$ are necessarily
   trivializable. 

   But by the same argument there are inevitably 
   AKSZ instanton contributions
   from fields that are connections on non-trivial $\infty$-bundles
   as soon as we pass to 4-dimensional AKSZ models and beyond.   

\subsection{Twisted AKSZ-structures and higher extensions of 
     symplectic $L_\infty$-algebroids}
     
   For any differential characteristic class
   $$
     \hat {\mathbf{c}} : A_{\mathrm{conn}} \to 
      \mathbf{B}^{n+1} \mathbb{R}/K_{\mathrm{conn}}
   $$
   such as obtained from Lie integration of 
   a Chern-Simons element:
   $$
     \exp(\mathrm{cs}) : \tau_n \exp(\mathfrak{a})_{\mathrm{conn}}
       \to 
      \mathbf{B}^{n+1}\mathbb{R}/K
   $$
   it is of interst to study the \emph{homotopy fibers}
   that this induces on cocycle $\infty$-groupoids over
   a given base space $X$. In \cite{SSSIII, FSSI} the $\infty$-groupoid
   $\hat {\mathbf{c}}\mathrm{Struc}(X)$ of
   \emph{twisted $\hat {\mathbf{c}}$}-structures is
   introduced as the homotopy pullback
   $$
     \xymatrix{
       \hat {\mathbf{c}}\mathrm{Struc}_{\mathrm{tw}}(X)
       \ar[r]^{\mathrm{tw}}
       \ar[d]
       &
       H^n_{\mathrm{diff}}(X,K)
       \ar[d]
       \\
       \mathbf{RHom}(X,A_{\mathrm{conn}})
       \ar[r]^<<<<<{\hat {\mathbf{c}}}
       &
       \mathbf{RHom}(X,\mathbf{B}^{n+1} K_{\mathrm{conn}})
     }
     \,,
   $$
   where the right vertical morphisms -- unique up to equivalence --
   picks one cocycle representative in each cohomology class.
   The morphism $\mathrm{tw}$ sends a given twisted differential
   cocycle to its \emph{twist}. The fibers over the trivial
   twist are precisely the $\widehat A$-principal $\infty$-bundles
   with connection, where $\widehat A$ is the extension of $A$ 
   classified by $\mathbf{c}$, 
   wich is characterized by the fact that
   it sits in a fiber sequence
   $$
     \mathbf{B}^{n} K \to \widehat A \to A 
     \,.
   $$
   
   In \cite{FSSI} this is discussed in detail for the case that 
   $\mathbf{c} = \frac{1}{2}\mathbf{p}_1$ is a smooth 
   refinement of the first fractional Pontryagin class 
   and for the case $\mathbf{c} = \frac{1}{6}\mathbf{p}_2$
   of the fractional second Pontryagin class. 
   In these cases the extension $\hat A$ is the delooping,
   respectively, of the smooth \emph{string 2-group}
   and of the smooth \emph{fivebrane 6-group}.
   The corresponding \emph{twisted differential string-structures}
   and \emph{twisted differential fivebrane structures}
   are shown there (following \cite{SSSIII}) 
   to encode the Green-Schwarz mechanism in 
   heterotic string theory and dual heterotic string theory,
   respectively.
   
   By our discussion here, all these constructions have their
   natural analogs for AKSZ $\sigma$-models, too. In particular,
   every symplectic Lie $n$-algebroid $(\mathfrak{P}, \omega)$
   with Hamiltonian $\pi \in \mathrm{CE}(\mathfrak{P})$
   has a canonical (``string-like'') extension
   $$
     b^{n} \mathbb{R} \to \widehat {\mathfrak{P}} \to \mathfrak{P}
   $$
   classified by $\pi$ regarded  
   as an $L_\infty$-cocycle $\pi : \mathfrak{P} \to b^{n+1}\mathbb{R}$.
   
   This extension is easy to describe: the Chevalley-Eilenberg algebra
   $\mathrm{CE}(\widehat {\mathfrak{P}})$ is that of $\mathfrak{P}$
   with a single generator $b$ in degree $n$ adjoined, and the
   differential extended to this generator by the formula
   $$
     d_{\mathrm{CE}(\widehat {\mathfrak{P}})} : b \mapsto
     \pi 
     \,.
   $$
   
   A \emph{twisted differential $\exp(\mathfrak{P})$-structure}
   is accordingly an $\exp(\mathfrak{P})$-$\infty$-connection $\phi$
   (an AKSZ $\sigma$-model field) equipped with an
   equivalence of its curvature characteristic $\omega(\phi)$
   to a presribed ``twisting class''. When the twisting class is
   trivial, then these are equivalently 
   $\exp(\widehat {\mathfrak{P}})$-principal $\infty$-connections.
   
   Notice that these considerations are relevant only over
   a base space of dimension at least $(n+2)$. Compare this
   again to the familiar case of Chern-Simons theory: 
   in Chern-Simons theory itself the $G$-principal bundle
   may always be taken to be trivial, since the base space
   $\Sigma$ is taken to be 3-dimensional. But all the 
   constructions of Chern-Simons theory make sense also
   over arbitrary $X$. Generally, the Chern-Weil homomorphism
   assigns a  \emph{Chern-Simons circle 3-bundle} to every
   suitabe $G$-principal bundle on $X$, and its volume holonomy
   is the corresponding Chern-Simons functional for this situation.
   Analogously one can consider AKSZ theory over higher dimensional
   base spaces.

\subsection{Relation to higher dimensional supergravity}

    AKSZ theory is not the only class of field theories where
    it was noticed that field configurations 
    have an 
    interpretation in terms of morphisms of dg-algebras. 
    Almost two decades earlier 
    originates the observation that 
    (higher dimensional) \emph{supergravity} 
    (see for instance \cite{DeligneMorgan} for standard itroductions)
    has a rather beautiful description in such terms.
    A detailed exposition of this dg-algebraic 
    approach to supergravity is in 
    the textbook \cite{CDF}.

    The authors there speak of 
    ``free differential algebras''  (``FDAs'') where they would mean 
    what in the mathematical literature are called
    ``quasi-free dg-algebras'' or ``semi-free dg-algebras'' 
    -- those whose underlying graded
    algebra is free, as in our definition \ref{LInfinityAlgebroids} 
    of Chevalley-Eilenberg algebras. 
    Moreover, what we here observe are morphisms
    out of the Weil algebra 
    (definition \ref{definition.a-valued-differential-form}) 
    they call ``soft group manifolds''.
    But apart from these purely terminological differences
    one finds that
    the observations that drive the developments there are
    precisely the following, here reformulated in our 
    $\infty$-Lie theoretic language (see section 4 of \cite{survey}):

    First of all it is a standard fact that in ``first order formulation''
    the field configurations of gravity in $d+1$ dimensions 
    are naturally presented by
    $\mathfrak{iso}(d,1)$-vaued connection forms, where
    $\mathfrak{iso}(d,1) = \mathbb{R}^{d+1} \ltimes \mathfrak{so}(d,1)$
    is the Poincar{\'e}-Lie algebra.
    This perspective is inevitable in the context of 
    supergravity, where the first-order formulation is 
    required by the coupling to fermions.
    There is an evident super-algebra generalization
    of the Poincar{\'e} Lie algebra to the 
    \emph{super-Poincar{\'e}}-Lie algebra
    $\mathfrak{siso}(d,1)$ and a field configuration
    of supergravity is an $\mathfrak{siso}(d,1)$-valued connection.
    Or rather, such a connection encodes the \emph{graviton} field and 
    its superpartner field, the \emph{gravitino} field,
    but not yet the higher bosonic form fields generically
    present in higher supergravity theories. 
    The first central observation of \cite{DAuriaFre} is
    (in our words) that these
    naturally appear after  passage to 
    \emph{$L_\infty$-extensions}
    of $\mathfrak{siso}(d,1)$.
    
    To put this statement into our context,
    notice that there is a fairly straightforward
    super-geometric extension of general abstract 
    higher Chern-Weil theory 
    in which 
    $L_\infty$-algebroids are generalized to \emph{super}
    $L_\infty$-algebroids as Lie algebras are generalized to 
    super Lie algebras (see section 3.5 of \cite{survey}). 
    
    It turns out that super-Lie algebra
    cohomology of $\mathfrak{siso}(d,1)$  
    contains a certain finite number of \emph{exceptional}
    cocycles $\mu : \mathfrak{siso}(d,1) \to b^{n+1}\mathbb{R}$
    for certain values of $d$, whose existence is naturally
    understood from the existence of the four normed division
    algebras (\cite{Huerta}). 
    In particular, for $d = 10$
    there is a 4-cocycle $\mu_4 : \mathfrak{siso}(10,1) \to 
    b^4 \mathbb{R}$. The Lie 3-algebra extension that it classifies
    $$
      b^3 \mathbb{R} \to \mathfrak{sugra}_{11} \to 
      \mathfrak{siso}(10,1)
    $$
    has been called the \emph{supergravity Lie 3-algebra}
    in \cite{SSSI}. It turns out (\cite{DAuriaFre}) 
    that this carries, in turn,
    a 7-cocycle 
    $\mu_7 : \mathfrak{sugra}_{11} \to b^7 \mathbb{R}$.
    The original observation of \cite{DAuriaFre}
    was (not in these words, though, but easily translated
    into $\infty$-Lie theoretic terms
    using  our discussion here) that 
    11-dimensional supergravity, including its higher form
    field degrees of freedom, is naturally understood
    as a theory of $\infty$-connections with values in the
    corresponding \emph{supergravity Lie 6-algebra}
    $$
      b^7 \mathbb{R} \to \widehat{\mathfrak{sugra}}_{11}
        \to \mathfrak{sugra}_{11}
    $$
    and that the construction of its action functional
    is governed by the higher Lie theory of this object.
    
    While 11-dimensional supergravity is not 
    entirely a higher Chern-Simons-theory, it 
    crucially does involve Chern-Simons terms in its
    action functionals. Indeed, one can see that 
    one of the characterizing conditons on 
    a supergravity action functional -- the 
    one called the \emph{cosmo-cocycle condition}
    in \cite{CDF} -- is the defining condition
    on a Chern-Simons element in our def. 
   \ref{TransgressionAndCSElements}, but solved only up
   to first order in the curvature terms.
   It may be noteworthy in this context that there are
   various speculations 
   (see \cite{CSSupergravity} for discussion and review)
   that higher dimensional supergravity should be thought
   of as a limiting theory of a genuine higher Chern-Simons theory.
   
   This shows that there is a close conceptual relation between
   AKSZ $\sigma$-models, higher Chern-Simons theories
   and higher dimensional supergravity,
   mediated by abstract higher Chern-Weil theory.
   The various implications of this observation shall
   be explored elsewhere.
  
\addcontentsline{toc}{section}{References}

\end{document}